\newtheorem{theorem}{Theorem}[section]
\newtheorem{proposition}[theorem]{Proposition}
\newtheorem{corollary}[theorem]{Corollary}
\newtheorem{lemma}[theorem]{Lemma}
\newtheorem{definition}[theorem]{Definition}
\newtheorem{remark}[theorem]{Remark}
\newtheorem{notation}[theorem]{Notation}
\newtheorem{example}[theorem]{Example}
\newenvironment{proof}{\mbox{\bf Proof.}}{\mbox{$\Box$}\bigskip}
\begin{document}

\title{The Division Problem of Chances\footnote{We gratefully acknowledge support from the Swiss National Science
Foundation (SNFS) through Project 100018$-$212311.}}
 \author{ Rasoul Ramezanian 
\thanks{  I would like to express my  gratitude to Bettina Klaus for her excellent guidance, feedback, and our insightful discussions, which have significantly enhanced the quality of this work. Additionally, I extend my appreciation to Arunava Sen for his invaluable and constructive feedback. }
}
\date{}
\affil{Faculty of Business and Economics

University of Lausanne

Internef, 1015 Lausanne,
Switzerland

 \href{mailto: rasoul.ramezanian@unil.ch}{rasoul.ramezanian@unil.ch}} 
\maketitle

\vspace{-0.9cm}
 \begin{abstract} 
	\noindent  In frequently repeated matching scenarios, individuals may require diversification in their choices. Therefore, when faced with a set of potential outcomes, each individual may have  an ideal \hspace{0.5cm} lottery over outcomes  that represents their preferred option. This suggests that, as people seek variety, their favorite choice is not a particular outcome, but rather a lottery over them as their peak for their preferences.  
  We explore matching problems in situations where agents' preferences are represented by ideal lotteries. Our focus lies in addressing the challenge of dividing chances in matching, where agents express their preferences over a set of objects through ideal lotteries that reflect their single-peaked preferences.
   We discuss properties such as  strategy proofness, replacement monotonicity, (Pareto) efficiency, in-betweenness, non-bossiness, envy-freeness, and anonymity in the context of dividing chances, and   propose a class of mechanisms called URC mechanisms that satisfy these properties. Subsequently, we prove that  if a mechanism for dividing chances is strategy proof, (Pareto) efficient, replacement monotonic, in-between,  non-bossy,  and anonymous (or envy free), then it is equivalent in terms of welfare to a URC mechanism.
\end{abstract}



\noindent \textbf{MSC2020 Subject Classification}: 9110, 91B32, 91B03, 91B08, 91B68

\noindent  \textbf{Keywords}: Matching Theory, Fair Division, Single peaked Preferences,  Mechanism Design. 

\newpage

\section{Introduction}\label{sec:intro}

  Suppose you have a collection of music records on your smartphone, and there is an application on your device that allows you to set how frequently you want certain songs to be played \textbf{repeatedly}. For instance, you could set your ideal lottery as $c=(a:0.2,b:0.5,c:0.1,d:0.2)$, which indicates that you want to listen to  record $a$, 20 percent of the time,  record $b$, 50 percent of the time,  record $c$, 10 percent of the time, and  record, $d$, 20 percent of the time repeatedly. The application repeatedly executes your ideal lottery and selects a record to play for you. Your favorite option is not a specific music record (if it were, then your collection would only have one record), but rather a lottery over~them.

In ``experimental studies"~(e.g., Dwenger, K\"ubler, and Weizs\"acker, 2008 \cite{Flippingacoin}, Agranov and Ortoleva, 2017 \cite{Stochasticchoice}, Kassas, Palma, and Porter, 2022 \cite{Happytotake}, Blavatskyy, Panchenko, and
Ortmann, 2022 \cite{Howcommen}), it has been observed that participants often prefer lotteries between outcomes to choosing a certain outcome. These observations contradict the principle of stochastic dominance  in expected utility theory, which guarantees preference preservation in probabilistic mixtures\footnote{The stochastic dominance property says that for four outcomes $a,b,c$ and $d$, if $a$ is strictly preferred to $b$, and $c$ is preferred to $d$ then every probabilistic mixture between $a$ and $c$ is strictly preferred to the same probabilistic
mixture of  $b$ and $d$.}. For example, considering two  choices $a$: `Trip to Japan' and $b$: `Trip to Europe', and three options: i) receiving $a$ with probability 1, ii) receiving $b$ with probability 1, and iii) receiving $a$ with probability 0.4 and $b$ with probability 0.6, utility theory   predicts that people never prefer option iii) to options i) and ii). However,  people may prefer randomization due to their `desire for variety' (e.g., Simonson, 1990 \cite{PurchaseQuantity}, Mohan, Sivakumaran, and Sharma,  2012 \cite{Storeenvironment},  Jung and Yoon, 2012 \cite{Jung2012WhyDS}), `relief from possible regrets' (e.g., Gilovich and Medvec, 1995 \cite{ExperienceofRegret}, Zeelenberg and Pieters, 2007 \cite{RegretRegulation}), `aversion to responsibility' (e.g., Dwenger, K\"ubler, and Weizs\"acker, 
2008 \cite{Flippingacoin}), and `psychic costs' (e.g., Anderson, 2003 \cite{PsychologyofDoing}).
 
In ``matching theory" -- encompassing house allocation (e.g., Abdulkadiro{\u{g}}lu and S\"onmez, 1999~\cite{abdulkadirouglu1999house}), course assignment (e.g., Cechl{\'a}rov{\'a}, Katar{\'\i}na, Klaus, and Manlove, 2018 \cite{cechlarova2018pareto}), marriage problem (e.g., Gale and Shapley, 1962 \cite{gale1962college},  Klaus, 
2009 \cite{klaus2009fair}, Pourpouneh, Ramezanian, and Sen, 2020 \cite{pourpouneh2020marriage}), and school choice (e.g., Abdulkadiro{\u{g}}lu and S\"onmez, 2003 \cite{abdulkadirouglu2003school}, Klaus and
Klijn, 2021 \cite{klaus2021minimal},  Kojima and  \"Unver, 2014 \cite{kojima2014boston}, Pathak and S\"onmez, 2013 \cite{pathak2013school}) --   people's preferences for a set of objects are usually represented using a linear order relation and it is assumed that matching is a `one-shot' process. However, in some `frequently repeated' matching scenarios, individuals' preferences over a set of objects cannot be adequately represented by a linear order relation. Instead, their most preferred outcome may involve lotteries over the objects, which we refer to as the `ideal lottery'.

In this paper, we address the problem of ``fairly dividing chances" in repeated matching scenarios where agents express their preferences using ideal lotteries over objects that reflect their single-peaked preferences. 
To illustrate the problem of dividing chances of matching (with ideal lotteries as a  representation of single peaked preferences), consider the following scenario as a typical example: 

\noindent An IT company with four  workers labeled as $1, 2, 3$, and $4$, receives four different tasks denoted as $T_1, T_2, T_3$, and $T_4$ which arrive at the company on a recurring basis every hour.  The manager should assign tasks to the workers. Each  worker has a `desire for variety' and enjoys learning new things. Therefore, each  worker has an `ideal lottery' over tasks:
\begin{itemize}
   \centering
    \item[] $c_{1}=(T_1:0.6,T_2:0.2,T_3:0,T_4:0.2)$,
    \item[] $c_{2}=(T_1:0.5,T_2:0.1,T_3:0.3,T_4:0.1)$,
    \item[] $c_{3}=(T_1:0.6,T_2:0.1,T_3:0.1,T_4:0.2)$, 
    \item[] $c_{4}=(T_1:0.1,T_2:0.1,T_3:0.1,T_4:0.7)$.
  \end{itemize}
A worker $i$ prefers a lottery $p$ to a lottery $q$ whenever $p$ is \textit{closer}\footnote{We formally define the notion of closeness in Section~\ref{sec:Model}.} to his ideal lottery than $q$ is.  
  Workers ask the manager to fairly divide chances of doing tasks between them,  as closely as possible  to their ideal lotteries. The manager faces a fair division problem when it comes to allocating the chances.  Some tasks (like $T_1$) are in excess demand with workers and some (like $T_3$) are in excess supply (lack of demand). The manager requires  an allocation mechanism that divides the chances of doing tasks such that both \textit{agent feasibility} (for each worker, the sum of probabilities of allocating tasks to him is equal to 1) and \textit{object feasibility} (for each task, the sum of probabilities of doing the task with workers is equal to 1) are satisfied. 

\noindent Another way to explain the problem of dividing chances is by considering a scenario featuring three servers and three clients. In this scenario, each server possesses a capacity of 1 unit of computational resources, and every client demands a total of 1 unit of combined computational power from all three servers. Clients submit their ideal requests to the servers, resulting in some servers becoming overloaded while others remain underutilized. The core issue lies in efficiently balancing the computational load, requiring clients to move some of their requests from overloaded servers to underutilized ones.  The division problem revolves around determining how much load each client should shift from overloaded servers to underutilized ones.

   We can approach the ``interpretation'' of ideal  lotteries in two distinct ways: firstly, by referring to the frequency with which workers express interest in performing tasks, and secondly, by referring to the proportion of each task that captures their interest. For instance, consider   ideal lottery $c_{1}=(T_1:0.6,T_2:0.2,T_3:0,T_4:0.2)$. This ideal lottery can be understood in two different ways: firstly,  worker 1  would like to do task $T_1$ with probability $0.6$ in each hour, and secondly,  worker 1 has a preference for contributing a fraction of $0.6$ to task $T_1$. In this paper, we adopt the first interpretation. It is important to note that all the results remain applicable to the second interpretation as well. The second interpretation is adopted for task sharing between pairs of agents in Nicol\`o et al. 2023 \cite{StableSharing}.
   
``Fair division" of a resource has been a subject of substantial research across various fields, including  mathematics, economics,  and computer science (e.g., Robertson and Webb, 1998 \cite{RobBook}, Young, 1994 \cite{YoungbBook}, Moulin, 2004 \cite{MoulinBook}). This is due to the significant impact that fair resource allocation has on the design of multi agent systems, spanning various domains including   scheduling, and the allocation of computational resources such as CPU, memory, and bandwidth (Singh and Chana, 2016 \cite{CloudCom}).
 Dividing an infinitely divisible commodity among some agents with single peaked preferences is already studied (e.g., Sprumont, 1999 \cite{SPRUMONT1991Econometrica}, Thomson, 1995 \cite{Thomsonalloc1995}).  A homogeneous commodity is to be divided among a given set of agents $N$, and each agent $i$ has a single peaked preference with a peak at $x_i\in [0,1]$. If $\sum_{i\in N}x_i>1$, we say we are in excess demand, and if $\sum_{i\in N}x_i\leq 1$, we are in excess supply. One of the mechanisms to divide a commodity among the agents is the `uniform rule'.  The \textbf{Uniform Rule (UR)} (Sprumont, 1999 \cite{SPRUMONT1991Econometrica}, Thomson, 1995 \cite{Thomsonalloc1995}),  allocates to each agent his preferred share, provided it falls within common bounds which are the same for everyone and have been selected to meet the feasibility requirement.  An algorithm (see  Moulin 2004 \cite{MoulinBook}) to compute uniform rule works as follows: Start with $t=1$ and divide it into equal shares $t/n$. Let $N'$ be the set of all agents whose peaks are on the “wrong”
side of $t/n$ (if we are in excess demand, this means those agents with $x_i \leq t/n$; if we are in excess supply, it means those with $x_i \geq t/n$). Distribute the share $x_i$ to each agent  $i\in N'$ and update $t$ by subtracting $\sum_{i\in N'}x_i$ to adjust the remainder of the commodity. Update $N$ to  $N\setminus N'$ and repeat the same computation with
the remaining resources until all agents are on the same side of $t/n$,  and then give the share $t/n$ to them.
The Uniform rule is \textit{strategy proof}, meaning that agents have no incentive to misreport their preferences. It is also (Pareto) efficient, ensuring that no agent can be made better off without making some other agents worse off. This mechanism is also \textit{anonymous}, as the outcome does not depend on the agents' identities. Additionally, it is \textit{envy free}, meaning that no agent would prefer another agent's allocation over their own.
It has been demonstrated that the uniform rule is characterized by (Pareto) efficiency, strategy proofness, and anonymity (or envy freeness) properties (see Sprumont,
1999 \cite{SPRUMONT1991Econometrica}), thus it is the only mechanism that satisfies these properties.
 Division allocation and reallocation of one commodity have been studied (Thomson, 1995 \cite{Thomsonalloc1995}, Klaus,
2001 \cite{BKlausUalloc}), as well as the allocation of multiple commodities (Anno and Sasaki,
2013 \cite{Hidekazuc2013}, Morimoto,  Serizawa, and Ching, 2013 \cite{Morimoto2013c}). When dealing with multiple commodities, the uniform rule is applied independently to each commodity, and it is called \textbf{Generalized Uniform Rule~(GUR)}.

   Allocation mechanisms for divisible multiple commodities regarding multidimensional single peaked preferences are investigated in Anno and Sasaki, 2013 \cite{Hidekazuc2013}, and the generalized uniform rule (GUR) is characterized for two agents. It has been proven that, for two agents, the generalized uniform rule (GUR) is the only mechanism that satisfies `symmetry' (agents with similar preferences are treated equally), `weak peak onliness' (if an agent changes his preference without altering its peak, his outcome remains unchanged in the mechanism), and `weak second best efficiency among all strategy proof mechanisms'; this property states that the mechanism is a maximal element in the pre-ordered set of strategy proof mechanisms based on the domination relation  (see Anno and Sasaki, 2013 \cite{Hidekazuc2013}).   Also, in the context of multiple commodities, for continuous, strictly convex, and separable multidimensional single peaked  preferences, it has been proven that a mechanism satisfies strategy proofness, unanimity (if the total sum of peak values for each commodity matches the supply of that commodity, then each agent should be assigned according to their respective peaks), weak symmetry, and non-bossiness (when an agent’s preferences change, if his allocation remains the
same, then the whole outcome of the mechanism   should remain the same) if and only if it is GUR (see Morimoto,  Serizawa, and Ching,~ 2013~\cite{Morimoto2013c}).

However, we cannot apply the general uniform rule (GUR) for dividing chances. 
   In the division of multiple commodities, preferences are represented using separable multidimensional single peaked preferences, whereas, in the \textbf{division of chances}, lotteries are treated as whole entities with the sum of their probabilities equal to 1, which makes it \textbf{impossible} to utilize separable multidimensional single peaked preferences,  as the chances of receiving objects are not separable. Also because of \textit{agent feasibility} for dividing chances, we cannot apply GUR (see Example~\ref{exm:divisionchanc}). 
Since the General Uniform Rule (GUR) is not suitable for addressing the problem of dividing chances in one-to-one matching scenarios, we propose   a new class of mechanisms, referred to as Uniform Rule for Dividing Chances mechanisms (\textbf{URC mechanisms}).
In informal terms,  a URC mechanism operates as follows:

\noindent We have a set of agents, denoted as $N$, and a set of objects, denoted as $A$, where the number of agents is equal to the number of objects. Each agent possesses an ideal lottery over the objects. Think of each object $a \in A$ as a tank containing one liter of a specific colored liquid named $a$. Each agent $i\in N$ is equipped with a tap, $tap_{ia}$, for each tank $a\in A$. Additionally, each agent $i$ has a bucket $buk_i$ with a capacity of one liter, and the ideal lottery of agent $i$, denoted by $c_i=(c_{ia})_{a\in A}$, signifies the desired combination of liquids that agent $i$ wishes to have in their bucket.
The mechanism functions in two phases. In phase 1, all agents simultaneously open their taps on the tanks. Each agent $i$ closes tap $tap_{ia}$ as soon as the amount of liquid $a$ in their bucket reaches their ideal for object $a$. Phase 1 concludes when one of two conditions is met: either all taps have been closed, or, if any tap on a tank remains open, it signifies that the liquid in that tank has been fully exhausted.

 \begin{figure}[h!]
 \centering
 \includegraphics[scale=0.8]{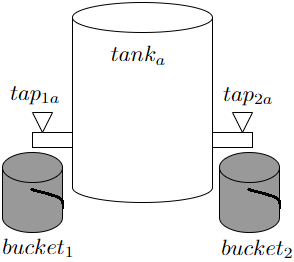} 
  \caption{URC Mechanisms}
 
  \label{fig:tank}
\end{figure}

\noindent At the end of phase 1, it is possible that some tanks are not entirely empty and some buckets are not yet completely filled. So, in phase 2, we organize all the buckets in a line, and separately, we arrange all the tanks in another line.
In this phase, we initiate the process by opening the tap of the first tank, allowing the liquid to flow into the first bucket. If the tank becomes empty, we proceed to the next tank, and if the bucket becomes full, we move on to the next bucket. By the end of phase 2, each bucket contains exactly one liter of liquid. The chance of allocating object $a$ to agent $i$ is proportional to the amount of liquid of object $a$ present in the corresponding bucket $buk_i$.  
Different ordering of tanks and buckets defines different URC mechanisms. We establish that all URC mechanisms are equivalent in terms of welfare, meaning that for each agent, the distance between  agent's allocation and their ideal lottery remains the same regardless of the ordering chosen for the tanks and buckets.

We discuss ``properties" of URC mechanisms in our context. We show that URC mechanisms are \textit{strategy proof} and \textit{efficient}.
We introduce the concept of \textit{replacement monotonicity} for dividing chances, and show that URC mechanisms satisfy this property. This concept was originally introduced in Barber\`a, Jackson, and
Neme 1997 \cite{Jackson1997}, to characterize sequential allotment mechanisms for dividing a commodity. Roughly, the replacement monotonicity property can be understood as follows: If an agent alters their preferences in a manner that frees up excess demand resources (objects) without improving their own welfare, it will not negatively impact the welfare of other agents.
We introduce the concept of \textit{in-betweenness} for dividing chances, and show that URC mechanisms satisfy this property. In-betweenness is a type of monotonicity property that examines situations where if an agent increases their ideals for objects that the mechanism allocated them more than their ideal, and decreases their ideals for objects that the mechanism allocated them less than their ideal, then this change does not lead to a decrease for objects they already received more and does not lead to an increase for objects they already received less.
We also show that URC mechanisms are anonymous, envy free, and non-bossy.  In  one-shot matching scenarios where preferences are represented with linear order relations, achieving strategy proofness, efficiency, and fairness simultaneously is impossible (e.g., Bogomolnaia and  Moulin, 2001 \cite{BM2001}, Nesterov 2017 \cite{NES2017}, Ramezanian and Feizi, 2022 \cite{Ram2022}).  However,   in repeated matching scenarios with ideal lotteries as preferences, it becomes possible to satisfy these axioms (Proposition~\ref{Theorem:URL-properites}).

``Characterization" theorems help us understand the inner workings of mechanisms  used to allocate resources. By revealing the essential properties that define these mechanisms, we gain insights into how they function and make decisions. 
We characterize the class of URC mechanisms, up to welfare equivalence, based on  key properties:  strategy proofness, (Pareto) efficiency, replacement monotonicity,  non-bossiness, in-betweenness, and anonymity (envy freeness). By `characterization up to welfare equivalence'\footnote{In Section~\ref{sec:Model}, we provide a formal explanation of what we mean by 'characterization up to welfare equivalence'.}, we mean that when a mechanism satisfies all these properties, it is equivalent in terms of welfare (distance between the agents' allocations and their ideal lotteries) to URC~mechanisms.

Investigating ``logical independency" of properties aims to shed light on the relationships between properties and to better understand the trade-offs and interactions within allocation mechanisms.  We explore the independence of properties that characterize (up to welfare equivalence) URC mechanisms by introducing alternative mechanisms that fulfill certain properties while lacking others. 

 The paper is organized as follows: In Section~\ref{sec:Model}, we formally model the  division problem of chances and discuss properties such as (Pareto) efficiency and replacement monotonicity  within this framework.  
In Section~\ref{sec:Mechan}, we introduce a  class of mechanisms, referred to as Uniform Rule for Dividing Chances  (URC mechanisms)   in one-to-one matching scenarios,  and prove that URC mechanisms satisfy strategy proofness, (Pareto) efficiency, replacement monotonicity, non-bossiness, in-betweenness, envy freeness, and anonymity.
In Section~\ref{sec:charact}, we characterize the URC mechanisms  up to welfare equivalency.
In Section~\ref{sec:indaxiom}, we study the logical relationship between properties: strategy proofness, efficiency, replacement monotonicity,  in-betweenness,  non-bossiness,  anonymity and envy freeness.
Section~\ref{sec:conclu} concludes the paper and discusses further work. The proofs of the lemmas, propositions, and theorems are provided in Appendix~\ref{sec:appen}.

\section{Model} \label{sec:Model}

To define single peaked preferences for lotteries, we first need to establish a notion of distance, as the preference is determined by the closeness to the peak. For a finite set of objects, $A$, we define $\Delta(A)$ as the set of all lotteries over $A$. We also define $\mathbf{d}:\Delta(A)\times \Delta(A)\xrightarrow{} \mathrm{R}$ as the metric function  $\mathbf{d}(p,q)=\sum_{a\in A}\lvert p_a-q_a\rvert$, for $p,q\in \Delta(A)$. 
 While it is possible to consider alternative distance metrics,  such as $d(p,q)=(\sum_{a\in A}(\lvert p_a-q_a\rvert)^\rho)^{1/\rho}$, for $\rho\geq 1$ (referred to as $\mathit{l}^\rho$ distances),  in this paper, we specifically focus on the $\mathit{l}^1$ distance, and all of our results  hold regarding this distance. This choice aligns with the utilization of the $\mathit{l}^1$ distance for single peaked preferences within the literature of voting and aggregating mechanisms of \textit{probabilistic distributions} (see Freeman et al.~2021, \cite{TruthAgg}, and Goel~et al.~2019 \cite{VotAgg}). 

The \textbf{Division  Problem of Chances}   is defined as a tuple $(N,A,(c_i)_{i\in N})$ where $N$ is a set of agents, $A$ is a set of objects  with $|A|=|N|=n$, and $(c_i)_{i\in N}$ is a \textit{preference profile} where each agent $i\in N$ has an ideal lottery $c_i$ over objects such that for each $a\in A$,   $c_{ia}\geq 0$ represents the desired chance (fraction) of receiving object $a$ for agent $i$, and $\sum_{a\in A}c_{ia}=1$. We refer to the set of all preference profiles  by $C^n$. 

\noindent Welfare of an agent increases by getting close to its ideal lottery.  An agent $i$ \textbf{prefers a lottery $p$ to a lottery $q$}, denoted by $p \succ_{c_i} q$, whenever $\mathbf{d}(c_i,p)<\mathbf{d}(c_i,q)$. The notations $p \succeq_{c_i} q$ and $p \equiv_{c_i} q$ respectively refer to $\mathbf{d}(c_i, p) \leq \mathbf{d}(c_i, q)$ and $\mathbf{d}(c_i, p) = \mathbf{d}(c_i, q)$. 
Note that since $\mathbf{d}$ is a metric, $\mathbf{d}(c_i,p)=\mathbf{d}(p,c_i)$, and we use both of them interchangeably in the following sections. 


Given a profile of preferences $c$, we say an object $a\in A$ is  \textit{unanimous} whenever $\sum_{i\in N}c_{ia}=1$, in    \textit{excess demand} whenever $\sum_{i\in N}c_{ia}>1$, and    in \textit{excess supply} whenever $\sum_{i\in N}c_{ia}<1$.   We use $ED(c)$ to denote the set of excess demand objects, $ES(c)$ for the set of excess supply objects, and $UN(c)$ for unanimous ones. 
A \textit{random matching} is  bistochastic matrix  $P=[p_{ia}]_{i\in N, a\in A}$,  where for all $i\in N, a\in A$, $p_{ia}\geq 0$, and
 
\begin{itemize}
    \item for each object $a\in A$,  $\sum_{i\in N}p_{ia}=1$ (\textit{object feasibility}) and
    \item for each agent $i\in N$, $\sum_{a\in A}p_{ia}=1$ (\textit{agent feasibility}). 
\end{itemize}
  For a random matching  $P=[p_{ia}]_{i\in N, a\in A}$, for every agent  $i\in N$, $p_i=(p_{ia})_{a\in A}$ is called the \textit{allocation} of agent $i$. And, for every object $a\in A$, $p_a=(p_{ia})_{i\in N}$ is the division of object $a$.
     For every profile $c=(c_i)_{i\in N}$, for every $i\in N$, we let $c_{-i}=(c_1,c_2,...,c_{i-1},c_{i+1},...,c_n)$. Also  for $i\in N$, for $a\in A$, we use notations $c_i=(c_{ia})_{a\in A}$, and  $c_a=(c_{ia})_{i\in N}$.

\noindent Let $l_1$ and $l_2$ be two lotteries over a set of objects $A$. We say a lottery $l$ is \textit{between} $l_1$ and $l_2$, denoted by $ l\in Between(l_1,l_2)$ whenever for every $a\in A$, either $l_1(a)\leq l(a)\leq l_2(a)$ or  $l_2(a)\leq l(a)\leq l_1(a)$.

A mechanism $\mu$ is a mapping from the set of preference profiles to the set of random matchings.
Two mechanism $\mu$ and $\mu'$  are considered \textit{welfare equivalent} when,  for every profile $c=(c_i)_{i\in N}$, for every agent $i\in N$, it holds that $\mu_i(c)\equiv_{c_i} \mu'_i(c)$.  

Let $\Phi_1,\Phi_2,...,\Phi_k$ be some properties for mechanisms and $\mathcal{M}$ be a class of mechanisms.  We say $\Phi_1,\Phi_2,...,\Phi_k$ \textit{characterize $\mathcal{M}$ up to welfare equivalence}\footnote{It's worth noting that if the notion of equivalence is replaced by equality, we obtain the well-known notion of characterization.} whenever 
\begin{itemize}
    \item[\textit{i-}] every $\mu\in \mathcal{M}$ satisfies $\Phi_1,\Phi_2,...,\Phi_k$, and
    \item[\textit{ii-}] for every mechanism $\mu'$, if $\mu'$ satisfies $\Phi_1,\Phi_2,...,\Phi_k$ then $\mu'$ is welfare equivalent to some $\mu\in \mathcal{M}$.
\end{itemize} 
Note that $\Phi_1, \Phi_2, ..., \Phi_k$ are properties of mechanisms and not of welfare equivalence classes. Therefore, it is possible for two mechanisms to be welfare equivalent but not satisfy similar properties~(see Figure~\ref{fig:WEL}).

 A mechanism is strategy proof if it is in the best interest of each participant to truthfully reveal their preferences, as misrepresentation does not improve their welfare.  

\begin{itemize}
    \item[] \textbf{Strategy Proofness}: A mechanism $\mu$ is \textit{strategy proof} whenever   for every  profile $c=(c_i)_{i\in N}$, for every agent $i\in N$, there exists no ideal lottery $c'_i$, such that $\mu_i(c'_i,c_{-i}))\succ_{c_i} \mu_i(c)$. 
\end{itemize}

\subsection{Efficiency}
We discuss the concept of   efficiency and same-sideness in our context, and prove that these two notions  are logically equivalent.  Given a profile of preferences $c=(c_i)_{i\in N}$,  A random matching $P$ is considered (Pareto) efficient if it is impossible to enhance the welfare of certain agents without diminishing the welfare of others.
More formally, a random matching $P$ is  \textbf{lottery dominated} by a random matching $Q$ whenever   for all agent $i\in N$, $q_i\succeq_{c_i} p_i$,
and     \textbf{strictly lottery dominated} by $Q$ whenever it is lottery dominated and also there exist some $j\in N$ such that $q_j\succ_{c_j} p_j$.
A random matching $P$ is (Pareto) \textbf{efficient} if it is not  strictly lottery dominated by any other random matching. For convenience, we use the term `efficiency' instead of `Pareto efficiency'.

A random matching $P$ is \textbf{same-sided} whenever for every object $a\in A$, if $a$ is in \textit{excess demand} then for all $i\in N$, $p_{ia}\leq c_{ia}$, if $a$ is in \textit{excess} supply then for all $i\in N$, $p_{ia}\geq c_{ia}$, and if $a$ is  \textit{unanimous} then for all $i\in N$, $p_{ia}=c_{ia}$.

  The following lemma demonstrates that, given a same-sided random matching $P$ with respect to a profile $c$, the distance between an agent $i$'s allocation ($p_i$) and his ideal lottery ($c_i$) how is computed. Note that the Lemma~\ref{lem:distance-sameside}, works only for same-sided random matchings. 

\begin{lemma}\label{lem:distance-sameside}
    Given a profile of preferences $c$, if $P$ is same-sided then for all $i\in N$,
    \begin{equation*}
      \sum_{a\in ED(c)}(c_{ia}-p_{ia})=\sum_{a\in ES(c)}(p_{ia}-c_{ia}),~ and   
    \end{equation*}

\begin{equation}\label{distance-sameside}
   \mathbf{d}(p_i,c_i)=2\times \sum_{a\in ED(c)}(c_{ia}-p_{ia})= 2\times \sum_{a\in ES(c)}(p_{ia}-c_{ia}). 
\end{equation}

\end{lemma}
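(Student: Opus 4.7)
The plan is to exploit agent feasibility ($\sum_{a\in A} p_{ia} = 1 = \sum_{a\in A} c_{ia}$) together with the same-sidedness hypothesis, which pins down the sign of $p_{ia}-c_{ia}$ object by object. The whole proof reduces to a careful bookkeeping across the three classes $ED(c)$, $ES(c)$, $UN(c)$.

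First, I would establish the first displayed identity. Fix an agent $i\in N$. Since both $p_i$ and $c_i$ are probability distributions over $A$, subtracting gives
\begin{equation*}
\sum_{a\in A}(c_{ia}-p_{ia})=0.
\end{equation*}
Partition $A = ED(c)\cup ES(c)\cup UN(c)$ and split the sum accordingly. The contribution from $UN(c)$ is zero because same-sidedness forces $p_{ia}=c_{ia}$ for every unanimous object. Rearranging the remaining two terms yields
\begin{equation*}
\sum_{a\in ED(c)}(c_{ia}-p_{ia})=\sum_{a\in ES(c)}(p_{ia}-c_{ia}),
\end{equation*}
which is exactly the first assertion.

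For the second identity, I would expand the $\mathit{l}^1$ distance along the same partition:
\begin{equation*}
\mathbf{d}(p_i,c_i)=\sum_{a\in ED(c)}|p_{ia}-c_{ia}|+\sum_{a\in ES(c)}|p_{ia}-c_{ia}|+\sum_{a\in UN(c)}|p_{ia}-c_{ia}|.
\end{equation*}
Same-sidedness removes the absolute values cleanly: on $ED(c)$ we have $p_{ia}\le c_{ia}$, on $ES(c)$ we have $p_{ia}\ge c_{ia}$, and on $UN(c)$ the summand vanishes. Hence
\begin{equation*}
\mathbf{d}(p_i,c_i)=\sum_{a\in ED(c)}(c_{ia}-p_{ia})+\sum_{a\in ES(c)}(p_{ia}-c_{ia}),
\end{equation*}
and substituting the equality from the first step collapses this to twice either of the two equal sums, giving \eqref{distance-sameside}.

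There is no real obstacle here; the argument is essentially combinatorial accounting enabled by same-sidedness. The only subtlety worth flagging is the role of $UN(c)$: without the same-sided requirement that $p_{ia}=c_{ia}$ on unanimous objects, neither the cancellation in the feasibility identity nor the disappearance of the $UN(c)$ terms in the distance expansion would go through, so I would make that step explicit rather than absorb it silently.
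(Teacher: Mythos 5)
Your proof is correct and follows essentially the same route as the paper: use agent feasibility of $p_i$ and $c_i$ to cancel sums over $A$, split across $ED(c)$, $ES(c)$, $UN(c)$ with same-sidedness killing the unanimous terms, and then expand the $\mathit{l}^1$ distance with the absolute values resolved by same-sidedness. Your explicit remark on the role of $UN(c)$ is a point the paper leaves implicit, but the argument is the same.
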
 
\begin{proof} 
    The proof is straightforward as follows: 
    \begin{center}
        $0=1-1=\sum_{a\in A}c_{ia} - \sum_{a\in A}p_{ia}= (\sum_{a\in ED(c)}(c_{ia}-p_{ia}))+ (\sum_{a\in ES(c)}(c_{ia}-p_{ia}))+ (\sum_{a\in UN(c)}(c_{ia}-p_{ia}))$,
    \end{center} thus $\sum_{a\in ED(c)}(c_{ia}-p_{ia})=\sum_{a\in ES(c)}(p_{ia}-c_{ia}) $, and

\begin{center}
    $\mathbf{d}(p_i,c_i)=\sum_{a\in ED(c)}(c_{ia}-p_{ia})+\sum_{a\in ES(c)}(p_{ia}-c_{ia})= 2\times \sum_{a\in ED(c)}(c_{ia}-p_{ia}) $.
\end{center}
  \end{proof}
  
\noindent  In Proposition~\ref{eff-same}, we demonstrate that efficiency and same-sideness are logically equivalent.

\begin{proposition}\label{eff-same}  A random  matching $P$ is efficient if and only if it is same-sided. 
\end{proposition}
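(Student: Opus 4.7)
My plan is to prove the two directions separately. The first will follow cleanly from Lemma~\ref{lem:distance-sameside}, while the second requires an explicit construction of a bistochastic matching that strictly dominates any non-same-sided $P$, via a $4$-cycle swap.

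For \emph{same-sided implies efficient}, I would begin by summing the identity of Lemma~\ref{lem:distance-sameside} over all agents and using column feasibility ($\sum_{i\in N}p_{ia}=1$) to obtain
\[
\sum_{i\in N}\mathbf{d}(p_i,c_i)\;=\;2\sum_{a\in ED(c)}\Big(\sum_{i\in N}c_{ia}-1\Big).
\]
Then, for an arbitrary bistochastic $Q$, interchanging the order of summation and applying the column-wise triangle inequality $\sum_{i\in N}|q_{ia}-c_{ia}|\geq |1-\sum_{i\in N}c_{ia}|$ yields $\sum_{i\in N}\mathbf{d}(q_i,c_i)\geq \sum_{a\in A}|1-\sum_{i\in N}c_{ia}|$. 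A short accounting check (total excess demand equals total excess supply, and $UN(c)$ contributes nothing) shows this lower bound equals $2\sum_{a\in ED(c)}(\sum_{i\in N}c_{ia}-1)$, so $\sum_{i\in N}\mathbf{d}(q_i,c_i)\geq \sum_{i\in N}\mathbf{d}(p_i,c_i)$. If $Q$ lottery dominates $P$, the reverse aggregate inequality also holds, forcing equality agent by agent, which rules out strict lottery dominance and proves efficiency.

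For \emph{efficient implies same-sided}, I would argue by contraposition. Assume $P$ is not same-sided. Without loss of generality there exists $(i^*,a^*)$ with $p_{i^*a^*}>c_{i^*a^*}$ and $a^*\in ED(c)\cup UN(c)$; the symmetric case of a shortage on an excess-supply or unanimous column is handled by the same construction with the roles of the two agents swapped. The column-sum constraint at $a^*$ produces an agent $j$ with $p_{ja^*}<c_{ja^*}$, and the row-sum constraint at $j$ (namely $\sum_{a\in A}(p_{ja}-c_{ja})=0$) produces $b'\neq a^*$ with $p_{jb'}>c_{jb'}$. I define $Q$ from $P$ by subtracting a small $\varepsilon>0$ from $p_{i^*a^*}$ and $p_{jb'}$ and adding $\varepsilon$ to $p_{ja^*}$ and $p_{i^*b'}$, leaving all other entries intact. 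Row and column sums are preserved, and $Q$ remains non-negative for $\varepsilon$ small enough because the two decremented entries are strictly positive (they lie in excess).

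The welfare comparison is the crux of the argument. Agent $j$ strictly improves by $2\varepsilon$ because both $|p_{ja^*}-c_{ja^*}|$ and $|p_{jb'}-c_{jb'}|$ decrease by $\varepsilon$. For agent $i^*$, the change is $-\varepsilon$ at $a^*$ and at most $+\varepsilon$ at $b'$ regardless of whether $p_{i^*b'}-c_{i^*b'}$ is positive, zero, or negative, so the total is $\leq 0$. No other agent is affected, so $Q$ weakly lottery dominates $P$ with strict improvement at $j$, contradicting efficiency. I expect the main subtlety to be exactly this agent-$i^*$ bookkeeping: verifying that the first-order change in $|p_{i^*b'}-c_{i^*b'}|$ never exceeds $+\varepsilon$ in any of the three sign configurations, which is what allows the $4$-cycle to be improving even though it need not be fully sign-alternating.
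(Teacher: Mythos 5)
Your proof is correct, but it is organized differently from the paper's. The paper proves both directions through an auxiliary lemma (Lemma~\ref{lemregualr}) asserting that any non-same-sided matching is strictly lottery dominated by a \emph{same-sided} one, obtained by iterating a $4$-cycle swap until same-sidedness is reached; the direction ``same-sided $\Rightarrow$ efficient'' is then proved by supposing a strict dominator, upgrading it to a same-sided dominator via that lemma, and summing the per-agent distance inequalities over all agents to contradict object feasibility. You replace this second direction by a global lower-bound argument: the columnwise triangle inequality $\sum_{i\in N}\lvert q_{ia}-c_{ia}\rvert\geq\lvert 1-\sum_{i\in N}c_{ia}\rvert$ shows every bistochastic matrix has aggregate distance at least $\sum_{a\in A}\lvert 1-\sum_{i\in N}c_{ia}\rvert$, and Lemma~\ref{lem:distance-sameside} shows same-sided matchings attain it, so lottery dominance forces agentwise equality and strict dominance is impossible. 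This is cleaner: it avoids the iteration and the somewhat informal termination argument of Lemma~\ref{lemregualr}, and it yields the extra fact that same-sided matchings are exactly the minimizers of total $\ell^1$ distance. For ``efficient $\Rightarrow$ same-sided'' your single $4$-cycle swap is essentially the one step of the paper's lemma, and your case analysis (a surplus entry on an $ED(c)\cup UN(c)$ column, or symmetrically a shortage on an $ES(c)\cup UN(c)$ column) does exhaust the negation of same-sidedness; the paper's lemma is stronger only in that it produces a same-sided dominator, which you no longer need. One small point to make explicit when writing it up: $\varepsilon$ must be chosen no larger than the strictly positive gaps $p_{i^*a^*}-c_{i^*a^*}$, $c_{ja^*}-p_{ja^*}$ and $p_{jb'}-c_{jb'}$ (not merely small enough for non-negativity), so that the three ``toward-the-peak'' moves each reduce the corresponding absolute deviation by exactly $\varepsilon$; with that quantifier in place the bookkeeping for agent $i^*$ is exactly as you describe.
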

\begin{proof}
   See Appendix~\ref{app:Model}. 
\end{proof}


\subsection{Replacement Monotonicity} 
The concept of replacement monotonicity can be classified as a kind of symmetry property. It was originally introduced in Barber\`a, Jackson, and Neme, 1997 \cite{Jackson1997}, to characterize sequential allotment mechanisms for dividing a commodity. It is also used in Masso and
Neme, 2007 \cite{Bribe2}, to characterize bribe proof mechanisms. 
The concept of replacement monotonicity is relevant when an individual's preferences change, potentially affecting their allocation. In such cases, there should be an offsetting change in the allocations of the other individuals. Replacement monotonicity demands that the allocations of the remaining individuals do not move in opposite directions. More formally,
For dividing a  commodity between agents in $N$, suppose that each agent $i\in N$ has a single peaked preference with a peak at $x_i$. A mechanism $\phi$ is replacement monotonic whenever 
\begin{center}
   if $\phi_i(x_i,x_{-i})\leq \phi_i(x'_i,x_{-i})$ then for all $j\neq i$, $\phi_j(x_i,x_{-i})\geq \phi_j(x'_i,x_{-i})$.
\end{center}
Replacement monotonicity asserts that if the allocation of some agent $i$  does not decrease due to the change from  $x_i$ to $x'_i$,  then
the allocations of all other agents do not increase.

Drawing inspiration from the concept of replacement monotonicity  presented in Barber\`a, Jackson, and Neme, 1997 \cite{Jackson1997}, we revise the property of replacement monotonicity  for welfare of agents in dividing chances: When an individual alters their preferences, which can potentially impact their welfare, there should be a compensatory adjustment in the welfare of the remaining individuals. 
Let $\mu$ be a  mechanism, and  $c=(c_i)_{i\in N}$ a profile of preferences. If one agent $i$ (say a worker) modifies his ideal lottery in a way that
\begin{itemize}
    \item his modification is with the aim to free up the demand to   excess demand objects (tasks), formally,~for~all~$a\in ED(c)$, $c'_{ia}\leq c_{ia}$,
    \item his  modification does not result in another object (task), which was not already in excess demand, becoming in excess demand. More precisely,  $ED((c'_i,c_{-i}))= ED(c)$), and 
    \item the agent's own welfare does not increase as a result of this modification,
\end{itemize} 
 then there is no decrease in the welfare of  other agents.

 In other words, when an agent $i$ changes their preferences from $c_i$ to $c'_i$, freeing up some excess demand objects,  this change does not lead to make some excess supply or unanimous  objects becoming in excess demand, and additionally, agent $i$'s own welfare does not increase as a result of this change, then the welfare of all other agents does not decrease either.
 \begin{definition}\label{DEF:RM}
     A mechanism $\mu$ is replacement monotonic whenever for every  profile $c=(c_j)_{j\in N}$, for every agent $i\in N$, for every   lottery $c'_i$ that satisfies 
     \begin{itemize}
         \item $ED((c'_i,c_{-i}))= ED(c)$, and
         \item for all~$a\in ED(c)$,~$c'_{ia}\leq c_{ia}$, 
     \end{itemize}  
     \begin{center}
       if $\mu_i(c)\succeq_{c_i} \mu_i((c'_i,c_{-i}))$   then for all $j\in N\setminus\{i\}$,   $\mu_j((c'_i,c_{-i}))\succeq_{c_j} \mu_j(c)$ .   
     \end{center}
    
 \end{definition}

\subsection{Non-Bossiness}\label{nonbos}
We can define the concept of non-bossiness with two approaches: one with welfare and one based on allocations. A mechanism is considered welfare non-bossy when a modification in an agent's preferences that leaves their own welfare unchanged also ensures that no one else's welfare is affected. Formally, a mechanism $\mu$ is welfare non-bossy whenever for each $c\in C^n$, each $i\in N$, and each $c'_i\in C$, if 
$\mu_i(c)\equiv_{c_i} \mu_i((c'_i,c_{-i}))$
then for all $j\in N\setminus\{i\}$,
$\mu_j(c)\equiv_{c_j} \mu_j((c'_i,c_{-i}))$.
However, the URC mechanisms, the subject of study in this paper, do not satisfy welfare non-bossiness (see Example~\ref{Exam:non}). 

 We consider the concept of non-bossiness based on allocations for excess demand objects. A mechanism is  non-bossy when a modification in an agent's preferences that leaves their own allocation for excess demand objects unchanged also ensures that no one else's allocation for excess demand objects is affected.

   \begin{itemize}
     \item \textbf{Non-bossiness}: For each $c\in C^n$, each $i\in N$, and each $c'_i$, if  for all $a\in ED(c)$, $\mu_{ia}((c'_i,c_{-i}))=\mu_{ia}(c)$.    then for all $j\in N\setminus\{i\}$, for all $a\in ED(c)$, $\mu_{ja}((c'_i,c_{-i}))=\mu_{ja}(c)$.
   \end{itemize} 
So, our concept of non-bossiness considers only allocations for excess demand objects and not all objects. Also note that if $\mu$ is a same-sided mechanism, by Lemma~\ref{distance-sameside}, then starting from the sentence `for all $j\in N\setminus\{i\}$, for all $a\in ED(c)$, $\mu_{ja}((c'_i,c_{-i}))=\mu_{ja}(c)$' we can conclude the sentence `for all $j\in N\setminus\{i\}$,
$\mu_j(c)\equiv_{c_j} \mu_j((c'_i,c_{-i}))$'.
    
 \subsection{Envy Freenss, Anonymity, In-Betweeness}

In this section, we establish the concepts of envy freeness, anonymity and in-betweenness for the division problem of chances.  A random matching $P$ is envy free when no individual prefers the allocation of another over their own, meaning no one is envious of others' outcomes.
\begin{itemize}
    \item\textbf{Envy Freeness}: For each $c\in C^n$, for every $i,j \in N$,  $p_i\succ_{c_i} p_j$. 
\end{itemize}

In our context, anonymity entails that the welfare of the final outcome under a mechanism is independent of the specific identities of the participating agents.
\begin{itemize}
   \item\textbf{Anonymity}: Let $H$ be a permutation of $N$, $c,c'\in C^n$, and for every $i\in N$, $c'_{H(i)}=c_i$. We have   $\mu_i(c)\equiv_{c_i} \mu_{H(i)}(c')$.
    
\end{itemize}
 

The In-Betweenness property is a type of monotonicity properties (Thomson, 2011 \cite{THOMSON2011393}, Section 7), where a mechanism allocates monotonically with respect to varying peaks of objects in a specified set. For a profile $c$, for an agent $i$, and $c'_i$ between $c_i$ and $\mu_i(c)$, we say    an object $a$ is a shortage object for $i$, whenever $c_{ia}\geq \mu_{ia}(c)$, and   an abundance object $a$ for $i$, whenever $c_{ia}\leq \mu_{ia}(c)$.
In-betweenness property asserts as follows:  If an  agent $i$ (a worker) decreases his ideal  for a shortage object (task) $a$ from $c_{ia}$ to $c'_{ia}$, but not less than what the mechanism is allocated to him — that is,  $\mu_{ia}(c)\leq c'_{ia} \leq c_{ia}$ — then his allocation   does not increase, that is $\mu_{ia}((c'_i,c_{-i}))\leq\mu_{ia}(c)$. Similarly, if an agent $i$ (a worker) increases his ideal for an abundance object (task) $a$, but not more than what the mechanism is allocated to him — that is,  $\mu_{ia}(c)\geq c'_{ia} \geq c_{ia}$ — then his allocation   does not decrease, that is $\mu_{ia}((c'_i,c_{-i}))\geq\mu_{ia}(c)$.

\begin{itemize}
    \item \textbf{In-Betweenness}: For each $c\in C^n$, for every $i\in N$, if $c'_i$ is between $c_i$ and $\mu_i(c)$, then    for~all~$a\in A$:
    
    if $c'_{ia}\leq c_{ia}$ we have $\mu_{ia}((c'_i,c_{-i}))\leq \mu_{ia}(c)$, and if $c'_{ia}\geq c_{ia}$ we have  $ \mu_{ia}((c'_i,c_{-i}))\geq \mu_{ia}(c)$.
    \end{itemize}
    In other words, a mechanism satisfies in-betweenness whenever it is monotonic  for lotteries in $between(c_i, \mu_i(c))$.
We conclude this section with the ensuing proposition and its corollary.
    \begin{proposition}\label{prop:Between}
        Suppose $\mu$ is a strategy proof, efficient, and in-between mechanism. Let $c=(c_i)_{i\in N}$ be a profile. If an agent $i$ modifies his ideal lottery $c_i$ to a lottery $c'_i$ where $c'_i$ is between $c_i$ and $\mu_i(c)$, then for all $a\in A$, $\mu_{ia}(c)=\mu_{ia}((c'_i,c_{-i}))$.
    \end{proposition}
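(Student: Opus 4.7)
The plan is to set $P=\mu(c)$ and $P'=\mu((c'_i,c_{-i}))$ and combine the componentwise bounds produced by in-betweenness with strategy proofness applied at the profile $(c'_i,c_{-i})$, using the fact that $P_i$ and $P'_i$ are both probability vectors.

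First, I would partition $A$ by the sign of $c_{ia}-P_{ia}$: let $A_1=\{a:c_{ia}\leq P_{ia}\}$ (the abundance objects for $i$) and $A_2=\{a:c_{ia}\geq P_{ia}\}$ (the shortage objects), assigning any equality to either class. Since $c'_i\in Between(c_i,P_i)$, on $A_1$ one has $c_{ia}\leq c'_{ia}\leq P_{ia}$ and on $A_2$ one has $P_{ia}\leq c'_{ia}\leq c_{ia}$; in particular $c'_{ia}\geq c_{ia}$ on $A_1$ and $c'_{ia}\leq c_{ia}$ on $A_2$, so in-betweenness of $\mu$ yields $P'_{ia}\geq P_{ia}$ on $A_1$ and $P'_{ia}\leq P_{ia}$ on $A_2$. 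Setting $X=\sum_{a\in A_2}(P_{ia}-P'_{ia})$ and $Y=\sum_{a\in A_1}(P'_{ia}-P_{ia})$, both quantities are nonnegative, and $X=Y$ follows from $\sum_{a\in A}P_{ia}=\sum_{a\in A}P'_{ia}=1$.

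Next I would compute $\mathbf{d}(c'_i,P_i)-\mathbf{d}(c'_i,P'_i)$ using the chains $c'_{ia}\leq P_{ia}\leq P'_{ia}$ on $A_1$ and $c'_{ia}\geq P_{ia}\geq P'_{ia}$ on $A_2$; the absolute values collapse and the sum telescopes to $\sum_{a\in A_1}(P_{ia}-P'_{ia})+\sum_{a\in A_2}(P'_{ia}-P_{ia})=-Y-X=-2X$. Strategy proofness applied at the profile $(c'_i,c_{-i})$ rules out $P_i\succ_{c'_i}P'_i$, i.e.\ $\mathbf{d}(c'_i,P_i)\geq\mathbf{d}(c'_i,P'_i)$, so $-2X\geq 0$ and hence $X=Y=0$. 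Because each summand of $X$ and $Y$ is nonnegative, every term vanishes and $P_{ia}=P'_{ia}$ for all $a\in A$, as required.

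The main delicacy will be the sign bookkeeping in the distance computation; it works precisely because the between hypothesis places $c'_{ia}$ on the same side of $P_{ia}$ as $c_{ia}$, while in-betweenness pushes $P'_{ia}$ weakly to the opposite side of $P_{ia}$, producing the sandwich needed for the telescoping. Efficiency does not seem to intervene directly in this argument; I suspect it is listed among the hypotheses for consistency with the surrounding development, where same-sidedness (Proposition~\ref{eff-same}) serves as the standing backdrop.
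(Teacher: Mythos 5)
Your argument is correct, but it is a genuinely different route from the paper's. The paper first proves the welfare identity $\mathbf{d}(c_i,\mu_i(c))=\mathbf{d}(c_i,\mu_i((c'_i,c_{-i})))$ by combining the collinearity decomposition $\mathbf{d}(c_i,\mu_i(c))=\mathbf{d}(c_i,c'_i)+\mathbf{d}(c'_i,\mu_i(c))$ (organized via same\mbox{-}sidedness, i.e.\ efficiency through Proposition~\ref{eff-same}) with \emph{two} applications of strategy proofness (truth-telling at $c$ and at $(c'_i,c_{-i})$) and the triangle inequality; only then does it invoke in-betweenness, again together with same\mbox{-}sidedness, to upgrade the equal distances to componentwise equality on $ED(c)$, $ES(c)$ and $UN(c)$. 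You instead partition $A$ by the sign of $c_{ia}-\mu_{ia}(c)$, let in-betweenness push $\mu_{ia}((c'_i,c_{-i}))$ weakly away from $c'_{ia}$ on each side, and use a \emph{single} application of strategy proofness, measured from the new ideal $c'_i$, plus agent feasibility ($X=Y$) to force $X=Y=0$ and hence termwise equality. Your bookkeeping checks out: on $A_1$ the between hypothesis gives $c_{ia}\leq c'_{ia}\leq \mu_{ia}(c)$ (the degenerate case $c_{ia}=\mu_{ia}(c)$ collapses all three), so the chains $c'_{ia}\leq \mu_{ia}(c)\leq \mu_{ia}((c'_i,c_{-i}))$ and, on $A_2$, the reversed chains are valid, and the telescoped difference is indeed $-X-Y$. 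What your approach buys is economy and generality: no triangle inequality, no collinearity identity, and—as you suspect—no use of efficiency at all, so your argument actually proves the proposition for any strategy proof and in-between mechanism, whereas the paper's proof genuinely relies on same\mbox{-}sidedness in both of its stages. The paper's version, on the other hand, isolates the intermediate welfare equality as a separate fact, which is in the spirit of the welfare-equivalence analysis that follows, but that equality is anyway an immediate corollary of your componentwise conclusion.
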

    \begin{proof}
        See Appendix~\ref{app:Model}.
    \end{proof}

    \begin{corollary}\label{cor:between}
        Suppose mechanism $\mu$ is  strategy proof, efficient, in-between and non-bossy. Let $c=(c_i)_{i\in N}$ be a profile. If an agent $j$ modifies his ideal lottery $c_j$ to a lottery $c'_j$ where $c'_j$ is between $c_j$ and $\mu_j(c)$, then 
for all $a\in ED(c)$, for all $i\in N$, $\mu_{ia}(c)=\mu_{ia}((c'_j,c_{-j}))$.  
    \end{corollary}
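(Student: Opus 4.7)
The plan is to chain Proposition~\ref{prop:Between} with the non-bossiness axiom; the corollary is essentially a one-line consequence, so the task is mainly to identify the right instantiation of each hypothesis.

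First, I would invoke Proposition~\ref{prop:Between} on agent $j$. Its hypotheses are exactly met: $\mu$ is assumed strategy proof, efficient, and in-between, and $c'_j$ is by assumption between $c_j$ and $\mu_j(c)$. The proposition then yields $\mu_{ja}(c)=\mu_{ja}((c'_j,c_{-j}))$ for all $a\in A$. In particular, the equality holds on the restricted coordinate set $ED(c)$, which is what non-bossiness needs as input.

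Next, I would apply the non-bossiness axiom with $j$ playing the role of the agent whose preferences change. The axiom's premise, "for all $a\in ED(c)$, $\mu_{ja}((c'_j,c_{-j}))=\mu_{ja}(c)$", is precisely what we just derived. The axiom then delivers the conclusion "for all $i\in N\setminus\{j\}$, for all $a\in ED(c)$, $\mu_{ia}((c'_j,c_{-j}))=\mu_{ia}(c)$". Combining this with the $i=j$ case already obtained from Proposition~\ref{prop:Between} produces the full statement of the corollary: $\mu_{ia}(c)=\mu_{ia}((c'_j,c_{-j}))$ for every $i\in N$ and every $a\in ED(c)$.

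There is no real obstacle here, since both ingredients are available in the form needed; the only thing to be careful about is that the non-bossiness property as stated in Section~\ref{nonbos} is indexed by $ED(c)$ of the original profile (not of the modified one), which matches exactly the indexing in the corollary's conclusion, so no auxiliary argument about whether $ED((c'_j,c_{-j}))$ equals $ED(c)$ is required.
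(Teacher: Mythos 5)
Your proposal is correct and is exactly the argument the paper intends: apply Proposition~\ref{prop:Between} to agent $j$ to fix his own allocation on all of $A$, then feed the restriction to $ED(c)$ into the non-bossiness axiom to fix everyone else's allocation on $ED(c)$ (this is precisely how the paper uses the corollary, e.g.\ in the step deriving (\ref{eq:mt2}) in the proof of Theorem~\ref{charac}). No gap to report.
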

      



  In the next section, we discuss URC mechanisms (Uniform Rules for dividing Chances) and prove that every URC mechanism is strategy proof, efficient, replacement monotonic, non-bossy,  in-between, anonymous, and envy free.  

 \section{Uniform Rules for Dividing Chances} \label{sec:Mechan}
  In the introduction and Example~\ref{exm:divisionchanc}, we discussed the difference between dividing chances in matching scenarios and dividing quantities of multiple commodities. This difference arises from  the   requirement of ``agent feasibility" and the fact that lotteries are
treated as whole entities that makes the application of the generalized uniform rule, GRU, inappropriate for dividing chances.
In this section, we introduce a class of mechanisms for dividing chances, called Uniform Rule for dividing Chances (URC) mechanisms, and investigate their properties. 

We begin by revisiting the uniform rule (Sprumont 1999 \cite{SPRUMONT1991Econometrica}), denoted as UR, for dividing a homogeneous commodity among a given set of agents  $N$, where each agent $i$ has a single peaked preference with a peak at $x_i\in [0,1]$.  The uniform rule   allocates to each agent his preferred share, provided it falls within common bounds which are the same for everyone and have been selected to meet the feasibility requirement.   
Given a profile $x=(x_i)_{i\in N}$, where $x_i$ is the peak of agent $i$, the mathematical function representing  the uniform rule is  as follows: $UR: [0,1]^n\xrightarrow{} [0,1]^n$,   \[  UR_i(x) = 
   \begin{cases}
     \min(x_i,\lambda(x))  &\quad\text{if}~ \sum_{j=1...n}x_j\geq 1 \\
     \max(x_i,\nu(x))  &\quad\text{if}~ \sum_{j=1...n}x_j\leq 1 \\ 
    \end{cases}
 \]  where $\lambda(x)$ solves the equation 
$\sum_{i\in N} \min(x_i,\lambda(x))=1$ and
$\nu(x)$ solves   $\sum_{i\in N} \max(x_i,\nu(x))=1$.

Now, we are prepared to introduce the class of URC mechanisms. Think of each object $a \in A$ as a tank containing one liter of a specific colored liquid named $a$. Each agent $i\in N$ is equipped with a tap, $tap_{ia}$, for each tank $a\in A$. Additionally, each agent $i$ has a bucket $buk_i$ with a capacity of one liter. The ideal lottery of agent $i$, denoted by $c_i=(c_{ia})_{a\in A}$, signifies the desired combination of liquids that agent $i$ wishes to have in their bucket.
The mechanism functions in two phases. In phase 1,

\begin{itemize}
    \item all agents simultaneously open their taps on the tanks. Each agent $i$ closes tap $tap_{ia}$ as soon as the amount of liquid $a$ in their bucket reaches their ideal for object $a$, that is $c_{ia}$.
    \item Phase 1 concludes when one of two conditions is met: either all taps have been closed, or, if any tap on a tank remains open, it signifies that the liquid in that tank has been fully exhausted.
\end{itemize} 
Assuming $w_{ia}$ is the current amount of liquid $a$ in bucket $i$. For objects that are not in excess demand, agent $i$ closes their tap at their ideal points, so $w_{ia}=c_{ia}$. For objects that are in excess demand, either agent $i$ closes their tap before liquid $a$ is exhausted, or they, along with all other agents who cannot reach their ideal points, receive an equal amount of $a$ while keeping their taps open until all liquid $a$ is exhausted, this process implies that excess demand objects are divided based on the uniform rule. Therefore, 
\begin{itemize}
    \item \textbf{Phase 1}:  For every object $a\in A$, 
    \begin{itemize}
        \item if $a\not\in ED(c)$, for all $i\in N$, let $w_{ia}=c_{ia}$,
         \item if $a\in ED(c)$, for all $i\in N$, let $w_{ia}=UR_i(c_{a})$.
    \end{itemize}
\end{itemize}
\noindent At the end of phase 1, it is possible that some tanks (objects) are not entirely empty (exhausted) and some buckets (agents) are not yet completely filled. Indeed, all objects except excess supply objects are exhausted in phase 1, so, after the completion of phase 1, excess supply object are not be exhausted, and some agents do not have reached their capacity (where the sum of the probabilities they hold is less than 1). Therefore, phase 2 of the mechanism is executed as follows to ensure
object feasibility and agent feasibility.

In phase 2, we organize all the buckets in a line, and separately, we arrange all the tanks in another line. We initiate the process by opening the tap of the first tank, allowing the liquid to flow into the first bucket. If the tank becomes empty, we proceed to the next tank, and if the bucket becomes full, we move on to the next bucket. By the end of phase 2, each bucket contains exactly one liter of liquid. The chance of allocating object $a$ to agent $i$ is proportional to the amount of liquid of object $a$ present in the corresponding bucket $buk_i$.

More formally Let $\alpha$ represent a sequence of all agents in $N$, and $\beta$ a sequence of all objects in $A$. The second phase operates based on following informal pseudocode:
\begin{itemize}   
    \item \textbf{Phase 2}:  
\begin{itemize}
      \item[0.] Initialize $t=1, s=1$  
    \item[1.] If the bucket of agent $\alpha_t$ is full (i.e., $\sum_{a\in A}w_{\alpha_ta}=1$) then proceed to the next agent in the sequence by updating $t$ to  $t+1$, continuing updating $t$ to  $t+1$ until either finding an agent $\alpha_t$ in the sequence that their bucket is not full, or  the sequence $\alpha$ is concluded ($t=n$).
    \item[2.] If  object $\beta_s$ is exhausted (i.e., $\sum_{i\in N}w_{i\beta_s}=1)$ then   move on to the next object by updating $s$ to  $s+1$, continuing  updating $s$ to  $s+1$  until either finding an object $\beta_s$ that the liquid in its tank is not exhausted, or the sequence $\beta$ is concluded ($s=n$).
    \item[3.] Update $w_{\alpha_t\beta_s}$ to  $w_{\alpha_t\beta_s}+\min\bigg((1-\sum_{a\in A}w_{\alpha_ta}),(1-\sum_{i\in N}w_{i\beta_s})\bigg)$.
    
    The value $(1-\sum_{a\in A}w_{\alpha_ta})$ is the current free capacity of the bucket of agent $\alpha_t$, and $(1-\sum_{i\in N}w_{i\beta_s})$  is the current remaining liquid in the tank of object $\beta_s$. 
    \item[4.] If $t=n$ and $s=n$, Stop, otherwise return to Step 1.

\end{itemize}
\end{itemize}
Finally, we let for every $i\in N$, for every $a\in A$, $URC^{\alpha,\beta}_{ia}(c)=w_{ia}$ as the outcome of the mechanism.

 Phase 2 of URC mechanisms solely forces agents to choose additional fractions of objects to ensure agent feasibility as well as object feasibility, and does not impact welfare. So, all URC mechanisms are equivalent in terms of welfare.
 
 \begin{proposition}\label{prop:seqInd}
     For every sequences $\alpha$, $\alpha'$, $\beta$, and $\beta'$, for every profile $c$, for all $i\in N$,
 \begin{center}
     $URC^{\alpha,\beta}_{i}(c)\equiv_{c_i} URC^{\alpha',\beta'}_{i}(c)$.
 \end{center}
 \end{proposition}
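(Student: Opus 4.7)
The plan is to exploit the fact that Phase 1 of every URC mechanism is identical (it depends only on the profile $c$, not on the orderings $\alpha$ and $\beta$), and that Phase 2 only shuffles around the liquids of excess supply objects into the unfilled capacities of the buckets. Consequently, the allocations for excess demand and unanimous objects are fully determined by Phase 1 and are the same for every URC variant; Lemma~\ref{lem:distance-sameside} then forces the welfare to coincide.

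More concretely, I would proceed in the following steps. First, I would verify that after Phase 1, for every agent $i$ and every object $a\notin ED(c)$, the amount $w_{ia}$ equals $c_{ia}$, so unanimous and excess supply objects already have the agent's ideal share deposited (and the unanimous tanks are thereby exhausted), while for $a\in ED(c)$ we have $w_{ia}=UR_i(c_a)\leq c_{ia}$ with the tank of $a$ being exhausted. This shows that immediately before Phase 2, the only tanks still holding liquid are those in $ES(c)$, and the total leftover in these tanks equals the total unfilled bucket capacity, namely $\sum_{i\in N}\sum_{a\in ED(c)}(c_{ia}-UR_i(c_a))$. Second, I would note that Phase 2 only ever adds liquid from excess supply tanks into buckets, hence it can only increase entries $w_{ia}$ for $a\in ES(c)$ and never modifies the entries corresponding to $a\in ED(c)\cup UN(c)$.

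Third, I would check that the output $P=URC^{\alpha,\beta}(c)$ is same-sided in the sense of Section~\ref{sec:Model}: for $a\in UN(c)$, $p_{ia}=c_{ia}$; for $a\in ED(c)$, $p_{ia}=UR_i(c_a)\leq c_{ia}$; and for $a\in ES(c)$, $p_{ia}\geq c_{ia}$ because Phase 2 can only add to the already-deposited $c_{ia}$. Fourth, and this is the key step, I would apply Lemma~\ref{lem:distance-sameside} to conclude that for each agent $i$,
\begin{equation*}
\mathbf{d}(c_i,URC^{\alpha,\beta}_i(c))=2\sum_{a\in ED(c)}\bigl(c_{ia}-URC^{\alpha,\beta}_{ia}(c)\bigr)=2\sum_{a\in ED(c)}\bigl(c_{ia}-UR_i(c_a)\bigr),
\end{equation*}
which is a quantity depending on $c$ alone. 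The same computation applied to $URC^{\alpha',\beta'}(c)$ yields the identical value, so $URC^{\alpha,\beta}_i(c)\equiv_{c_i}URC^{\alpha',\beta'}_i(c)$.

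I do not expect any serious obstacle. The only item that requires a small bookkeeping argument is confirming that Phase 2 genuinely never touches the Phase-1 values on $ED(c)\cup UN(c)$: one must observe that the pseudocode's Step 2 skips over a tank as soon as it is exhausted, and every ED and UN tank is already exhausted at entry to Phase 2, so their columns are frozen. Once that is in place, the same-sidedness check and the invocation of Lemma~\ref{lem:distance-sameside} give the welfare equivalence immediately.
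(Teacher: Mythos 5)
Your proposal is correct and follows essentially the same route as the paper: welfare is computed via same-sidedness and Lemma~\ref{lem:distance-sameside} as $2\sum_{a\in ED(c)}(c_{ia}-UR_i(c_a))$, which is fixed in Phase 1 and hence independent of $\alpha$ and $\beta$. The only (harmless, indeed slightly cleaner) difference is that you verify same-sidedness of the URC outcome directly from the two-phase description, whereas the paper's proof forward-references the efficiency claim in Proposition~\ref{Theorem:URL-properites} together with Proposition~\ref{eff-same}.
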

 \begin{proof}  
The outcome of URC mechanisms is efficient, as it is shown in Proposition~\ref{Theorem:URL-properites}. According to Proposition~\ref{eff-same}, efficiency is equivalent to same-sideness. So, with the help of Lemma~\ref{lem:distance-sameside}, Equation~(\ref{distance-sameside}), we can compute the distance based on excess demand objects.
In phase 1, all excess demand objects are exhausted, and therefore, the allocation of excess demand objects is determined in phase 1. Phase 1 is independent of sequences $\alpha$ and $\beta$; thus welfare is independent of sequences $\alpha$ and $\beta$, and we are done.
       \end{proof}
 
\begin{notation}
For convenience, we use $URC$ instead of $URC^{\alpha,\beta}$, when   sequences $\alpha$ and $\beta$ are arbitrary.
\end{notation}

\begin{example}\label{exam:1}

Suppose $N=\{1,2,3\}$, $A=\{a,b,c\}$ and the preference profile $c$ is
\begin{equation}\label{eq:exam1}
    \begin{matrix}
c_1=(a:0.6,b:0.2,c:0.2)\\
c_2=(a:0.5,b:0.4,c:0.1)\\
c_3=(a:0.2,b:0,c:0.8)~~\\
\end{matrix}
\end{equation}
   
\noindent    For   profile $c$, we have $ED(c)=\{a,c\}$. The execution of phase 1 leads to the following outcomes: 
\begin{itemize}
    \item $w_{1a}=0.4$, $w_{2a}=0.4$ and $w_{3a}=0.2$.
    \item $w_{1c}=0.2$,  $w_{2c}=0.1$,  and $w_{3c}=0.7$.

    \item  $w_{1b}=0.2$,  $w_{2b}=0.4$,  and $w_{3b}=0$.
\end{itemize} 
  
Let $\alpha=123$, and $\beta=abc$.
In Phase 2, the remaining chances of object $b$ is assigned to all agents, making them full. The final random matching is:
\begin{itemize}
    \item $(p_{1a}=0.4,p_{1b}=0.4,p_{1c}=0.2)$,
      \item $(p_{2a}=0.4,p_{2b}=0.5,p_{2c}=0.1)$,  
      \item $(p_{3a}=0.2,p_{3b}=0.1,p_{3c}=0.7)$.
\end{itemize}
    \end{example}

 We show that URC mechanisms satisfy efficiency, strategy proofness,   replacement monotonicity, non-bossiness, envy freeness, and in-between.

\begin{proposition}\label{Theorem:URL-properites} URC mechanisms are    efficient, strategy proof, in-between,      non-bossy,  replacement monotonic, anonymous, and envy free.
\end{proposition}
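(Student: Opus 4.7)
The plan is to verify each of the seven properties in turn, exploiting the two-phase structure of URC and the strategy-proofness and monotonicity properties of the scalar Uniform Rule. The key structural observation is that Phase~1 fully exhausts every object in $ED(c)$ (via UR applied coordinate-wise on the vector $c_a$), gives each agent $i$ exactly $c_{ia}$ on every non-ED object, and that Phase~2 only redistributes the remaining liquid of excess-supply tanks (ED tanks being already empty). From this, efficiency is immediate: the outcome satisfies $\mu_{ia}(c) = \min(c_{ia}, \lambda_a) \leq c_{ia}$ on $ED(c)$, $\mu_{ia}(c) = c_{ia}$ on $UN(c)$, and $\mu_{ia}(c) \geq c_{ia}$ on $ES(c)$; it is therefore same-sided and hence, by Proposition~\ref{eff-same}, efficient.

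For strategy-proofness the starting point is the feasibility identity $\mathbf{d}(c_i, q) = 2 \sum_a \max(0, c_{ia} - q_a)$, valid for any lottery $q$. Applied to $\mu_i(c)$ this recovers Lemma~\ref{lem:distance-sameside} as $\mathbf{d}(c_i, \mu_i(c)) = 2 \sum_{a \in ED(c)} \max(0, c_{ia} - \lambda_a)$; applied to $\mu_i((c'_i, c_{-i}))$ it produces a lower bound analysable coordinate-by-coordinate. Since $\mu_{ia}((c'_i, c_{-i}))$ equals a scalar UR output when $a$ stays in ED under the misreport and equals $c'_{ia}$ plus a non-negative Phase~2 correction otherwise, a case split on whether $a \in ED((c'_i, c_{-i}))$, combined with strategy-proofness of the scalar UR, yields $\mathbf{d}(c_i, \mu_i(c)) \leq \mathbf{d}(c_i, \mu_i((c'_i, c_{-i})))$. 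For replacement monotonicity, the stated hypotheses freeze $ED(c)$ and weakly shrink agent $i$'s demand on every ED object; classical UR monotonicity then gives $\lambda'_a \geq \lambda_a$ and hence $\mu_{ja}((c'_i, c_{-i})) = \min(c_{ja}, \lambda'_a) \geq \mu_{ja}(c)$ for every $j \neq i$, after which Lemma~\ref{lem:distance-sameside} (with $ED((c'_i, c_{-i})) = ED(c)$) converts this coordinate-wise gain into the required welfare inequality. For non-bossiness, suppose $\mu_{ia}(c) = \mu_{ia}((c'_i, c_{-i}))$ on every $a \in ED(c)$: if $a$ stays in ED, the UR feasibility equation $\sum_{k \neq i} \min(c_{ka}, x) = 1 - \mu_{ia}(c)$ pins down the common allocation of the other agents (either uniquely or on a plateau where $\min(c_{ka}, x) = c_{ka}$), so $\mu_{ja}$ is unchanged; if $a$ exits ED, the Phase~2 mass on object $a$ available to $j \neq i$ equals $\sum_{j \neq i} (\min(c_{ja}, \lambda_a) - c_{ja})$, which is $\leq 0$ and must also be $\geq 0$ individually, forcing $c_{ja} \leq \lambda_a$ and hence $\mu_{ja}(c) = c_{ja} = \mu_{ja}((c'_i, c_{-i}))$ for every such $j$.

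For in-betweenness the argument is coordinate-by-coordinate: on $a \in ED(c)$ the interval $[\mu_{ia}(c), c_{ia}]$ lies at or above $\lambda_a$ whenever non-degenerate, so $c'_{ia} \geq \lambda_a$ keeps $\lambda'_a = \lambda_a$ and leaves $\mu_{ia}$ unchanged; on $a \in UN(c)$ the report is forced to remain $c_{ia}$; on $a \in ES(c)$ the bound $c'_{ia} \leq \mu_{ia}(c) \leq 1 - \sum_{j \neq i} c_{ja}$ keeps $a$ outside $ED((c'_i, c_{-i}))$, and the Phase~2 capacity $\mu_{ia}(c) - c'_{ia} \geq 0$ that agent $i$ still has lets tank $a$ top up the bucket to at least $\mu_{ia}(c)$. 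Anonymity follows from Proposition~\ref{prop:seqInd} after permuting the sequences $\alpha, \beta$ according to $H$; envy-freeness reduces, via the feasibility identity once more, to the per-object inequality $\max(0, c_{ia} - \mu_{ja}(c)) \geq \max(0, c_{ia} - \mu_{ia}(c))$, obtained by case analysis on whether $c_{ia}$ and $c_{ja}$ lie above or below $\lambda_a$, together with the observation that on non-ED objects $\mu_{ia}(c) \geq c_{ia}$ makes the right-hand side vanish.

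The hardest step is expected to be the in-betweenness statement on excess-supply objects, because Phase~2 is executed with fixed but arbitrary orderings $(\alpha, \beta)$ and in-betweenness demands a coordinate-wise (not merely welfare) inequality. Making this rigorous requires pinning down the total ES mass agent $i$ receives as $1$ minus their preserved allocations on $ED(c) \cup UN(c)$, and then verifying that the deterministic bucket-filling rule routes the residual in a way consistent with the direction dictated by $c'_{ia} \geq c_{ia}$. The non-bossiness case where an object exits $ED$ is a close second, but there the non-negativity of Phase~2 supplements does most of the work.
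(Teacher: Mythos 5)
Most of your outline follows the paper's own route: same-sidedness plus Proposition~\ref{eff-same} for efficiency, the distance formula of Lemma~\ref{lem:distance-sameside} together with monotonicity and strategy proofness of the scalar uniform rule for replacement monotonicity, envy freeness and anonymity (via Proposition~\ref{prop:seqInd}), and a case analysis for non-bossiness; your feasibility/plateau argument for the case where the object stays in excess demand is in fact a little cleaner than the paper's tap-closing reasoning. The genuine gap is exactly where you predict it: in-betweenness on excess-supply objects. You need $URC_{ia}((c'_i,c_{-i}))\geq URC_{ia}(c)$ for every $a\in ES(c)$, and your justification is only that agent $i$'s bucket still has free capacity $URC_{ia}(c)-c'_{ia}$ and tank $a$ still has liquid, so Phase~2 ``lets tank $a$ top up the bucket.'' But Phase~2 is a deterministic interleaving governed by the fixed orderings $\alpha,\beta$: bucket $i$ may be filled from other excess-supply tanks before tank $a$'s turn arrives, and tank $a$ may be drained into buckets earlier in $\alpha$ before it ever meets bucket $i$. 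What has to be established --- and what the paper does through its conditions $(*)$ and $(**)$ followed by an induction over the steps $(t,s)$ of Phase~2, comparing the execution on $c$ with the execution on $(c'_i,c_{-i})$ --- is that every agent preceding $i$ in $\alpha$ withdraws exactly the same amounts in both runs (their reports and their ED/UN allocations are unchanged), so that the capacity and supply bounds you invoke are still in force at the moment bucket $i$ actually meets tank $a$. You acknowledge that this verification is required but do not supply it, so the coordinate-wise inequality on $ES(c)$ --- the heart of the in-betweenness proof --- remains unproven.

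A smaller omission sits in your strategy-proofness sketch. In the case where an object $a\in ED(c)$ exits excess demand under the misreport, neither scalar-UR strategy proofness nor the remark that the Phase~2 correction is non-negative closes the coordinate: non-negativity points the wrong way, since on that coordinate you need the upper bound $URC_{ia}((c'_i,c_{-i}))\leq URC_{ia}(c)$. The repair is short --- every $j\neq i$ receives at least $c_{ja}$ of such an object, so by object feasibility $URC_{ia}((c'_i,c_{-i}))\leq 1-\sum_{j\neq i}c_{ja}\leq 1-\sum_{j\neq i}\min(c_{ja},\lambda_a)=\min(c_{ia},\lambda_a)=URC_{ia}(c)$ --- but as written your case split does not yet yield the claimed inequality, whereas the paper handles the aggregate loss explicitly via its objects $b_1,\ldots,b_m$.
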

\begin{proof}
    See Appendix~\ref{app:Mechan}.
\end{proof}

\subsection{Exploring Alternative Mechanisms for Dividing Chances}\label{subsec:explor}

In this section, we explore the possibility of constructing alternative mechanisms for dividing chance. Then, in Section~\ref{sec:charact}, we demonstrate that any mechanism satisfying the desired properties stated in Theorem~\ref{charac} is welfare equivalent to URC mechanisms.

Let's approach the division problem of chances as a constraint satisfaction problem and explore the role of a mechanism designer. The mechanism designer is presented with a profile of preferences, denoted as $c=(c_i)_{i\in N}$, faced with the problem that, in $c=(c_i)_{i\in N}$, some objects are in excess demand whereas others are in excess supply, and the designer encounters difficulty in satisfying all agent preferences directly as he needs to satisfy object feasibility as a constraint. To address this, the designer opts to transfer requests from excess demand objects to excess supply objects to ensure object feasibility. The key question becomes \textit{determining the appropriate amount of transfer} for each agent. 

In URC mechanisms, to ascertain the  appropriate amount of transfer, we concentrate on the surplus of objects in excess demand. Utilizing the uniform rule, we determine, for each agent, how much of each excess demand object should be transferred to excess supply objects. In following, we discuss two other alternative approaches two determine the appropriate amount of transfer.

\noindent 1)  One way to determine the appropriate amount to transfer is to focus on the deficiency for objects which are in excess supply. Consider the following profile for three agents and three objects:
      \begin{itemize}
      \centering
        \item[] $c_1=(a:0,b:1,c:0)$,
        \item[] $c_2=(a:0,b:1,c:0)$,
        \item[] $c_3=(a:0,b:0,c:1)$.
        
    \end{itemize}
For object $a$, we are in excess supply. The mechanism designer may decide to determine how much each agent should have from object $a$, and then transfer that amount from other objects to object~$a$.

 If the mechanism designer, employing the uniform rule (UR), aims to distribute object $a$ among agents, then we have $p_{1a}=p_{2a}=p_{3a}=1/3$. Due to same-sideness, the mechanism designer cannot transfer anything from object $c$. So, they should transfer the required amount from object~$b$. However, agent $3$ has nothing from object $b$, and transferring an amount from object $b$ results in a negative quantity of $b$ for agent $3$. 
 
 The same argument holds for any other mechanisms such as the proportional rule, that allocates a positive amount of object $a$ to agent $3$, and leads to a negative quantity of object $b$ for agent $3$, making it an impractical choice. We conclude that there is no mechanism that can determine the appropriate amount to transfer fairly based solely on the ideals of agents for excess supply objects. 

We say two profiles $c$ and $c'$ coincide on a subset $S\subseteq A$ when, for every agent $i\in N$, and every object $a\in S$, $c_{ia}=c'_{ia}$.
In URC mechanisms, the welfare of all agents (the distance between their allocation and their ideal lottery) is determined during phase~1, and the ideals of agents for excess supply objects are not taken into account when determining the welfare of agents in URC mechanisms -- that is, 
\begin{quote}
 for every two profiles $c$ and $c'$ with $ED(c)=ED(c')$, which coincide on $ED(c)$, we have for every agent $i\in N$, for every sequences $\alpha$ and $\beta$,  $\mathbf{d}(URC^{\alpha,\beta}_i(c),c_i)=\mathbf{d} (URC^{\alpha,\beta}_i(c'),c'_i)$.    
\end{quote}
This is because all excess demand objects are exhausted in phase 1, and divided based on the uniform rule. According to Lemma~\ref{lem:distance-sameside} (Equation~(\ref{distance-sameside}): $ \mathbf{d}(p_i,c_i)=2\times \sum_{a\in ED(c)}(c_{ia}-p_{ia})$), as for $c$ and $c'$, we have $ED(c)=ED(c')$, and they coincide on $ED(c)$,  we can conclude  for every agent $i\in N$,  $\mathbf{d}(URC^{\alpha,\beta}_i(c),c_i)=\mathbf{d} (URC^{\alpha,\beta}_i(c'),c'_i)$.

However, there is no   efficient   mechanism that behaves similarly for excess supply objects, and when determining the welfare of agents, the ideals of agents for excess demand objects are not taken into account.

 \begin{proposition}\label{prop:ESimpossible}
     There exists no efficient mechanism $\mu$ such that   for every two profiles $c$ and $c'$ with $ES(c)=ES(c')$, that coincide on $ES(c)$, we have for every agent $i\in N$, $\mathbf{d}(\mu_i(c),c_i)=\mathbf{d} (\mu_i(c'),c'_i)$.
 \end{proposition}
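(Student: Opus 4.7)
The plan is to refute the statement by constructing an explicit family of profiles on $N = \{1,2,3\}$, $A = \{a,b,c\}$ that pairwise coincide on the excess supply set, and then showing that no efficient mechanism can assign them the same per-agent welfare vectors. For each $k \in \{1,2,3\}$, I will take the profile $c^k$ in which agent $k$ reports $(a:0,\,b:0.8,\,c:0.2)$ while the other two agents each report $(a:0.65,\,b:0.15,\,c:0.2)$. A direct check shows that in all three profiles $ED(c^k)=\{a,b\}$ and $ES(c^k)=\{c\}$, and every agent has ideal $0.2$ for $c$, so the three profiles pairwise coincide on $ES$.

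The first main step combines Proposition~\ref{eff-same} (efficiency $\Leftrightarrow$ same-sideness) with Lemma~\ref{lem:distance-sameside} to compute the total welfare. Because there is only one excess-supply object, the lemma reduces to $\mathbf{d}(\mu_i(c^k), c^k_i) = 2(\mu_{ic}(c^k) - 0.2)$; summing over $i$ and invoking object feasibility $\sum_i \mu_{ic}(c^k)=1$ gives total welfare $w_1 + w_2 + w_3 = 2(1 - 0.6) = 0.8$ for each of the three profiles.

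The second main step bounds, inside profile $c^k$, the welfare of the distinguished agent $k$. Same-sideness forces $\mu_{ka}(c^k) = 0$ because $c^k_{ka} = 0$ and $a \in ED(c^k)$. Same-sideness on object $b$ combined with object feasibility and the bounds $\mu_{ib}(c^k) \leq c^k_{ib} = 0.15$ for the two other agents forces $\mu_{kb}(c^k) \geq 1 - 2(0.15) = 0.7$. Agent feasibility then pins $\mu_{kc}(c^k) \leq 0.3$, so Lemma~\ref{lem:distance-sameside} yields a welfare bound $\mathbf{d}(\mu_k(c^k), c^k_k) \leq 0.2$ for agent $k$ in profile $c^k$.

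If the hypothesised $\mu$ existed, the welfare vector $(w_1, w_2, w_3)$ would be common to all three profiles. Applying the second step to each profile in turn forces $w_k \leq 0.2$ for every $k \in \{1,2,3\}$, so $w_1 + w_2 + w_3 \leq 0.6$, contradicting the total $0.8$ from the first step. The main obstacle is locating the counterexample profiles: we need a family that simultaneously (i) pairwise coincides on $ES$, (ii) individually forces a different agent into a nearly-tight allocation on the $ES$ object through feasibility alone, and (iii) shares a common total welfare strictly larger than $n$ times the per-agent cap produced in (ii). The ``spike'' construction above achieves this by leaving a single degree of freedom on the $ES$ object $c$ that the feasibility equations on $a$ and $b$ immediately close.
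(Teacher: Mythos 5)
Your argument is internally sound for the instance it treats, and it is a genuinely different route from the paper's. The paper takes, for each agent $j$, a profile $z^j$ in which every agent demands $0$ of the excess-supply object $a$ and agent $j$ puts his whole ideal on a unanimous object $b$; efficiency then forces $\mathbf{d}(\mu_j(z^j),z^j_j)=0$, the hypothesised invariance transports $\mu_{ja}=0$ to a common profile for every $j$, and object feasibility for $a$ is violated. Your proof instead runs a quantitative welfare-accounting argument: Lemma~\ref{lem:distance-sameside} plus object feasibility pins the total welfare loss at $0.8$ in every one of your three profiles, while same-sidedness on the two excess-demand objects caps the distinguished agent's loss at $0.2$ in ``his'' profile, and the hypothesised invariance makes the three caps apply to a single welfare vector, giving $0.6<0.8$. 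That is a correct and rather elegant counting contradiction, and I verified the arithmetic ($ED=\{a,b\}$, $ES=\{c\}$, $\mu_{ka}(c^k)=0$, $\mu_{kb}(c^k)\geq 0.7$, $\mu_{kc}(c^k)\leq 0.3$) — for three agents.

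The gap is generality in $n$. The proposition is stated for the model's arbitrary $n=|N|=|A|$ (the paper's proof explicitly handles every $n\geq 3$), whereas your construction lives on $N=\{1,2,3\}$, $A=\{a,b,c\}$ and your concluding remarks do not indicate how to extend it. The specific numbers do not survive as written: with $n$ agents each placing $0.2$ on object $c$, the total demand for $c$ is $0.2n$, so $c$ ceases to be in excess supply already at $n=5$ (it becomes unanimous, then excess demand), and the whole accounting collapses; moreover for $n>3$ the remaining $n-3$ objects enter $ES$ and must be folded into both the total-welfare computation and the per-agent cap. These are fixable issues — one can retune the weights as functions of $n$ and redo the two estimates — but as submitted the proof establishes the impossibility only for the three-agent problem, which is strictly weaker than the paper's statement and proof.
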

 \begin{proof}
     See Appendix~\ref{app:Mechan}.
 \end{proof}
 
\noindent In URC mechanisms, the share of each agent $i$ to transfer some amount from excess demand objects to excess supply ones is equal to $\sum_{a\in ED(c)} (c_{ia}-URC_{ia}(c_a))$.

\noindent A result of Proposition~\ref{prop:ESimpossible} is that there exists no efficient mechanism where the transfer for each agent $i$ is equal to~$\sum_{a\in ES(c)} (URC_{ia}(c_a)-c_{ia})$.

\vspace{0.1cm}

\noindent 2) Another approach to determining the appropriate amount to transfer is to simplify the problem to a one-object scenario. Consider the following profile for three agents and three objects:
      \begin{itemize}
      \centering
        \item[] $c_1=(a:0.6,b:0.2,c:0.2)$,
        \item[] $c_2=(a:0.5,b:0.4,c:0.1)$,
        \item[] $c_3=(a:0.2,b:0,c:0.8)$.
        
    \end{itemize}
 
Objects $a$ and $c$ are in excess demand. The mechanism designer treat all excess demand objects as one proxy object $d$, with $c_{1d}=c_{1a}+c_{1c}=0.8$, $c_{2d}=c_{2a}+c_{2c}=0.6$, and $c_{3d}=c_{3a}+c_{3c}=1$, when the amount for object $d$ is assumed to be equal to $2$.  Object $d$ is in excess demand as $c_{1d}+c_{2d}+c_{3d}=2.4>2$. 
The mechanism designer employs the uniform rule (UR) for object $d$ and derives 

\begin{center}
    $p_{1d}=0.7, p_{2d}=0.6, p_{3d}=0.7$. 
\end{center}
So agent $1$ must transfer $c_{1d}-p_{id}=0.8-0.7$ from objects $a$ and $c$ to object $b$,
 agent $2$ must transfer $(0.6-0.6)$
 from  objects $a$ and $c$ to object $b$, and agent $3$  must transfer $(1-0.7)$ from objects $a$ and $c$ to object $b$. 
 
 However, this approach is not strategy proof as agent $3$ has incentive to misreport $c'_3=(a:0.2,b:0.1,c:0.7)$. By this misreporting, only object $a$ is in excess demand, and thus $c_{1d}=c_{1a}=0.6$, $c_{2d}=c_{2a}=0.5$, and $c_{3d}=c'_{3a}=0.2$.  The mechanism designer employs the uniform rule (UR) for object $d$ and derives $p_{1d}=0.4, p_{2d}=0.4, p_{3d}=0.2$. In this way, agent $3$ transfers $0$ from object $a$ and $0.1$ from object $c$ 
  to $b$ which is less than $0.3$.

 \section{Characterizing URC }\label{sec:charact}
 In Section~\ref{sec:Mechan}, we introduced  URC mechanisms  and examined their properties. In this section, we characterize URC mechanisms up to welfare equivalence; proving our main Theorem~\ref{charac}, that asserts
  \begin{quote}
     if a mechanism   satisfies   strategy proofness, efficiency, non-bossiness, replacement monotonicity, in-betweenness, and anonymity  then it is welfare equivalence to URC mechanisms.
 \end{quote}
 The main idea of the proof of Theorem~\ref{charac} is to demonstrate that every mechanism satisfying the properties outlined in Theorem~\ref{charac} divides the chances of all excess demand objects according to the uniform rule. Our approach is as follows: given a profile $c$, for every $a\in ED(c)$, we transform $c$ into another profile that includes only object $a$ as its excess demand object. Subsequently, we will employ the characterization theorem established by Y. Sprumont in 1999 \cite{SPRUMONT1991Econometrica}, page 511.
 
\noindent The proof sketch is outlined as follows:
Suppose that an arbitrary mechanism $\mu$ satisfying the properties outlined in Theorem~\ref{charac} is given.
 \begin{itemize}
     \item[Step.1.] First, in Lemma~\ref{prop-CSP}, using efficiency, strategy proofness and replacement monotonicity, we prove for every profile $c$ with only one excess demand object $a$, where $ED(c)=\{a\}$,  if an agent $j$ misreports solely on non-excess demand objects without affecting the set $ED(c)$, then all agents' allocation for the excess demand object $a$ remains unchanged.
\end{itemize}

 \begin{lemma}\label{prop-CSP}
   Let $\mu$ be an   efficient, strategy proof and replacement monotonic   mechanism. Suppose  $c,c'\in C^n$ are two profiles where for some object $a\in A$,  and some agent $j\in N$,   
   \begin{itemize}
   \item $c'=(c'_j,c_{-j})$, that is, for all $l\in N\setminus \{j\}$, $c'_l=c_l$, 
   \item  $ED(c)=ED(c')=\{a\}$,       and
        \item  $c_{ja}=c'_{ja}$.
   \end{itemize}
  Then     for all $i\in N$,  $\mu_{ia}(c)=\mu_{ia}(c')$.
  \end{lemma}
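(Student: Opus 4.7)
The plan is to combine strategy proofness with two applications of replacement monotonicity in order to force welfare equivalence for every agent other than $j$, and then to exploit the single-excess-demand-object structure via Lemma~\ref{lem:distance-sameside} to extract the pointwise allocation equality at object $a$.

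First, strategy proofness, applied at profiles $c$ and $c'$ respectively, yields $\mu_j(c)\succeq_{c_j}\mu_j(c')$ (agent $j$ with true preference $c_j$ cannot gain by reporting $c'_j$) and the symmetric $\mu_j(c')\succeq_{c'_j}\mu_j(c)$. Next I will verify that Definition~\ref{DEF:RM} is applicable to the transition $c\to c'$ in both directions. The hypothesis $ED(c)=ED(c')=\{a\}$ handles the ED-preservation condition, while $c_{ja}=c'_{ja}$ saturates the coordinatewise condition $c'_{ja}\le c_{ja}$ over $ED(c)=\{a\}$ with equality, so the premise of Definition~\ref{DEF:RM} holds in either direction. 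Applying it to $c\to c'$ together with the first strategy-proofness inequality delivers $\mu_k(c')\succeq_{c_k}\mu_k(c)$ for every $k\ne j$; applying it to $c'\to c$ together with the second delivers $\mu_k(c)\succeq_{c'_k}\mu_k(c')$. Since $c_k=c'_k$ for $k\ne j$, these two comparisons collapse into $\mu_k(c)\equiv_{c_k}\mu_k(c')$ for every $k\ne j$.

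It remains to upgrade this welfare equivalence to the allocation identity at object $a$. By Proposition~\ref{eff-same}, both $\mu(c)$ and $\mu(c')$ are same-sided with respect to their own profiles, and since $ED(c)=ED(c')=\{a\}$ the formula of Lemma~\ref{lem:distance-sameside} collapses, for every $k\ne j$, to $\mathbf{d}(c_k,\mu_k(c))=2(c_{ka}-\mu_{ka}(c))$ and $\mathbf{d}(c_k,\mu_k(c'))=2(c_{ka}-\mu_{ka}(c'))$, using $c_k=c'_k$ in the second expression. Equality of the two distances forces $\mu_{ka}(c)=\mu_{ka}(c')$ for every $k\ne j$, and object feasibility $\sum_{k\in N}\mu_{ka}(c)=1=\sum_{k\in N}\mu_{ka}(c')$ then pins down $\mu_{ja}(c)=\mu_{ja}(c')$ as well.

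The main subtlety is the symmetry step for replacement monotonicity: the definition only demands the weak inequality $c'_{ja}\le c_{ja}$ on $ED$-objects, and it is precisely the equality $c_{ja}=c'_{ja}$ that permits closing the loop in both directions and thereby obtaining an equivalence rather than a one-sided comparison. A secondary point to watch is that Lemma~\ref{lem:distance-sameside} is stated for a matching same-sided with respect to the very profile against which the distance is taken, so the reduction of $\mathbf{d}(c_k,\mu_k(c'))$ to a single-coordinate expression genuinely hinges on $|ED(c)|=|ED(c')|=1$ together with $c_k=c'_k$ for $k\ne j$; without these, translating the global welfare identity into an agentwise allocation identity would be out of reach.
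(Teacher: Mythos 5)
Your proposal is correct and follows essentially the same argument as the paper: two applications of strategy proofness combined with replacement monotonicity in each direction (valid because $c_{ja}=c'_{ja}$ saturates the coordinatewise condition on $ED(c)=\{a\}$) yield welfare equality for all $k\neq j$, and then efficiency, Lemma~\ref{lem:distance-sameside}, and object feasibility convert this into the allocation identity at $a$. The only cosmetic difference is that the paper first equates $\sum_{b\in ED(c)}\mu_{ib}(c)$ with $\sum_{b\in ED(c)}\mu_{ib}(c')$ before using $|ED(c)|=1$, whereas you collapse to the single coordinate directly.
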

\begin{proof}
    See Appendix~\ref{app:charact}.
\end{proof}
 
\begin{itemize}
     \item[Step.2.]   Subsequently, we prove Lemma~\ref{lem:Twoprofiles}. This lemma asserts that when considering two preference profiles, where a single object, denoted as $a$, is their only excess demand object, and these profiles are identical in their ideal peaks of object $a$, then the outcomes of any strategy proof, efficient, and replacement monotonic mechanism for these two profiles will also be identical for object $a$.

\end{itemize}
  \begin{lemma}\label{lem:Twoprofiles}

 Let $\mu$ be a strategy proof, efficient, and replacement monotonic. For every two profiles $c$ and $c'$ where for some $a\in A$
\begin{itemize} \item $ED(c)=ED(c')=\{a\}$ , and 
\item $c_a=c'_a$ ($c_{ia}=c'_{ia}$ for all $i\in N$), 
\end{itemize}
we have 
\begin{center} for all $i\in N$,
$\mu_{ia}(c)=\mu_{ia}(c')$.  
\end{center}
     
 \end{lemma}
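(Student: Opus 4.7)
The plan is to reduce Lemma~\ref{lem:Twoprofiles} to an iterated application of Lemma~\ref{prop-CSP}. That lemma already shows that whenever a single agent swaps out her preference for another one that keeps her peak on $a$ and does not disturb the property $ED=\{a\}$, the allocation on $a$ is unchanged for every agent. So I would try to construct a finite chain of profiles $c = c^{(0)}, c^{(1)}, \ldots, c^{(m)} = c'$ in which consecutive profiles differ in exactly one agent's preference and each intermediate profile lies in the set
\[
\mathcal{S} \;:=\; \{c'' \in C^n : c''_a = c_a \text{ and } ED(c'') = \{a\}\}.
\]
The $a$-coordinate hypothesis $c_{ja}=c'_{ja}$ is automatic along the chain, and $a\in ED(c^{(t)})$ is preserved for every $t$ because $\sum_i c^{(t)}_{ia} = \sum_i c_{ia} > 1$. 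The real content of $c^{(t)} \in \mathcal{S}$ is the collection of inequalities $\sum_i c^{(t)}_{ib} \leq 1$ for every $b \neq a$, which together cut out a convex polytope in $(\Delta(A))^n$.

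The main technical step is combinatorial: show that any two points of $\mathcal{S}$ are connected inside $\mathcal{S}$ by a finite sequence of single-row moves. I would process agents one at a time, but with multiple micro-steps per agent whenever a direct swap $c_i \mapsto c'_i$ would violate some constraint $\sum_k c^{(t)}_{kb} \leq 1$. An obstruction can only occur if, in the current profile, some non-$a$ object $b$ is tight ($\sum_k c^{(t)}_{kb} = 1$) while $c'_{ib} > c^{(t)}_{ib}$. The key observation is that, since $c' \in \mathcal{S}$ forces $\sum_k c'_{kb} \leq 1 = \sum_k c^{(t)}_{kb}$, there must exist another agent $k'$ with $c^{(t)}_{k'b} > c'_{k'b}$, i.e., some agent whose target on $b$ lies strictly below her current report. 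I would first perform the downward adjustment of $k'$ on $b$, which is always safe (decreases never create new excess-demand objects), and only then raise agent $i$'s mass on $b$. Iterating this ``unload-before-load'' rule, the discrepancy $c'-c$ decomposes into finitely many single-row moves entirely inside $\mathcal{S}$.

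Given such a chain, Lemma~\ref{prop-CSP} applies at each step and yields $\mu_{ia}(c^{(t)}) = \mu_{ia}(c^{(t+1)})$ for every $i \in N$; telescoping over $t = 0, 1, \ldots, m-1$ delivers the desired equality $\mu_{ia}(c) = \mu_{ia}(c')$ for every $i \in N$. I expect the hard part to be the path construction in $\mathcal{S}$, specifically verifying that the unload-before-load scheme terminates in finitely many single-row moves and keeps every intermediate profile inside $\mathcal{S}$; the remainder is a routine iteration of Lemma~\ref{prop-CSP}. Note that efficiency (giving same-sidedness via Proposition~\ref{eff-same}) and replacement monotonicity enter only through their use inside Lemma~\ref{prop-CSP}, while strategy-proofness is what guarantees that the local invariance of $\mu_{\cdot a}$ propagates correctly along the chain.
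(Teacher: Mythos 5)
Your proposal is correct and follows essentially the same route as the paper: the paper also builds a finite chain $c=c^0,c^1,\ldots,c^m=c'$ of single-agent modifications that keep the $a$-column fixed and $ED=\{a\}$, applies Lemma~\ref{prop-CSP} at each step, and telescopes. The paper's ``warehouse'' algorithm, in which blocked transfers are resolved by asking other agents to shift mass from full objects to ones with available capacity, is exactly your unload-before-load construction of the path inside $\mathcal{S}$.
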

\begin{proof}
    See Appendix~\ref{app:charact}.
\end{proof}

\begin{itemize}

\item[Step.3.]  By employing the characterization theorem established by Y. Sprumont, 1999 \cite{SPRUMONT1991Econometrica}, page 511, we deduce that for every  profile $c$ with only one excess demand object, say $a$,  for every agent $i\in N$, $\mu_{ia}(c)=UR_i(c_a)$.
\end{itemize}
\begin{lemma} \label{lemEDa}
  Let  $\mu$ be a strategy proof, efficient, replacement monotonic, and anonymous mechanism. Also, let   $c$  be a profile with $ED(c)=\{a\}$ for some $a\in A$. Then,  for every  $i\in N$,
  
  \begin{center}
       $\mu_{ia}(c)=UR_i(c_a)$,
  \end{center} and 
             $\mathbf{d}(c_i,\mu_i(c))=2\times (c_{ia}-UR_i(c_a))= \mathbf{d}(c_i,URC_i(c))$.

\end{lemma}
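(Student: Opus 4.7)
The plan is to reduce the lemma to Sprumont's (1991) characterization of the uniform rule for the classical one-commodity single-peaked division problem. By Lemma~\ref{lem:Twoprofiles}, for every peak vector $x = (x_j)_{j \in N} \in [0,1]^n$ with $\sum_j x_j \geq 1$, the value $\mu_{ia}(c)$ is the same for every profile $c$ with $c_a = x$ and $ED(c) \subseteq \{a\}$. Such a profile exists, for instance by setting $c_{jb} = (1 - x_j)/(n-1)$ for every $b \neq a$, which makes each non-$a$ column sum $(n - \sum_j x_j)/(n-1) \leq 1$. This lets me define an induced one-commodity rule $\phi_i(x) := \mu_{ia}(c)$ on the excess-demand domain.

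I then verify that $\phi$ satisfies the hypotheses of Sprumont's theorem. Feasibility $\sum_i \phi_i(x) = 1$ is object feasibility of $\mu$; efficiency $\phi_i(x) \leq x_i$ is same-sidedness (Proposition~\ref{eff-same}) applied to $a \in ED(c)$; anonymity of $\phi$ is inherited from anonymity of $\mu$ via the symmetric construction. For strategy proofness of $\phi$ I compare $\mu$'s outputs on the two symmetric profiles $c$ and $c' = (c'_i, c_{-i})$ with $c_a = x$ and $c'_a = (x'_i, x_{-i})$. For a downward misreport $x'_i \leq x_i$, the residual mass of $c'_i$ weakly dominates that of $c_i$ on every non-$a$ object, so Lemma~\ref{lem:distance-sameside} collapses both $\ell^1$ distances to the clean form $2(x_i - \phi_i(\cdot))$, and strategy proofness of $\mu$ immediately yields $\phi_i(x) \geq \phi_i(x'_i, x_{-i})$---the required single-peaked strategy-proofness inequality since both values lie below $x_i$. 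Replacement monotonicity of $\mu$ applied in the same setup simultaneously forces $\phi_j(x'_i, x_{-i}) \geq \phi_j(x)$ for every $j \neq i$, showing how the mass liberated from $i$ is redistributed. The upward case is handled by swapping the roles of $c$ and $c'$ and reusing the same computation, with the remaining gap closed by combining the monotonicity just established with efficiency $\phi_i(y) \leq y_i$.

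Sprumont's characterization now forces $\phi = UR$ on the excess-demand domain, so $\mu_{ia}(c) = UR_i(c_a)$ for every profile $c$ with $ED(c) = \{a\}$. The distance identity in the lemma is then immediate from Lemma~\ref{lem:distance-sameside}: since $\mu$ is efficient hence same-sided, $\mathbf{d}(c_i, \mu_i(c)) = 2(c_{ia} - UR_i(c_a))$, and exactly the same computation applied to any URC mechanism---efficient by Proposition~\ref{Theorem:URL-properites} and whose phase 1 allocates every excess-demand object via the uniform rule---gives $\mathbf{d}(c_i, URC_i(c)) = 2(c_{ia} - UR_i(c_a))$, matching $\mathbf{d}(c_i, \mu_i(c))$. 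The main obstacle will be verifying strategy proofness of $\phi$ in the upward direction, where $c'_i$ carries strictly less mass than $c_i$ on non-$a$ objects and Lemma~\ref{lem:distance-sameside} no longer supplies a clean $\ell^1$ identity for $\mathbf{d}(\mu_i(c'), c_i)$; the argument must then proceed through the role-swap together with replacement monotonicity to pin down the redistribution among the remaining agents, rather than through a direct distance computation.
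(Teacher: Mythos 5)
Your route is essentially the paper's route: use Lemma~\ref{lem:Twoprofiles} to show that on the domain $ED(c)=\{a\}$ the allocation of $a$ depends only on $c_a$, view $x\mapsto(\mu_{ia}(c))_{i\in N}$ as an induced one-commodity rule, invoke Sprumont's characterization, and then obtain the distance identity from Lemma~\ref{lem:distance-sameside} together with efficiency of URC (Proposition~\ref{Theorem:URL-properites}) and the fact that phase~1 of URC applies the uniform rule to the excess-demand object. The paper compresses the middle step into one sentence ("same-sidedness, strategy proofness and anonymity extend to $\pi_a\circ\mu$"); you try to actually verify the transfer, which is the right instinct, and your downward-misreport computation (choosing $c'_i$ with $c'_{ib}\geq c_{ib}$ for $b\neq a$ so that both distances collapse to $2(x_i-\phi_i(\cdot))$), your derivation of own-peak monotonicity by the role swap, and your anonymity and feasibility checks are all sound.

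The gap is exactly where you flag it, and the closure you propose does not close it. Own-peak monotonicity plus efficiency ($\phi_i(y)\leq y_i$) do not rule out a profitable overreport with $\phi_i(x)<\phi_i(x'_i,x_{-i})\leq x_i$: the proportional rule $\phi_i(x)=x_i/\sum_j x_j$ is own-peak monotone, same-sided, feasible and anonymous on the excess-demand domain, yet manipulable by exactly such overreporting. So the upward case of strategy proofness of the induced rule needs a genuinely new argument -- for instance a cleverer choice of the lottery misreport $c'_i$ (the difficulty being that for $c'_{ia}>c_{ia}$ one cannot have $c'_{ib}\geq c_{ib}$ off $a$, so $\mathbf{d}(c_i,\mu_i(c'_i,c_{-i}))$ no longer reduces to $2\,(c_{ia}-\mu_{ia}(c'_i,c_{-i}))$) -- and the appeal to replacement monotonicity "to pin down the redistribution" is a gesture rather than a proof. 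A related caveat you inherit from the paper: the induced rule is only defined where $\sum_j x_j>1$ (Lemma~\ref{lem:Twoprofiles} needs $ED=\{a\}$), and misreports driving the total below $1$ leave that domain, so one must also argue that Sprumont's theorem applies to this restricted rule. To be fair, the published proof asserts the whole transfer in one line and so shares these gaps; but as written, the upward-misreport step remains unproved in your proposal.
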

\begin{proof}
    See Appendix~\ref{app:charact}.
\end{proof}

\begin{itemize}
   
 \item[Step.4.] Ultimately, with the aid of  Corollary~\ref{cor:between}, given a profile $c$, for every $a\in ED(c)$, we transform  the profile $c$ into another profile $c'$ featuring only a single excess demand object $ED(c')=\{a\}$, and  for every $i\in N$, $\mu_{ia}(c)=\mu_{ia}(c')$ . This transformations enable us to apply Lemma~\ref{lemEDa}, and conclude the main Theorem~\ref{charac} which characterizes URC mechanisms up to welfare equivalency.
 \end{itemize}

 \begin{theorem}\label{charac}
If a mechanism $\mu$ is strategy proof, efficient, replacement monotonic,   non-bossy, in-between and anonymous then for every profile $c\in C^n$, for every agent $i\in N$, for all $a\in ED(c)$, $\mu_{ia}(c)=UR_i(c_a)$, and 
\begin{center}
   for all $i\in N$, $\mu_i(c)\equiv_{c_i} URC_i(c)$,
\end{center}  
  \end{theorem}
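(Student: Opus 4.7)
The plan is to reduce the theorem to the single-excess-demand case already handled by Lemma~\ref{lemEDa}, then derive welfare equivalence via Lemma~\ref{lem:distance-sameside}. Fix any profile $c\in C^n$ and any excess demand object $a\in ED(c)$. I will construct an auxiliary profile $c^{\star}$ with three properties: (i) $c^{\star}_{ia}=c_{ia}$ for every $i\in N$, (ii) $ED(c^{\star})=\{a\}$, and (iii) $\mu_{ia}(c^{\star})=\mu_{ia}(c)$ for every $i\in N$. Lemma~\ref{lemEDa} applied to $c^{\star}$ then gives $\mu_{ia}(c)=\mu_{ia}(c^{\star})=UR_i(c^{\star}_a)=UR_i(c_a)$, which is the first conclusion. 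Since $a\in ED(c)$ is arbitrary, this determines $\mu_{ia}(c)$ on every excess demand coordinate.

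To build $c^{\star}$, I will modify agents one at a time, using Corollary~\ref{cor:between} to carry $\mu_{\cdot a}$ across each step. For each agent $j$, set $c^{\star}_{ja}:=c_{ja}$, set $c^{\star}_{jb}:=\mu_{jb}(c)$ for every $b\in ED(c)\setminus\{a\}$, set $c^{\star}_{jb}:=c_{jb}$ on $UN(c)$, and choose $c^{\star}_{jb}\in[c_{jb},\mu_{jb}(c)]$ for $b\in ES(c)$ so that $\sum_{b}c^{\star}_{jb}=1$. Efficiency gives same-sideness of $\mu(c)$ (Proposition~\ref{eff-same}), and Lemma~\ref{lem:distance-sameside} then forces the aggregate slack $\sum_{b\in ES(c)}(\mu_{jb}(c)-c_{jb})$ to equal $\sum_{b\in ED(c)}(c_{jb}-\mu_{jb}(c))$, which is at least the mass $\sum_{b\in ED(c)\setminus\{a\}}(c_{jb}-\mu_{jb}(c))$ that needs to be reallocated; hence the coordinates on $ES(c)$ can be chosen within the required interval, and each $c^{\star}_j$ lies in $\mathrm{Between}(c_j,\mu_j(c))$ by construction.

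Processing agents in order $1,2,\ldots,n$, define $c^{(0)}:=c$ and $c^{(k)}:=(c^{\star}_k,c^{(k-1)}_{-k})$. Provided $a\in ED(c^{(k-1)})$ and $c^{\star}_k$ still lies between $c^{(k-1)}_k$ and $\mu_k(c^{(k-1)})$, Corollary~\ref{cor:between} yields $\mu_{ia'}(c^{(k-1)})=\mu_{ia'}(c^{(k)})$ for every $i$ and every $a'\in ED(c^{(k-1)})$; in particular $\mu_{\cdot a}$ is preserved, and the $ED$ set never loses $a$ because $\sum_j c^{(k)}_{ja}=\sum_j c_{ja}>1$ at every stage. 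At the end, $c^{(n)}=c^{\star}$, and the other original excess demand objects $b\in ED(c)\setminus\{a\}$ have become unanimous since $\sum_j c^{\star}_{jb}=\sum_j \mu_{jb}(c)=1$, while the $ES(c)$ coordinates remain below unanimity. This yields $ED(c^{\star})=\{a\}$ and $\mu_{ia}(c^{\star})=\mu_{ia}(c)$, closing the reduction to Lemma~\ref{lemEDa}.

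The main obstacle is the inductive step: at stage $k$, the ambient profile is $c^{(k-1)}$ rather than $c$, so Corollary~\ref{cor:between} requires $c^{\star}_k$ to lie between $c_k$ and $\mu_k(c^{(k-1)})$, and requires the relevant coordinates of $\mu(c^{(k-1)})$ to coincide with those of $\mu(c)$. The inductive hypothesis supplies this on $ED(c)$-coordinates, but one must argue that no $ES(c)$ object has migrated into $ED(c^{(k-1)})$ and that the $ES$ rebalancing inside each $c^{\star}_j$ can be chosen so every intermediate profile keeps $\sum_j c^{(k)}_{jb}\le 1$ for each $b\in ES(c)$; a coordinate-wise distribution using the aggregate slack identity of Lemma~\ref{lem:distance-sameside} does the job. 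Once $\mu_{ia}(c)=UR_i(c_a)$ is established on every $a\in ED(c)$, welfare equivalence follows immediately: both $\mu$ and URC are efficient, hence same-sided, and agree on $ED(c)$, so Lemma~\ref{lem:distance-sameside} gives
\[
\mathbf{d}(\mu_i(c),c_i)=2\sum_{a\in ED(c)}(c_{ia}-\mu_{ia}(c))=2\sum_{a\in ED(c)}(c_{ia}-URC_{ia}(c))=\mathbf{d}(URC_i(c),c_i)
\]
for every $i\in N$, which is the stated welfare equivalence $\mu_i(c)\equiv_{c_i}URC_i(c)$.
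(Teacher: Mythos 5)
Your overall strategy coincides with the paper's: transform $c$ agent by agent into a profile whose unique excess demand object is $a$, using in-betweenness (Proposition~\ref{prop:Between}, Corollary~\ref{cor:between}) together with non-bossiness to keep every agent's share of $a$ fixed, then invoke Lemma~\ref{lemEDa}, and finish with Lemma~\ref{lem:distance-sameside}; the final welfare-equivalence computation and the observation that no $ES(c)$ object can become excess demanded are fine. The gap is in the inductive step, and it is precisely the obstacle you name and then dismiss. You fix the entire auxiliary profile $c^{\star}$ in advance, with $c^{\star}_{kb}\in[c_{kb},\mu_{kb}(c)]$ on $ES(c)$, whereas Corollary~\ref{cor:between} at stage $k$ requires $c^{\star}_k$ to lie between $c_k$ and $\mu_k(c^{(k-1)})$, i.e.\ $c^{\star}_{kb}\le\mu_{kb}(c^{(k-1)})$ on the excess supply coordinates. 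The induction and same-sidedness pin down $\mu_k(c^{(k-1)})$ only on $ED(c)\cup UN(c)$; on $ES(c)$ they pin down only the total $\sum_{b\in ES(c)}\mu_{kb}(c^{(k-1)})=\sum_{b\in ES(c)}\mu_{kb}(c)$, and nothing in the hypotheses prevents the mechanism from reshuffling agent $k$'s excess-supply share among the $ES(c)$ objects once agents $1,\dots,k-1$ have changed their reports (same-sidedness only forces $\mu_{kb}(c^{(k-1)})\ge c_{kb}$, so it may drop to $c_{kb}$ on one $ES$ object and rise on another; indeed even URC with different sequences $\alpha,\beta$ distributes the excess-supply part differently). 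Then your pre-chosen $c^{\star}_{kb}$ can strictly exceed $\mu_{kb}(c^{(k-1)})$ and the corollary is not applicable. Your proposed patch, a coordinate-wise distribution via the aggregate slack identity of Lemma~\ref{lem:distance-sameside}, only yields existence of a feasible rebalancing and keeps object totals below one; it does not give the required inequality against $\mu_{kb}(c^{(k-1)})$, which depends on the mechanism's behaviour at the intermediate profile and cannot be guaranteed by any choice made before stage $k$.

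The repair is what the paper actually does: do not freeze the $ES$ coordinates up front, but choose them adaptively at stage $k$, replacing agent $k$'s lottery by $(c_{ka},(\mu_{kb}(c^{(k-1)}))_{b\in ED(c^{(k-1)})\setminus\{a\}},(c^{(k-1)}_{kb})_{b\in UN(c^{(k-1)})},(l_{kb})_{b\in ES(c^{(k-1)})})$ with $c_{kb}\le l_{kb}\le\mu_{kb}(c^{(k-1)})$. The same slack identity, now applied at the profile $c^{(k-1)}$, shows such $l_{kb}$ exist; the modified lottery is between $c_k$ and $\mu_k(c^{(k-1)})$ by construction; and the $ES$ coordinates of the terminal profile are irrelevant, since Lemma~\ref{lemEDa} uses only that its unique excess demand object is $a$ and that the column $c_a$ is unchanged. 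With that adaptive choice the remainder of your argument goes through as written.
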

  \begin{proof}
    See Appendix~\ref{app:charact}.
\end{proof}

\noindent The reader can refer to Figure~\ref{fig:URC} for a graphical illustration of Theorem~\ref{charac}.

 \begin{remark}\label{remarkcar}
     Theorem~\ref{charac} remains valid if we replace the anonymity property with envy freeness. The proof follows a similar argument, with the only difference being that in the proofs, we rely on the characterization theorem established by Y. SPrumont, 1999, \cite{SPRUMONT1991Econometrica}, page 517, for envy freeness, strategy proofness, and efficiency.
 \end{remark}

\section{Logical Relationship between Properties}\label{sec:indaxiom}
We discussed properties strategy proofness, efficiency, replacement monotonicity, non-bossiness, anonymity (and envy freeness), and show that every mechanism that satisfies these properties is welfare equivalent to URC mechanisms (Theorem~\ref{charac}, and Remark~\ref{remarkcar}). In this section, we study logical Independency of these properties. To investigate the logical relationship between properties, We introduce some other mechanisms for the division problem of chances: a class of Serial Dictatorship mechanisms for dividing chances (SDC mechanisms), and a class of Proportional Division of Chances mechanisms (PDC mechanisms).

 \subsection{SDC Mechanisms}

Serial Dictatorship mechanisms for dividing chances (SDC mechanisms) consist of two phases. In Phase 1, all agents are arranged in a line, and then each agent, in their turn, takes out the amount closest to their preference from each object. Phase 2 of SDC mechanisms is exactly the same as Phase 2 of URC mechanisms. Let $\alpha$ represent a sequence of all agents in $N$, and $\beta$ a sequence of all objects in $A$. Given a profile $c=(c_i)_{i\in N}$, the mechanism $SDC^{\alpha,\beta}$   operates in  two phases:

\begin{itemize}
\item \textbf{Phase 1}:
\begin{itemize}
    \item[0.]    Initialize $t=1$ and let the $r_a$ represents the remainder   of each object $a\in A$, and initially $r_a=1$. 
    \item[1.]  For all $a\in A$, let $w_{\alpha(t)a}=\min(r_a, c_{\alpha(t)a})$ (agent $\alpha(t)$, in their turn, either extracts their ideal $c_{\alpha(t)a}$ from the remainder of object $a$, or, if the remainder is less than their ideal, then they extract the whole remainder). 
    \item[2.] For every object $a\in A$, update its remainder $r_a\leftarrow (r_a-c_{\alpha(t)a})$.

    \item[3.] if $t=n$, Stop, otherwise move to next agent by updating $t$ to $t+1$ and return to Step 1.
\end{itemize}
 \item \textbf{Phase 2}: Phase 2 of SDC mechanisms is executing Phase 2 of URC mechanisms for sequences $\alpha$ and $\beta$.
      
\end{itemize}
Finally, we let for every $i\in N$, for every $a\in A$, $SDC^{\alpha,\beta}_{ia}(c)=w_{ia}$ as the outcome of the mechanism.

Note that the distances between the allocations and ideal lotteries are contingent upon the excess demand objects. Since all excess demand objects are exhausted in phase 1 of SDC mechanisms, the determination of distances takes place during this phase. Subsequently, phase 2 serves primarily to ensure the feasibility properties.

\begin{example}\label{exp-SD}
For $N=\{1,2,3\}$ and $A=\{a,b,c\}$, suppose the preference profile (\ref{eq:exam1}) is given.  Consider  the sequence $\alpha$ of agents  is as follows: agent 1 is ahead, agent 2 is  next, and agent 3 is in the third position, and $\beta=abc$. 

\noindent We run the phase 1 of $SDC^{\alpha,\beta}$:
Agent 1 receives   $w_{1a}=0.6,  w_{1b}=0.2$ and $w_{1c}=0.2$. Hence the remaining    of objects are:
\begin{center}
    $r_a=0.4$, $r_b=0.8$, and $r_c=0.8$.
\end{center}
Then it is the turn of agent 2. He takes   $w_{2a}=0.4$, $w_{2b}=0.4$, and $w_{2c}=0.1$. 
The remaining   of objects are:
\begin{center}
    $r_a=0$, $r_b=0.4$, and $r_c=0.7$.
\end{center}
Next, agent 3 takes $w_{3a}=0$, $w_{3b}=0$, and $w_{3c}=0.7$.
Thus the remaining  of objects are:
\begin{center}
    $r_a=0$, $r_b=0.4$, and $r_c=0$.
\end{center}
Finally, in phase 2, for sequencing $\beta$ of objects,  the remaining   are given to those agents who are not yet full. Hence the final random matching is:
\begin{itemize}
\item  $(p_{1a}=0.6, p_{1b}=0.2,p_{1c}=0.2)$,
    \item $(p_{2a}=0.4,p_{2b}=0.4+0.1,p_{2c}=0.1)$, and
    \item $(p_{3a}=0,p_{3b}=0+0.3,p_{3c}=0.7)$.
    
\end{itemize}
 
\end{example}

\subsection{PDC Mechanisms}

 Proportional Division of Chances mechanisms (PDC mechanisms) consist of two phases. In phase 1, each agent is given their ideal for non-excess demand objects, and excess demand objects are divided proportional to the ideals of agents. 
Phase 2 of PDC mechanisms is exactly the same as Phase 2 of URC mechanisms. 
Let $\alpha$ represent a sequence of all agents in $N$, and $\beta$ a sequence of all objects in $A$. Given a profile $c=(c_i)_{i\in N}$,  the mechanism $PDC^{\alpha,\beta}$ operates in two phases as follows:

\begin{itemize}
    \item \textbf{Phase 1}:  For every object $a\in A$, 
    \begin{itemize}

     \item if $a\not\in ED(c)$, for all $i\in N$, let $w_{ia}=c_{ia}$,
        \item if $a\in ED(c)$, for all $i\in N$, let $w_{ia}=c_{ia}/(\sum_{j\in N}c_{ja})$.
        
    \end{itemize}
    
    \item \textbf{Phase 2}:
Phase 2 of PDC mechanisms is identical to Phase 2 of URC mechanisms.
\end{itemize}
Finally, we let for every $i\in N$, for every $a\in A$, $PDC^{\alpha,\beta}_{ia}(c)=w_{ia}$ as the outcome of the mechanism.
Similar to URC mechanisms and SDC mechanisms, phase 2 of PDC mechanisms serves primarily to ensure the feasibility properties.

\begin{proposition}\label{prop:PDSD}~

\begin{itemize}
    \item[i)] SDC mechanisms are efficient, strategy-proof, replacement monotonic, in-between and non-bossy, but they are neither anonymous nor envy-free.
    \item[ii)] PDC mechanism are anonymous, efficient, replacement monotonic, in-between, and non-bossy but they are neither strategy proof nor envy free.
     
\end{itemize}
 \end{proposition}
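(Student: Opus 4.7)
The proof proceeds by verifying each positive property directly from the phase 1 construction of the mechanism and by constructing small counterexamples for each negative claim. A uniform reduction I exploit is that phase 2 of both SDC and PDC only redistributes excess-supply objects among agents whose buckets are not yet full after phase 1, so it is welfare-neutral: by Proposition~\ref{eff-same} and Lemma~\ref{lem:distance-sameside}, once same-sideness is established each agent's welfare equals $2\sum_{a\in ED(c)}(c_{ia}-p_{ia})$, a quantity determined purely by the phase 1 allocation. Consequently strategy proofness, replacement monotonicity, in-betweenness, and non-bossiness can all be verified on the phase 1 allocations for excess demand objects alone, and any difference on non-excess-demand objects is welfare-invisible.

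For part (i), I first note that the rule $w_{ia}=\min(r_a,c_{ia})$ forces $w_{ia}\leq c_{ia}$ on every excess demand object (whose tank is exhausted by the end of phase 1), while $w_{ia}=c_{ia}$ on every non-excess-demand object because $\sum_i c_{ia}\leq 1$ keeps $r_a$ large enough at each step; same-sideness and hence efficiency are immediate. Strategy proofness is the standard serial-dictatorship argument: the remainders $(r_a)_a$ facing agent $i$ at their turn depend only on the fixed reports of agents earlier in $\alpha$, and the coordinate-wise $\min(r_a,c'_{ia})$ rule is $l^1$-closest to $c_{ia}$ exactly when $c'_{ia}=c_{ia}$; phase 2 top-ups are driven only by the residual deficit and only add excess-supply mass beyond the ideal, so they cannot create a profitable manipulation either. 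Replacement monotonicity follows because reducing $c_{ia}$ on an excess demand object weakly raises $r_a$ for every later agent in $\alpha$ while leaving earlier agents untouched. In-betweenness and non-bossiness reduce to coordinate-wise monotonicity of $\min(r_a,\cdot)$: a deviation $c'_i$ that preserves the phase 1 excess-demand allocation preserves every remainder. To refute anonymity and envy-freeness together I use the two-agent, two-object profile $c_1=c_2=(a:1,b:0)$ with $\alpha=(1,2)$: SDC gives $p_1=(1,0)$ and $p_2=(0,1)$, so agent $2$ strictly envies agent $1$ and swapping names changes agent $1$'s welfare.

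For part (ii), anonymity of PDC is immediate because phase 1 is a symmetric function of the reported profile (the proportional share $c_{ia}/\sum_j c_{ja}$ on excess demand objects and the identity $c_{ia}$ elsewhere) and phase 2 is welfare-neutral. Efficiency follows from same-sideness, since $c_{ia}/\sum_j c_{ja}\leq c_{ia}$ whenever $\sum_j c_{ja}\geq 1$. Replacement monotonicity and in-betweenness reduce to the elementary observation that the share $c_{ja}/\sum_k c_{ka}$ is strictly decreasing in $c_{ia}$ for $i\neq j$ with the other coordinates fixed: lowering $c_{ia}$ on an excess demand object (while keeping $ED$ fixed) shifts every other agent's phase 1 allocation toward their ideal, and moving $c'_i$ into $Between(c_i,\mu_i(c))$ shifts $i$'s own allocation monotonically in the prescribed direction. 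Non-bossiness follows because if an agent's allocation on every excess demand object is preserved then the denominators $\sum_j c_{ja}$ for the proportional formula are preserved as well, so no other agent's excess-demand allocation moves. The negative claims are settled by direct counterexamples: for strategy proofness, a three-agent instance analogous to the second approach discussed in Section~\ref{subsec:explor}, where an agent inflates their reported peak on an excess demand object and thereby obtains a proportionally larger share closer to their true peak; for envy freeness, a profile in which two agents with different peaks compete for an excess demand object and the proportional split produces allocations where the agent with the smaller peak strictly prefers the other agent's bundle.

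The main obstacle is making the strategy proofness argument for SDC airtight in the presence of phase 2, because in principle an agent could try to misreport to steer phase 2 away from excess-supply objects they do not want. The resolution is that $i$'s phase 2 receipt is determined entirely by the deficit $1-\sum_a w_{ia}$ left after phase 1, and the coordinate-wise analysis shows that $|c_{ia}-w_{ia}|$ is simultaneously minimized at $c'_i=c_i$ for every $a$, so truth is globally optimal. For the PDC counterexamples the delicate point is to ensure that the object targeted by the manipulation retains its excess-demand status under the deviation, which constrains the profile but is easy to arrange with three agents and three objects.
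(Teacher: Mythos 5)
Your overall strategy matches the paper's: verify same-sidedness of phase~1 to get efficiency via Proposition~\ref{eff-same}, use Lemma~\ref{lem:distance-sameside} to reduce the welfare-based properties (strategy proofness, replacement monotonicity) to the phase~1 division of excess demand objects, and settle the negative claims by counterexamples (your two-agent example for SDC's failure of anonymity and envy freeness is fine, and more explicit than the paper's remark).

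However, there is a genuine gap in your treatment of in-betweenness. You claim that in-betweenness ``can be verified on the phase~1 allocations for excess demand objects alone'' because phase~2 differences are ``welfare-invisible.'' But in-betweenness, as defined, is an allocation property quantified over \emph{all} $a\in A$: when $c'_i\in Between(c_i,\mu_i(c))$ and $c'_{ia}\geq c_{ia}$ (which happens precisely on excess supply objects), the property demands $\mu_{ia}((c'_i,c_{-i}))\geq\mu_{ia}(c)$, and the final allocation of excess supply objects is determined partly in phase~2. Welfare equivalence of phase~2 choices does not imply the required monotonicity of these allocations. This is exactly where the paper's proof does real work: for PDC it gives a dedicated argument by contradiction (tracking the amount $v$ the deviating agent relinquishes on excess demand objects and ruling out both a shortage in the tank of the excess supply object and a shortage of capacity in the agent's bucket at his turn in phase~2), and for SDC it first shows the phase~1 allocation on $ED(c)\cup UN(c)$ is unchanged and then invokes the URC-style phase~2 capacity argument to conclude $SDC_{ia}((c'_i,c_{-i}))=SDC_{ia}(c)$ for $a\in ES(c)$. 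Your proposal contains no argument for the excess supply coordinates, so the in-betweenness claims for both SDC and PDC are not established. A secondary point: your PDC counterexamples (for strategy proofness and envy freeness) are only sketched; the envy-freeness one in particular needs an explicit profile in which the comparison of full bundles (not just the excess demand coordinate) exhibits envy, as in the paper's four-agent example where the agent with peak $1/3$ receives $0.1$ of the overcrowded object while the others receive $0.3$.
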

    \begin{proof}
    See Appendix~\ref{app:indaxion}.
\end{proof}

 \noindent In addition to Proposition~\ref{prop:PDSD}, following logical relations between properties also hold  true.  Let $N=\{1,2,3\}$ and $A=\{a,b,c\}$.
 
\begin{itemize}
    
    \item[iii)] \textit{Non-bossiness + strategy proofness + efficiency does not imply replacement monotonicity.}
    
    To show this claim, we introduce a mechanism denoted as $Except$. Let $\alpha=23$ , and $\beta=abc$. Given a profile $c=(c_i)_{i\in N}$, the  mechanism $Except^{\alpha,\beta}$ assigns agent 1 his ideal lottery. Then if the ideal lottery of agent 1 is $(c_{1a}=1/3)_{a\in A}$, the mechanism proceeds with serial dictatorship for sequences $\alpha$, and $\beta$, i.e., $SDC^{\alpha,\beta}$.  Otherwise it proceeds with serial dictatorship for sequences $\alpha^r$, and $\beta$, i.e., $SDC^{\alpha^r,\beta}$ where $\alpha^r=32$ is the reverse of $\alpha$.

    The $Except^{\alpha,\beta}$ is not replacement monotonic. Let 
    \begin{itemize}
    \centering
        \item[] $c_1=(a:0.4,b:0.4,c:0.2)$
        \item[] $c_2=(a:0.4,b:0.4,c:0.2)$
        \item[] $c_3=(a:0.4,b:0.4,c:0.2)$.
        
    \end{itemize} Let $c'_1= (a:1/3,b:1/3,c:1/3)$. We have
\begin{itemize}
    \item $ED(c)=ED((c'_1,c_{-1}))=\{a,b\}$, and
    \item for all object $o\in ED(c)=\{a,b\}$, $c'_{1o}\leq c_{1o}$.
\end{itemize}
The outcome of the Except mechanism for agent 1 is as follows:
\begin{itemize}
    \item[] $Except^{\alpha,\beta}_1(c))=(a:0.4,b:0.4,c:0.2)$, and 
    \item[] $Except^{\alpha,\beta}_1((c'_1,c_{-1})=(a:1/3,b:1/3,c:1/3)$. 
\end{itemize}
Thus, we have $\mathbf{d}(c_1, Except^{\alpha,\beta}_1(c))=0<\mathbf{d}(c_1, Except^{\alpha,\beta}_1((c'_1,c_{-1}))$. By Definition~\ref{DEF:RM} of replacement monotonicity, we must have for $j\in\{2,3\}$, $\mathbf{d}(c_j, Except^{\alpha,\beta}_j(c))\geq \mathbf{d}(c_j, Except^{\alpha,\beta}_j((c'_1,c_{-1}))$. However, for agent 2, $\mathbf{d}(c_2, Except^{\alpha,\beta}_2(c))=0$ and $\mathbf{d}(c_2, Except^{\alpha,\beta}_2((c'_1,c_{-1}))>0$, as agent $2$ is at the end of the sequence $\alpha^r=32$.

The proofs of strategy proofness, efficiency, and non-bossiness for the Except mechanism are similar to the corresponding proofs for the SDC mechanisms. The proofs of strategy-proofness, efficiency, and non-bossiness for the Except mechanism are similar to the corresponding proofs for the SDC mechanisms. Concerning non-bossiness, note that although the first agent in the sequence can change the order of agents after himself, by misreporting to $c'_1= (a:1/3, b:1/3, c:1/3)$, however, he also changes his own allocation on excess demand objects.

\item[iv)] \textit{Replacement monotonicity does not imply non-bossiness.}
We introduce a mechanism denoted as $ME$ which is replacement monotonic but not non-bossy. Given a profile $c=(c_i)_{i\in N}$, the mechanism $ME$ operates as follows:
\begin{itemize}
    \item If $ED(c)=\{a\}$ then $ME_{1a}(c)=1, ME_{2b}(c)=1, ME_{3c}(c)=1 $.
    \item Otherwise $ME_{1b}(c)=1, ME_{2a}(c)=1, ME_{3c}(c)=1 $.
\end{itemize}
The mechanism $ME$ operates  in such a way  that for two profiles $c_1$ and $c_2$, if we have $ED(c_1)=ED(c_2)$ then $ME(c_1)=ME(c_2)$. Therefore, the concept of replacement monotonicity, as defined in Definition~\ref{DEF:RM}, holds for the $ME$ mechanism. However, the $ME$ mechanism is not non-bossy, as agent 3 can change the outcomes for other agents without changing his own outcome.

\item[v)] \textit{We construct a mechanism, called $MEU$, that is efficient and welfare equivalent to URC mechanisms but is not strategy proof.}

Consider the preference profile $e=(e_1,e_2,e_3)$:
\begin{itemize}
\centering
    \item[] $e_1=(a:2/3,b:1/3, c:0)$ 
    \item[] $e_2=(a:1/3,b:2/3, c:0)$ 
    \item[] $e_3=(a:1/3,b:1/3, c:1/3)$ 
\end{itemize}
Let $\alpha=123$ and $\beta=abc$. We define a mechanism $MEU$ as follows:
 for every profile $c$, if $c\neq e$ then $MEU(c)=URC^{\alpha,\beta}(c)$, and for $c=e$, let $MEU_{1}(e)=(a:2/3,b:0,c:1/3) $,
       $MEU_{2}(e)=(a:0,b:2/3,c:1/3) $,
        $MEU_{3}(e)=(a:1/3,b:1/3,c:1/3) $.
 
It is easy to check that the mechanism $MEU$ is welfare equivalent to $URC^{\alpha,\beta}$, as 
$URC^{\alpha,\beta}_1(e)=URC^{\alpha,\beta}_2(e)=URC^{\alpha,\beta}_3(e)=(a:1/3,b:1/3,c:1/3)$, and 
\begin{itemize}
    \item[] $\mathbf{d}(e_1, URC^{\alpha,\beta}_1(e))=\mathbf{d}(e_1, MEU_1(e))=2/3$, 
    \item[] $\mathbf{d}(e_2, URC^{\alpha,\beta}_2(e))=\mathbf{d}(e_2, MEU_2(e))=2/3$, and
    \item[] $\mathbf{d}(e_3, URC^{\alpha,\beta}_3(e))=\mathbf{d}(e_3, MEU_3(e))=0$. 
\end{itemize}

The mechanism $MEU$ is not strategy proof, as 
 agent 1 in profile $c=(c_1,e_2,e_3)$ where
 $c_1=(a:2/3,b:0, c:1/3)$, has incentive to misreport $c'_1=e_1$.

Note that the existence of $MEU$ does \textbf{not} conflict with Theorem~\ref{charac}. If a mechanism satisfies the properties outlined in Theorem~\ref{charac}, it is welfare equivalent to URC mechanisms, but the reverse is not necessarily true (see Figure~\ref{fig:URC} and Figure~\ref{fig:WEL}).
\end{itemize}
 \begin{figure}[h!]
 \centering
 \includegraphics[scale=0.7]{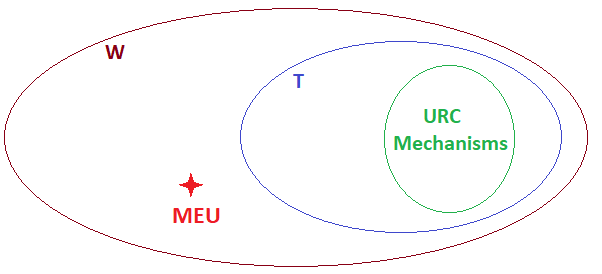} 
  \caption{URC Mechanisms}
 { \footnotesize
 W: the set of all mechanisms that are welfare equivalent to URC mechanisms.
 
  T: the set of all mechanisms that satisfy the properties outlined in Theorem~\ref{charac}. 
  
  Proposition~\ref{Theorem:URL-properites} says: $URC \subseteq T$, and Theorem~\ref{charac} says: $T\subseteq W$.
  }
  \label{fig:URC}
\end{figure}

   \begin{figure}[h!]
 \centering
 \includegraphics[scale=1.2]{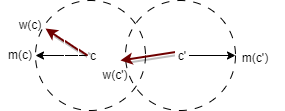} 
  \caption{Welfare Equivalency}
 { \footnotesize
Assume $w$ and $m$ as two mechanisms. These two mechanisms are welfare equivalent since $d(c,m(c))=d(c,w(c))$ and $d(c',m(c'))=d(c',w(c'))$.
The mechanism $m$ is strategy proof, since $d(c,m(c))<d(c,m(c'))$ and $d(c',m(c'))<d(c',m(c))$. However, the mechanism $w$ is not strategy proof as $d(c,w(c))>d(c,w(c'))$.
  }
  \label{fig:WEL}
\end{figure}

    

In Table~\ref{table:1} we illustrate how introduced mechanisms fulfill properties. The  Equal-Division mechanism in table~\ref{table:1}, is simply a mechanism that regardless of the given preference profile, for all agents $i\in N$ and objects $a\in A$, assigns each agent $i\in N$ a share of $p_{ia}=1/n$ of object $a$.

The following propositions are yet unknown and pose  open questions for us.
\begin{itemize}
\item Is there a mechanism satisfying strategy proofness and efficiency  but not non-bossiness?
    \item Is there a mechanism satisfying strategy proofness and efficiency  but not in-betweenness?
    
\item  Although, we explored  alternative mechanisms for dividing chances, in Sections~\ref{subsec:explor} and \ref{sec:indaxiom}, yet the following question~remains unsolved:
\begin{quote}
\textit{Is there a mechanism that is strategy proof, efficient, and satisfies either anonymity or envy freeness, yet is not welfare equivalent to URC mechanisms? }   
\end{quote} The likelihood of affirming `Yes' diminishes based on Proposition~\ref{prop:ESimpossible}. The SDC mechanisms are strategy proof and efficient, and not welfare equivalent to URC mechanisms, but they are not fair and lack both anonymity and envy freeness. Theorem~\ref{charac}, which characterizes URC mechanisms in terms of welfare equivalence, doesn't provide a conclusive answer to our open question because it considers additional properties such as non-bossiness, in-betweenness, and replacement monotonicity.

\end{itemize}

   \begin{table}[]
   \centering
\begin{tabular}{|l|l|l|l|l|l|l|l|l}
\hline
 Mechanisms/Properties& SP & PF & RM  & NB & IB & ANO & EF  \\ \hline
URC & \checkmark & \checkmark & \checkmark&\checkmark & \checkmark  &\checkmark & \checkmark \\ \hline

SDC & \checkmark &\checkmark &\checkmark & \checkmark &  \checkmark  & $\times$ & $\times$ \\ \hline

PDC & $\times$ &\checkmark & \checkmark & \checkmark & \checkmark   & \checkmark & $\times$ \\ \hline

Equal-Division & \checkmark  & $\times$ & \checkmark &\checkmark &  \checkmark    & \checkmark & \checkmark \\ \hline





\end{tabular}
\caption{ Mechanisms/Properties}
\label{table:1}
\footnotesize{SP: Strategy Proofness, PF: (Pareto) Efficiency, RM: Replacement Monotonicity NB: Non-Bossiness,

IB: In-Betweeness, ANO:~Anonymity, EF: Envy Freeness, \checkmark: Yes, $\times$: No.   }
\end{table}

\section{Concluding Remarks and Further Works}\label{sec:conclu}
 We delved into frequently repeated matching scenarios where individuals seek diversification in their choices, and their favored option is not a specific outcome but rather a lottery over them, representing the peak of their preferences. Subsequently, we introduced a class of mechanisms known as URC mechanisms designed for dividing chances in repeated matching problems. We then established a characterization theorem up to welfare equivalence, demonstrating that any mechanism satisfying Pareto efficiency, strategy proofness, replacement monotonicity, non-bossiness, in-betweenness, and anonymity (or envy freeness) is welfare equivalent to URC mechanisms. In our exploration of alternative approaches in Sections~\ref{subsec:explor} and \ref{sec:indaxiom}, a fundamental question remains open: \textit{Can a mechanism be both strategy-proof and efficient while adhering to either anonymity or envy-freeness, and still not be welfare-equivalent to URC mechanisms?}

In this paper, we addressed the problem of dividing chances using ideal lotteries to represent preferences for one-sided, one-to-one matching. As a potential avenue for future research, we could explore two-sided markets, such as the \textit{marriage problem} and the \textit{roommate problem}, where agents' preferences are also represented using ideal lotteries.

 We can also consider extending the concept of ideal lotteries to ideal Markov chains. In the introduction, we discussed an example involving a collection of music on a smartphone, where an individual's favorite option is not a specific music record, but rather a lottery over them. Taking this a step further, we can argue that their preferred option is not merely a lottery but a Markov chain. This Markov chain can be learned by the application's artificial intelligence based on collected data, including how songs are replayed and transitions between songs within their collection. For instance, consider a collection of four songs, $\{a, b, c, d\}$. A favorite option could be represented as a Markov chain, where each song is a state, and transitions between songs occur with certain probabilities. For example, an ideal Markov chain for an agent might show that after listening to song `a', the agent would like to replay `a' with a probability of 0.5, transition to `b' with a probability of 0.3, and switch to `c' with a probability of 0.2.

\begin{center}
  \includegraphics[scale=0.4]{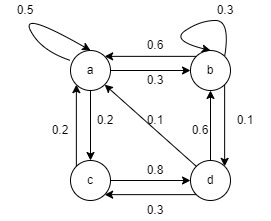} 
  
 ~~ An Ideal Markov Chain of Songs\label{figure} 
\end{center}
Also, consider the example of a company with two workers and two tasks that hourly repeated, where, each worker has an ``ideal Markov chain" over tasks that represent their favorite option.  Let $N=\{1,2\}$ and $A=\{a,b\}$. The ideal Markov chain for agent 1 is shown by $M1$ and for agent 2 by $M2$.
\begin{center}
 \includegraphics[scale=0.45]{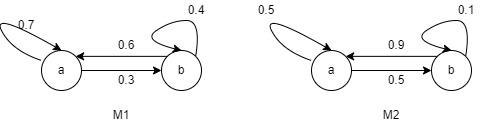} 
  
 ~~Ideal Markov Chains of workers\label{figure2} 
\end{center}
After doing task $a$, agent 1, with probability $0.7$, would like to do task $a$ again in the next hour, and with probability $0.3$, would like to do task $b$ in the next hour. Also, after doing task $b$, agent 1, with probability $0.4$, would like to do task $b$ in the next hour, and with probability $0.6$ do task $a$.

 Representing agents' preferences through ideal Markov chains finds application in designing recommender systems (Aggarwal, 2016 \cite{RecomBook}), particularly to address users' 'desire for variety.' Mechanism design becomes especially intriguing when agents' preferences are modeled using advanced techniques such as \textit{Markov chains}, \textit{recurrent neural networks (RNN)}, and \textit{long short-term memory networks (LSTM)}. These models capture individual preferences over a set of objects in a dynamic and evolving manner, departing from simple linear orderings.

To illustrate this concept, consider a scenario with four friends on a road trip, sharing a car and a music collection for their journey. Each person in the car has their own Markov chain representing song preferences and how they want songs repeated during the trip. The challenge is to develop an algorithm for the car's music player that aggregates individual Markov chains, creating a coherent \textbf{social Markov chain} to maximize overall passenger utility.

This challenge extends to platforms like \textbf{Spotify}, where dynamic user preferences learned by RNN or LSTM models need effective aggregation mechanisms. Our future work explores developing mechanisms and algorithms with potential benefits for platforms like \textbf{Netflix} and \textbf{Spotify}. We also delve into questions of fairness, incentive compatibility, stability, and other considerations in recommender systems regarding Markov chain modeling of preferences. For instance, our further research  may enhance job satisfaction by optimizing matching algorithms on \textbf{freelance platforms}.

{\small \ }
\printbibliography

\section{Appendix}\label{sec:appen} 

In the appendix, we provide proofs of lemmas, propositions, and theorems that have not been addressed in earlier sections. Additionally, we offer some examples that are referenced in preceding sections.

\subsection{Examples} \label{app:Example}

\begin{example}\label{exm:divisionchanc}
 This example illustrates how the division of chances in matching scenarios \textbf{differs} from the division of quantities of multiple commodities due to the concept of \textit{agent feasibility}. Applying the generalized  uniform rule is not applicable as a result.
 
    Suppose that $N=\{1,2,3\}$ is a set of agents, $A=\{a,b,c\}$ is a set of objects, and ideal lotteries of agents over objects are
  
      \begin{itemize}
      \centering
        \item[] $c_1=(a:0.2,b:0.6,c:0.2)$,
        \item[] $c_2=(a:0.4,b:0.6,c:0)$,
        \item[] $c_3=(a:0,b:0.2,c:0.8)$.
        
    \end{itemize}
 
    Object $a$ is in excess supply, $(\sum_{i\in N}c_{ia}<1)$. If we divide the chance of receiving  object $a$ using the uniform rule then we have $p_{1a}=0.3$, $p_{2a}=0.4$, and $p_{3a}=0.3$. Dividing the chance of  object $c$ using the uniform rule, we have $p_{1c}=c_{1c}=0.2$, $p_{2c}=c_{2c}=0$, and $p_{3c}=c_{3c}=0.8$. 

    For agent $3$, we have $p_{3a}+p_{3c}=0.3+0.8>1$, which contradicts  agent feasibility, that is, $(p_{3a}+p_{3b}+p_{3c}=1)$.
    
\end{example}

\begin{example}\label{Exam:non} This example illustrates that URC mechanisms are not welfare non-bossy.
    Suppose that $N=\{1,2,3\}$ is a set of agents, $A=\{a,b,c\}$ is a set of objects, and ideal lotteries of agents over objects are
  
      \begin{itemize}
      \centering
        \item[] $c_1=(a:0.3,b:0.5,c:0.2)$,
        \item[] $c_2=(a:0.7,b:0.2,c:0.1)$,
        \item[] $c_3=(a:0.1,b:0.4,c:0.5)$.
        
    \end{itemize}
    Let $\alpha=123$ and $\beta=abc$. If we run  $URC^{\alpha,\beta}$ on the profile $c$, we have the following outcome
     \begin{itemize}
    
        \item[] $p_1=(a:0.3,b:0.4,c:0.3)$,
        \item[] $p_2=(a:0.6,b:0.2,c:0.2)$,
        \item[] $p_3=(a:0.1,b:0.4,c:0.5)$.
        
    \end{itemize}
    Let $c'_2=(a:0.7,b:0.3,c:0)$. The outcome of $URC^{\alpha,\beta}$ on the profile $(c_1,c'_2,c_3)$ is 
 \begin{itemize}
    
        \item[] $p'_1=(a:0.3,b:0.35,c:0.35)$,
        \item[] $p'_2=(a:0.6,b:0.3,c:0.1)$,
        \item[] $p'_3=(a:0.1,b:0.35,c:0.55)$.
        
    \end{itemize} We have $\mathbf{d}(c_2,p_2)=\mathbf{d}(c_2,p'_2)=0.2$. However, $\mathbf{d}(c_3,p_3)\neq \mathbf{d}(c_3,p'_3)$.
    
\end{example}

 \subsection{Proofs for Section~\ref{sec:Model}} \label{app:Model}

\begin{proof}   Proof of Proposition~\ref{eff-same}:
   \hspace{0.2cm}

   \textbf{Proof of if}:
    If $P$ is an efficient random matching then it is not strictly lottery dominated by any other random matching and thus by Lemma~\ref{lemregualr}, it is same-sided.

\textbf{Proof of only if:}
 Suppose that $P$ is same-sided but not efficient. Then $P$ is strictly lottery dominated by another random matching $J$. Either $J$ is same-sided or not; if not, then by Lemma~\ref{lemregualr}, there exists a same-sided random matching $Q$ that strictly lottery dominates $J$ and thus strictly lottery dominates $P$. So, there exists a same-sided random matching $Q$   that strictly lottery dominates $P$. 
 
 As both $P$ and $Q$ are same-sided, for every object $a\in A$,
 \begin{itemize}
     \item if it is $ED(c)$ then    for all $i\in N$ $p_{ia}\leq c_{ia}$ and  $q_{ia}\leq c_{ia}$,
     \item  if it is $ES(c)$ then  for all $i\in N$ $p_{ia}\geq c_{ia}$ and  $q_{ia}\geq c_{ia}$, and 
     \item  if it it is $UN(c)$ then  for all $i\in N$ $p_{ia}= c_{ia}$ and  $q_{ia}= c_{ia}$.
 \end{itemize}
 Since $Q$ strictly lottery dominates $P$ there exists an agent $i$ such that $\mathbf{d}(q_i,c_i)<\mathbf{d}(p_i,c_i)$.

 We have
 \begin{itemize}
     \item $d(p_i,c_i)=\sum_{a\in ED(c)} (c_{ia}-p_{ia}) +\sum_{a\in ES(c)} (p_{ia}-c_{ia})$
      \item $d(q_i,c_i)=\sum_{a\in ED(c)} (c_{ia}-q_{ia}) +\sum_{a\in ES(c)} (q_{ia}-c_{ia})$
 \end{itemize}
 so, $\sum_{a\in ED(c)} (c_{ia}-p_{ia}) +\sum_{a\in ES(c)} (p_{ia}-c_{ia})> \sum_{a\in ED(c)} (c_{ia}-q_{ia}) +\sum_{a\in ES(c)} (q_{ia}-c_{ia})$ and  thus
 \begin{center}
     $\sum_{a\in ED(c)}(q_{ia}-p_{ia})+ \sum_{a\in ES(c)}(p_{ia}-q_{ia})>0$.
 \end{center}
 Also, for all other agents $j\in N/\{i\}$, we have  $\mathbf{d}(q_j,c_j)\leq\mathbf{d}(p_j,c_j)$ and thus 
 \begin{center}
     $\sum_{a\in ED(c)}(q_{ja}-p_{ja})+ \sum_{a\in ES(c)}(p_{ja}-q_{ja})\geq0$
 \end{center}
 Therefore we have
 \begin{center}
     $\sum_{j\in N} (\sum_{a\in ED(c)}(q_{ja}-p_{ja})+ \sum_{a\in ES(c)}(p_{ja}-q_{ja}))>0$
 \end{center}
 which implies
 \begin{center}
     $ \sum_{a\in ED(c)}(\sum_{j\in N}q_{ja}-\sum_{j\in N}p_{ja})+ \sum_{a\in ES(c)}(\sum_{j\in N}p_{ja}-\sum_{j\in N}q_{ja})>0$
 \end{center}
 \begin{center}
     $ \sum_{a\in ED(c)}(1-1)+ \sum_{a\in ES(c)}(1-1)>0$
 \end{center}
 Contradiction.
\end{proof}

\begin{lemma}\label{lemregualr}
    For every random matching $P$, either $P$ is same-sided or there exists a same-sided random matching $Q$ that strictly lottery dominates $P$.
\end{lemma}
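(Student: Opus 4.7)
The plan is to show that if $P$ violates same-sideness, then a sequence of mass-preserving swaps yields a same-sided $Q$ that strictly lottery dominates $P$. Each swap will preserve both row and column sums, strictly reduce a bounded nonnegative measure of total same-sideness violation, and weakly decrease every agent's $\mathbf{d}$-distance to their ideal, with at least one strict decrease.

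First I would isolate a single violation. Without loss of generality, suppose $a \in ED(c)$ and some agent $i$ has $p_{ia} > c_{ia}$ (the $ES(c)$ and $UN(c)$ cases are analogous). Because $\sum_k p_{ka} = 1 < \sum_k c_{ka}$, there exists $j$ with $p_{ja} < c_{ja}$; because row $i$ sums to $1 = \sum_b c_{ib}$ while $p_{ia} > c_{ia}$, there exists an object $b$ with $p_{ib} < c_{ib}$; symmetrically some $b'$ satisfies $p_{jb'} > c_{jb'}$. When one object $b$ can serve both roles, the four-entry swap on $(i,a),(j,a),(j,b),(i,b)$ with a small step $\epsilon > 0$ preserves row and column sums, and by direct inspection of the four pointwise changes it strictly decreases both $\mathbf{d}(p_i,c_i)$ and $\mathbf{d}(p_j,c_j)$ by $2\epsilon$ each, while leaving every other agent's allocation unchanged.

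The delicate case is when the rows of $i$ and $j$ share no such common object $b$. Here I would anchor the argument to some same-sided matching $Q_0$, whose existence follows from running any URC mechanism from Section~\ref{sec:Mechan} (its same-sideness is immediate from the construction of Phase~1 and the fact that Phase~2 only distributes residual mass of excess-supply objects). The difference $D = Q_0 - P$ has zero row and column sums, so by the standard circulation decomposition on the agent-object bipartite graph it splits into elementary signed cycles, and at least one such cycle must traverse $(i,a)$ with sign matching the intended decrease of $p_{ia}$. Applying a small positive multiple of this cycle to $P$ yields a new feasible matching whose altered entries all move in the direction of $Q_0$, so each of them moves towards its same-sided value on its column, and every agent whose row is touched sees $\mathbf{d}$ weakly decrease because each local change lies coordinate-wise between $P$ and $Q_0$. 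Iterating this procedure strictly reduces a bounded nonnegative piecewise-linear measure of total violation at each step, and therefore terminates in a same-sided $Q$ which strictly lottery dominates $P$, the strictness coming from the first improving swap. The main obstacle is precisely this second case: one must check that the chosen cycle's orientation is simultaneously consistent with reducing the violation at $(i,a)$ and with weakly improving every agent whose row is crossed, which is ensured by aligning the cycle with the sign pattern of $Q_0 - P$ entry by entry.
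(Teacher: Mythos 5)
Your first case (a common object $b$ with $p_{ib}<c_{ib}$ and $p_{jb}>c_{jb}$) is fine and is essentially the paper's swap, but your second case contains a genuine gap. Anchoring to a same-sided $Q_0$ obtained from URC and pushing $P$ along a signed cycle of $D=Q_0-P$ does \emph{not} weakly improve every agent whose row the cycle touches: $Q_0$ is same-sided but need not lottery dominate $P$, and moving an entry of $P$ towards the corresponding entry of $Q_0$ can move it \emph{away} from the peak. Concretely, take an agent $k$ with $b\in ED(c)$ and $q^0_{kb}<p_{kb}\leq c_{kb}$, and $z\in ES(c)$ with $q^0_{kz}>p_{kz}\geq c_{kz}$ (both configurations are compatible with same-sideness of $Q_0$, since they only require $q^0_{kb}\leq c_{kb}$ and $q^0_{kz}\geq c_{kz}$); a cycle through row $k$ aligned with the signs of $D$ decreases $p_{kb}$ and increases $p_{kz}$, and both changes increase $\mathbf{d}(p_k,c_k)$, so agent $k$ is strictly harmed. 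Coordinate-wise betweenness of the new matrix between $P$ and $Q_0$ only makes it metrically between $P$ and $Q_0$ in $l^1$; it says nothing about its distance to $c_k$. For the same reason even the strict improvement of agent $i$ in such a step is not guaranteed, since the second entry touched in row $i$ may also be pushed away from $c_i$. Separately, "strictly reduces a bounded nonnegative measure of violation" does not by itself yield finite termination (the decrements could shrink geometrically); you need each step to be maximal so that it closes one of finitely many gaps. (Invoking URC here is also stylistically awkward, as the lemma precedes Section~\ref{sec:Mechan}, though it is not circular because same-sideness of URC is proved directly from its construction.)

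The case split is avoidable, and this is how the paper proceeds: take as pivot the agent $i$ that is \emph{below} its peak on the violated object $b_0$ (such $i$ exists by the column sum once some $j$ has $p_{jb_0}>c_{jb_0}$), pick any $b_1$ with $p_{ib_1}>c_{ib_1}$ (it exists by the row sum), and transfer $\epsilon$ along the four entries $(i,b_0),(i,b_1),(j,b_0),(j,b_1)$, with $\epsilon$ capped by the relevant gaps. Agent $i$ strictly improves by $2\epsilon$ because both of its moves go towards $c_i$; agent $j$ only weakly improves, since its possible loss on $b_1$ is offset one-for-one by its gain on $b_0$; all other rows are untouched. Accepting a merely weak improvement for $j$ is exactly what removes your need for a common object and for any reference matching $Q_0$, and choosing $\epsilon$ equal to the minimum of the finitely many gaps forces at least one new equality at each iteration, which is what drives termination.
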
 
  \begin{proof}\label{Prooflemregualr} Proof of Lemma~\ref{lemregualr}:
    Let $P$ be an arbitrary random matching. If $P$ is same-sided we are done. Else, there exists a tuple $(b_0,i,j)$ such that $b_0\in A$, $i,j\in N$ and $p_{ib_0}<c_{ib_0}$ and $p_{jb_0}>c_{jb_0}$. 
    As $p_{ib_0}<c_{ib_0}$, and $\sum_{a\in A}p_{ia}=\sum_{a\in A}c_{ia}=1$  there exists an object $b_1$ such that $p_{ib_1}>c_{ib_1}$. If $p_{jb_1}<c_{jb_1}$ let $\alpha=c_{jb_1}$ else let $\alpha=1$.
    Consider 
    \begin{center}
        $0<\epsilon=\min(c_{ib_0}-p_{ib_0},p_{jb_0}-c_{jb_0},p_{ib_1}-c_{ib_1}, \alpha-p_{jb_1} )$.
    \end{center}
     For agent $i$, 
   let $q_{ib_0}=p_{ib_0}+\epsilon$, $q_{ib_1}=p_{ib_1}-\epsilon$, and for all $a\in A/\{b_0,b_1\}$, $q_{ia}=p_{ia}$.
   For agent $j$,  let $q_{jb_0}=p_{jb_0}-\epsilon$, $q_{jb_1}=p_{jb_1}+\epsilon$, and for all $a\in A/\{b_0,b_1\}$, $q_{ja}=p_{ja}$.
   For any agent $i'\in N/\{i,j\}$ let $q_{i'a}=p_{ia}$ for all $a\in A$.

   For agent $j$, 
  \begin{center}
      $|q_{jb_0}-c_{jb_0}|+|q_{jb_1}-c_{jb_1}|=
      |p_{jb_0}-\epsilon-c_{jb_0}|+|p_{jb_1}+\epsilon-c_{jb_1}|=$
      $ |p_{jb_0}-c_{jb_0}|-\epsilon + |p_{jb_1}-c_{jb_1}+\epsilon| $ 
      $ \leq |p_{jb_0}-c_{jb_0}|+ |p_{jb_1}-c_{jb_1}|$.
  \end{center} Therefore, $\mathbf{d}(c_j,q_j)\leq\mathbf{d}(c_j,p_j)$. Also, for any agent $i'\in N/\{i,j\}$, $\mathbf{d}(c_{i'},q_{i'})=\mathbf{d}(c_{i'},p_{i'})$. For agent $i$, we have $\mathbf{d}(c_i,q_i)<\mathbf{d}(c_i,p_i)$, and thus $Q$ strictly lottery-dominates $P$.

  The value of $\epsilon$ is equal to one the values $c_{ib_0}-p_{ib_0},p_{jb_0}-c_{jb_0},p_{ib_1}-c_{ib_1}$, or $\alpha-p_{jb_1}$. So for the random matching $Q$, at least one of the following equalities hold: 
  
  \begin{equation}\label{eq:reg1}
      q_{ib_0}=c_{ib_0},~q_{ib_1}=c_{ib_1},~q_{jb_0}=c_{jb_0}, ~q_{jb_1}=c_{jb_1}~ or~q_{jb_1}=1.
  \end{equation}

If $Q$ is same-sided, then we are done. Otherwise, we repeat the above process for $Q$ until we obtain a same-sided bistochastic matrix. At each repetition, at least one of the equalities in (\ref{eq:reg1}) holds true for the obtained $Q$ and some agents $i,j\in A$, which guarantees that we cannot repeat the process an infinite number of times. After a finite number of repetitions, we will finally reach a bistochastic matrix $Q$ that is same-sided and strictly lottery dominates $P$.     
\end{proof}

\begin{proof} Proof of Proposition~\ref{prop:Between}.

\noindent Since $c'_i$ is between $c_i$ and $\mu_i(c)$, and $\mu$ is same-sided, we have
 for all $a\in A$
    \begin{itemize}
        \item if $a\in ED(c)$ then $\mu_{ia}(c)\leq c'_{ia}\leq c_{ia}$,
        \item if $a\in UN(c)$ then $c'_{ia}= c_{ia}$, and
        \item if $a\in ES(c)$ then $c_{ia}\leq c'_{ia}\leq \mu_{ia}(c)$.
    \end{itemize}
\noindent Since $\mu$ is same-sided (efficient), we have 
  
\noindent    $\mathbf{d}(c_i,\mu_i(c))=  \sum_{a\in ED(c)}(c_{ia}-\mu_{ia}(c))+ \sum_{a\in ES(c)}(\mu_{ia}(c)-c_{ia})$

\noindent    $= \sum_{a\in ED(c)}\bigg( (c_{ia}-c'_{ia})+(c'_{ia}-\mu_{ia}(c))\bigg)+      
      \sum_{a\in ES(c)}\bigg((\mu_{ia}(c)-c'_{ia})+(c'_{ia}-c_{ia})\bigg)  $ 

 \noindent     $= \mathbf{d}(c_i,c'_i)+  \mathbf{d}(c'_i,\mu_i(c))$.

 \noindent     Therefore, 
  \begin{equation}\label{eq:c1}
      \mathbf{d}(c_i,\mu_i(c))=  \mathbf{d}(c_i,c'_i)+  \mathbf{d}(c'_i,\mu_i(c)).     
   \end{equation} 
 \noindent  By strategy proofness, we derive  
    \begin{equation} \label{eq:c2}
       \mathbf{d}(c_i,\mu_i(c))\leq \mathbf{d}(c_i,\mu_i((c'_i,c_{-i})) )
    \end{equation} and
         
       \begin{equation}\label{eq:c3}
         \mathbf{d}(c'_i,\mu_i((c'_i,c_{-i})))\leq \mathbf{d}(c'_i,\mu_i(c)).  
       \end{equation} 
   \noindent  By triangle inequality, we have  
 \begin{equation} \label{eq:c4}
 \mathbf{d}(c_i,\mu_i((c'_i,c_{-i})))\leq  \mathbf{d}(c_i,c'_i) + \mathbf{d}(c'_i,\mu_i((c'_i,c_{-i}))).     
 \end{equation}
By Equations (\ref{eq:c3}) and (\ref{eq:c4}), we derive  $\mathbf{d}(c_i,\mu_i((c'_i,c_{-i})))\leq  \mathbf{d}(c_i,c'_i) + \mathbf{d}(c'_i,\mu_i(c))$, and by (\ref{eq:c1}), we have  $\mathbf{d}(c_i,\mu_i((c'_i,c_{-i})))\leq \mathbf{d}(c_i,\mu_i(c))$. Considering (\ref{eq:c2}), we conclude 
\begin{equation}\label{eq:resul}
 \mathbf{d}(c_i,\mu_i(c))=\mathbf{d}(c_i,\mu_i((c'_i,c_{-i})).   
\end{equation}
 Since $c'_i$ is between $c_i$ and $\mu_i(c)$,  by in-betweenness property,  for all $a\in ED(c)$, $ \mu_{ia}((c'_i,c_{-i}))\leq \mu_{ia}(c)$, for all $a\in ES(c)$, $ \mu_{ia}((c'_i,c_{-i}))\geq \mu_{ia}(c)$, and  for all $a\in UN(c)$, $ \mu_{ia}((c'_i,c_{-i}))= \mu_{ia}(c)$. Using equality~(\ref{eq:resul}), we derive for all $a\in A$, $\mu_{ia}(c)=\mu_{ia}((c'_i,c_{-i}))$.   
\end{proof}


\subsection{Proofs for Section~\ref{sec:Mechan}}\label{app:Mechan}

\begin{proof} Proof of Proposition~\ref{Theorem:URL-properites}.

We consider an arbitrary  sequence of agents, denoted as $\alpha$, and an arbitrary sequence of objects, denoted as $\beta$. For simplicity, instead of using the notation $URC^{\alpha,\beta}$, we will use the shorthand $URC$. We also  Let $c=(c_i)_{i\in N}$ be an arbitrary profile.

\noindent Proof of \textbf{Efficiency}. 

  It is easy to show that the outcome of the every URC mechanism  is same-sided and by Proposition~\ref{eff-same}, efficient. According to the definition of phase 1 in URC mechanisms, for every $a\in ED(c)$, for all $i\in N$, we have $URC_{ia}(c)=UR_i(c_a)$, and for all $a\not \in ED(c)$, $w_{ia}=c_{ia}$. As the uniform rule is same sided, we have for all $a\in ED(c)$, $URC_{ia}(c)\leq c_{ia}$. For all $a\not\in ED(c)$,  in phase 2 of URC, some amount is added to $w_{ia}$, and  since $w_{ia}=c_{ia}$,  we will have $c_{ia}\leq URC_{ia}(c)$. Therefore, $URC(c)$ is same-sided.
  
\vspace{0.2cm}

  \noindent Proof of \textbf{Strategy Proofness}. 
  
  First of all since URC is same-sided,   by Lemma~\ref{lem:distance-sameside}, Equation~(\ref{distance-sameside}),
  \begin{equation}\label{eq:sp1}
      \mathbf{d}(c_i,URC_i(c))=2\times \sum_{a\in ED(c)}(c_{ia}-URC_{ia}(c)). 
  \end{equation}
  
    Since URC is same-sided, we have for all $a\in ED(c)$, $URC_{ia}(c)\leq c_{ia}$. We partition $ED(c)$ into subsets $S_1$ and $S_2$ where $a\in S_1$  if and only if $c_{ia}>URC_{ia}(c)$, and $a\in S_2$  if and only if $c_{ia}=URC_{ia}(c)$.  So, (\ref{eq:sp1}) implies
    \begin{equation}\label{eq:sp2}
        \mathbf{d}(c_i,URC_i(c))=2\times \sum_{a\in S_1}(c_{ia}-URC_{ia}(c)).
    \end{equation} If $S_1=\emptyset$, then the distance is equal zero and agent $i$ cannot get better off by misreporting. So, we assume $S_1\neq \emptyset$.
    
In URC mechanisms, for every $a\in S_1$, before agent $i$ closes his tap on tank $a$, the liquid in tank $a$ is exhausted. 
    Suppose that agent $i$ misreports $c'_i$ instead of $c_i$. For every $a\in S_1$,
    \begin{itemize}
        \item if $c_{ia}\leq c'_{ia}$ then before agent $i$ closes his tap on tank $a$, the liquid in tank $a$ is exhausted. So, $URC_{ia}((c'_i,c_{-i}))=URC_{ia}(c)$. This implies $(c_{ia}-URC_{ia}(c))= (c_{ia}-URC_{ia}((c'_i,c_{-i})))$, 
        \item if $c'_{ia}< c_{ia}$ then either again before agent $i$ closes his tap on tank $a$, the liquid in tank $a$ is exhausted, and $URC_{ia}((c'_i,c_{-i}))=URC_{ia}(c)$, or  agent $i$ closes his tap exactly when he receives $c'_{ia}$, and thus $URC_{ia}((c'_i,c_{-i}))=c'_{ia}\leq URC_{ia}(c)$. This implies $(c_{ia}-URC_{ia}(c))\leq (c_{ia}-URC_{ia}((c'_i,c_{-i})))$.
    \end{itemize}
Therefore, 
\begin{equation}\label{eq:sp3}
\sum_{a\in S_1}(c_{ia}-URC_{ia}(c))\leq \sum_{a\in S_1}(c_{ia}-URC_{ia}((c'_i,c_{-i}))).    
\end{equation}

We have $\sum_{a\in A}c_{ia}-\sum_{a\in A}URC_{ia}((c'_i,c_{-i}))=1-1=0$. 

\begin{equation*}
  0=  \sum_{a\in A}c_{ia}-\sum_{a\in A}URC_{ia}((c'_i,c_{-i}))=\sum_{a\in S_1}(c_{ia}-URC_{ia}((c'_i,c_{-i})))+ \sum_{a\in A\setminus S_1}(c_{ia}-URC_{ia}((c'_i,c_{-i})))
\end{equation*}
  Because the above equality  is equal zero, there are some objects $b_1,b_2,...,b_m\in A\setminus S_1$ such that for all
  $k$, $(c_{ib_k}-URC_{ib_k}((c'_i,c_{-i})))<0$, and 
  \begin{center}
   $-\sum_{k=1...m}(c_{ib_k}-URC_{ib_k}((c'_i,c_{-i})))\geq \sum_{a\in S_1}(c_{ia}-URC_{ia}((c'_i,c_{-i})))$.   
  \end{center}
  Therefore, the total distance of $URC_{ia}((c'_i,c_{-i}))$ from $c_{ia}$ for objects $a\in S_1\cup \{b_1,b_2,...,b_m\}$ is greater than $2\times \sum_{a\in S_1}(c_{ia}-URC_{ia}((c'_i,c_{-i}))$. Using (\ref{eq:sp2}) and (\ref{eq:sp3}), we have the total distance of $URC_{ia}((c'_i,c_{-i}))$ from $c_{ia}$ for objects $a\in S_1\cup \{b_1,b_2,...,b_m\}$ is greater than $\mathbf{d}(c_i,URC_i(c))$. Therefore, agent $i$ by misreporting $c'_i$ gets further from his ideal lottery $c_i$.

 \vspace{0.2cm}
 \noindent Proof of \textbf{Replacement Monotonicity}.
 
Let  $i\in N$ be an arbitrary. Let $c'_i$ be such that  
\begin{itemize}
    \item $ED(c)=ED((c'_i,c_{-i}))$, and
    \item for all $a\in ED(c)$, $c'_{ia}\leq c_{ia}$.
\end{itemize}
Let $a\in ED(c)$ be arbitrary. According to the definition of phase 1 in URC mechanisms, for all  $j\in N$, $URC_{ja}(c)=UR_j(c_a)$, and $URC_{ja}((c'_i,c_{-i}))=UR_j(c'_a)$ where $c'_a=(c'_{ia},c_{-ia})$. It is easy to show that since $c'_{ia}\leq c_{ia}$ then for every other agents  $j\in N\setminus \{i\}$, in the uniform rule mechanism, agent $j$ can obtain more or at least an equal amount from object $a$. In other words,  for all   $j\in N\setminus \{i\}$, $UR_j(c_a)\leq UR_j(c'_a)\leq c_{ja}$.  Thus for all $a\in ED(c)$,   $(c_{ja}-URC_{ja}(c))\geq (c_{ja}-URC_{ja}((c'_i,c_{-i})))$. 
Utilizing Lemma~\ref{lem:distance-sameside}, Equation~(\ref{distance-sameside}), we have for all $j\in N\setminus\{i\}$,
\begin{center}
    $\mathbf{d}(c_j,URC_j(c))\geq \mathbf{d}(c_j,URC_j((c'_i,c_{-i}))$,
\end{center} which is equivalent to $URC_j((c'_i,c_{-i})\succeq_{c_j} URC_j(c)$.

\vspace{0.2cm}

\noindent Proof of \textbf{Non-bossiness}.

  Let $i\in N$, and $c'_i$ be such that for all $a\in ED(c)$, $URC_{ia}((c'_i,c_{-i}))=URC_{ia}(c)$.   
  Let $a\in ED(c)$ be arbitrary. There are two cases possible:
  \begin{itemize}
      \item[1.] If, when agent $i$ closes his tap on tank $a$, there exists still liquid in the tank, then closing the tap later $(c'_{ia}>c_{ia})$ or closing the tap sooner $(c'_{ia}<c_{ia})$ would result in $URC_{ia}((c'_i,c_{-i}))\neq URC_{ia}(c)$. So, we must have $c'_{ia}=c_{ia}$. Thus, for all $j\in N$, $URC_{ja}((c'_i,c_{-i}))=URC_{ja}(c)$.
      \item[2.] If, when agent $i$ closes his tap on tank $a$, already the liquid in the tank is exhausted, then 
      \begin{itemize}
          \item if he closes the tap later $(c'_{ia}>c_{ia})$, it does not effect the process, and thus for all $j\in N$, $URC_{ja}((c'_i,c_{-i}))=URC_{ja}(c)$.
          \item If he closes the tap sooner $(c'_{ia}<c_{ia})$ but not sooner than the  time when the liquid in the tank is not yet exhausted,   then again it does not effect the process, and thus for all $j\in N$, $URC_{ja}((c'_i,c_{-i}))=URC_{ja}(c)$.

          \item If he closes the tap sooner $(c'_{ia}<c_{ia})$ such that   the liquid in the tank is not yet exhausted, then the amount that agent $i$ receives in phase 1, i.e.,   $w_{ia}((c'_i,c_{-i}))$ is less than $URC_{ia}(c)$. If
          $(\sum_{j\in N\setminus\{i\}}c_{ja})\leq (1-URC_{ia}(c))$,       
          then for all $j\in N$, $URC_{ja}((c'_i,c_{-i}))=URC_{ja}(c)$, and we are done. If $(\sum_{j\in N\setminus\{i\}}c_{ja})> (1-URC_{ia}(c))$, then since agent $i$ closes sooner other agents receive more, that is $(\sum_{j\in N\setminus\{i\}}URC_{ia}((c'_i,c_{-i})))>(1-URC_{ia}(c))$. So,  the amount of object $a$ remaining for phase 2 (after phase 1) is insufficient for agent $i$ to reach  $URC_{ia}(c)$, and thus $URC_{ia}((c'_i,c_{-i}))<URC_{ia}(c)$, and the assumption $URC_{ia}((c'_i,c_{-i}))=URC_{ia}(c)$ does not hold true.
      \end{itemize}
      
  \end{itemize}

\vspace{0.2cm}

\noindent Proof of \textbf{In-Betweenness}.

\noindent Let $i\in N$ be arbitrary, and $c'_i$ be between $c_i$ and $URC_i(c)$.  
We prove
\begin{center}
for all $a\in A$, $URC_{ia}((c'_i,c_{-i}))=URC_{ia}(c)$.    
\end{center}
 
\noindent 1) For $a\in UN(c)$,  due to same-sideness of URC,  we have  $URC_{ia}(c)=c_{ia}$. Since $c'_i\in Between(c_i,URC_i(c))$, we have $c'_{ia}=c_{ia}$, and thus   $URC_{ia}((c'_i,c_{-i}))=URC_{ia}(c)$.

\vspace{0.1cm}

\noindent 2) Let $a\in ED(c)$ be arbitrary.  Since $c'_i\in Between(c_i,URC_i(c))$, we have
\begin{equation}\label{eqbet1}
   URC_{ia}(c)\leq c'_{ia}\leq c_{ia}. 
\end{equation}
 Regarding the profile $c$, two cases are possible:
 \begin{itemize}
     \item Case 1: when agent $i$ closes his tap on  tank $a$, there exists still   liquid in the tank. In this case, we have $URC_{ia}(c)= c_{ia}$.  By (\ref{eqbet1}), we have $c'_{ia}= c_{ia}$ and thus $URC_{ia}((c'_i,c_{-i}))=URC_{ia}(c)$. 
     \item Case 2: when agent $i$ closes his tap on tank $a$, the liquid is already exhausted. So, the amount of liquid $a$ that agent $i$ takes is dependent on the ideals of other agents for object $a$. In this case, due to (\ref{eqbet1}), since $ c'_{ia}\geq URC_{ia}(c)$, when agent $i$ closes his tap on the tank for  ideal $c'_{ia}$, the liquid is also exhausted already. Therefore, $URC_{ia}((c'_i,c_{-i}))=URC_{ia}(c)$.
 \end{itemize}
 Therefore, we have for all $i\in N$, for all $a\in ED(c)\cup UN(c)$
 \begin{equation}\label{beteq43}
    URC_{ia}((c'_i,c_{-i}))=URC_{ia}(c)\leq c_{ia}
\end{equation}
3) Let $a\in ES(c)$, since $c'_i$ is between $c_i$ and $URC_i(c)$, we have

\begin{equation}\label{beteq2}
    c_{ia}\leq c'_{ia}\leq URC_{ia}(c).
\end{equation}

In phase 1, for the profile $(c'_i,c_{-i})$, agent $i$ takes $w_{ia}=c'_{ia}\leq URC_{ia}(c)$ amount of object $a$, and other agents $j\in N\setminus\{i\}$ takes $w_{ia}=c_{ia}$. So, after the end of phase 1, for all $a\in ES(c)$, for every agent $i\in N$,
\begin{equation}\label{beteq44}
w_{ia}\leq  URC_{ia}(c).    
\end{equation}
Now let's analyze  phase 2 of URC, and suppose that for some $t$, $\alpha(t)=i$, and for some $s$, $\beta(s)=a$.
Let $t=1$ and $s=1$, and we are at the start of phase 2, and it is the turn of agent $\alpha(t)=i$ to take some amount of $\beta(s)=a$. Recall that in the URC mechanism $w_{jb}$ represents the current amount of object $b$ taken by agent $j$. By (\ref{beteq43}) and (\ref{beteq44}), at the current state:

\begin{itemize}
    \item[*)] Agent $i$ has taken  $(1-\sum_{b\in A\setminus\{a\}}w_{ib})\geq (1-\sum_{b\in A\setminus\{a\}}URC_{ib}(c))= URC_{ia}(c)$ of other objects except object $a$. The inequality $(1-\sum_{b\in A\setminus\{a\}}w_{ib})\geq  URC_{ia}(c)$ means that the bucket of agent $i$ has capacity to contain $ URC_{ia}(c)$ of object $a$.
    \item[**)] Other agents except $i$ has taken  $(1-\sum_{j\in N\setminus\{i\}}w_{ja}) \geq (1-\sum_{j\in N\setminus\{i\}} URC_{ja}(c)) = URC_{ia}(c)$. amount of object $a$. The inequality $(1-\sum_{j\in N\setminus\{i\}}w_{ja}) \geq  URC_{ia}(c)$ means that there exists still at least $URC_{ia}(c)$ amount of liquid $a$ totally in tank $a$ and in bucket of agent $i$.
\end{itemize} In phase 2, when agent $\alpha(t)=i$ wants to take from object $\beta(s)=a$, he takes as much as he can until either the object $a$ is exhausted (the tank $a$ is empty) or he has no capacity (his bucket is full). 

By (*) and (**), agent $i$, in his turn, can reach  $URC_{ia}(c)$ for liquid $a$, in his bucket, and thus

\begin{equation}\label{EQES1}
    URC_{ia}((c'_i,c_{-i}))\geq URC_{ia}(c).
\end{equation}
 Furthermore, we argue  $URC_{ia}((c'_i,c_{-i}))\leq URC_{ia}(c)$. We compare the bucket of agent $i$ at the beginning of phase 2, for two profiles $c$ and $(c'_i,c_{-i})$. Let $buk_{i}(c)$ refers to the bucket of agent $i$ at the beginning of phase 2 for profile $c$, and similarly $buk_{i}((c'_i,c_{-i}))$ refers to the bucket of agent $i$ at the beginning of phase 2 for profile $(c'_i,c_{-i})$. Because of (\ref{beteq43}), for every object $b\in ED(c)\cup UN(c)$, the amount of liquid $b$ in $buk_{i}(c)$ is the same as in  $buk_{i}((c'_i,c_{-i}))$, and since $c'_i$ is between $c_i$ and $URC_i(c)$, for every object $b\in ES(c)$, at the beginning of phase 2, according to the instruction of URC mechanisms, the amount of liquid $v$ in $buk_{i}(c)$ is equal to $c_{ib}$ not more than the amount of $a$ in  $buk_{i}((c'_i,c_{-i}))$ which is equal to $c'_{ib}$. So, at the beginning of phase 2, 
\begin{center}
 $buk_{i}(c)$ has more free capacity than $buk_{i}((c'_i,c_{-i}))$~ ($\#$)   
\end{center}
 Also, let $tank_a(c)$ be the status of tank $a$ running URC for profile $c$, and $tank_a((c'_i,c_{-i}))$ be the status of tank $a$ running URC for profile $(c'_i,c_{-i})$. Since $c'_i$ is between $c_i$ and $URC_i(c)$, at the beginning of phase 2,  
 \begin{quote}
   the amount of liquid in $tank_a(c)$ is not less than the the amount of liquid in $tank_a((c'_i,c_{-i}))$~ ($\#\#$)
 \end{quote}
Since  $buk_{i}(c)$ has more free capacity than $buk_{i}((c'_i,c_{-i}))$~ ($\#$) and the amount of liquid in $tank_a(c)$ is not less than the the amount of liquid in $tank_a((c'_i,c_{-i}))$~ ($\#\#$), for profile $c$, agent $i$ takes out from object $a$ not less than the amount he takes outs from object $a$ for profile $(c'_i,c_{-i})$, so we conclude  
\begin{equation}\label{EQES2}
   ~URC_{ia}((c'_i,c_{-i}))\leq URC_{ia}(c).
\end{equation}
Therefore, (\ref{EQES1}) and (\ref{EQES2}) implies:
\begin{equation}\label{eq:betq4erty} 
URC_{ia}((c'_i,c_{-i}))= URC_{ia}(c).    
\end{equation}
So, we proved for $t=1$, and $s=1$ that Equation \ref{eq:betq4erty} holds true. 
For $t=1$ and $s=2$, since we already showed Equation \ref{eq:betq4erty} for $t=1$ and $s=1$, the  conditions *) and **) yet holds true and we can repeat the same argument, and prove that For $t=1$ and $s=2$, also Equation \ref{eq:betq4erty} holds. By induction, we have for all $k=1...n$, for $t=1$ and $s=k$, Equation \ref{eq:betq4erty} holds true.
For $t>1$, since 
\begin{itemize}
    \item the ideal lottery of all agents $\alpha(1), \alpha(2),...,\alpha(t-1)$ are the same in both profiles $c$ and $(c'_i,c_{-i})$,
    \item for every excess supply object $b$, $c'_{ib}\leq \mu_{ib}(c)$,
    \item for all $j\in N$, for all $b\in ED(c)\cup UN(c)$, $URC_{ib}(c)=URC_{ib}((c'_i,c_{-i}))$, 
\end{itemize} when we run the mechanism for profile $(c'_i,c_{-i})$, agents $\alpha(1), \alpha(2),...,\alpha(t-1)$,   in their turn, take out the same amount of objects as they would for the profile $c$. Thus,  again conditions *) and **)  holds true and we can repeat the same argument for $t>1$.
Therefore, 
\begin{equation*}  for~all~a\in EC(c),~
URC_{ia}((c'_i,c_{-i}))= URC_{ia}(c).    
\end{equation*}

In this way, we proved 
\begin{equation*}  for~all~a\in A,~
URC_{ia}((c'_i,c_{-i}))= URC_{ia}(c),   
\end{equation*}
which implies that URC mechanisms satisfy in-betweenness.

 \vspace{0.2cm}

\noindent Proof of \textbf{Envy Freeness}.

The proof of envy freeness is derived from Lemma~\ref{lem:distance-sameside}, Equation~(\ref{distance-sameside}), and the fact that the uniform rule is envy free. By Equation~(\ref{distance-sameside}), we can calculate the distance based on excess demand objects, and all excess demand objects are exhausted in phase 1 of URC, and divided among the agents uniformly. Since the uniform rule is envy free, we have URC mechanisms are also envy free.

 \vspace{0.2cm}

\noindent Proof of \textbf{Anonymity}.

 The proof of anonymity is straightforward. By Lemma~\ref{lem:distance-sameside}, Equation~(\ref{distance-sameside}),  we can calculate the distance based on excess demand objects, and all excess demand objects are exhausted in phase 1 of URC. So, the distances are independent on phase 2 of URC mechanisms, and permutations does not effect phase 1 of URC mechanisms. 
\end{proof}

\begin{proof} Proof of Proposition~\ref{prop:ESimpossible}.
 
\noindent  Let $|A|=|N|=n\geq 3$, and $a,b\in A$. For every agent $j\in N$, let $z^j$ be a profile where
    \begin{itemize}
        \item $ES(z^j)=\{a\}$, 
        \item for all $i\in N$, $z^j_{ia}=0$, 
        \item $z^j_{jb}=1$ and $b\in UN(z^j)$.
    \end{itemize}
   Let's assume, for the sake of contradiction, that there exists an efficient mechanism $\mu$ satisfying the following assumption:
              \begin{quote}
           for every two profiles $c$ and $c'$ with $ES(c)=ES(c')$, that coincide on $ES(c)$, we have for every $i\in N$, $\mathbf{d}(\mu_i(c),c_i)=\mathbf{d} (\mu_i(c'),c'_i)$.
       \end{quote}
        Because $\mu$ is efficient (same-sided, as per Proposition~\ref{eff-same}),  and $b\in UN(z^j)$, we have $\mu_{jb}(z^j)=z^j_{jb}=1$ implying
    $\mathbf{d}(\mu_j(z^j), z^j_j) = 0$, implying $\mu_j(z^j)= z^j_j$, specifically $\mu_{ja}(z^j)= z^j_{ja} = 0$.
    
       Consider an arbitrary profile $x$ with $ES(x)=ES(z^j)=\{a\}$, where $x$ coincides with $z^j$ on the set~$\{a\}$.     
       As $x$ and $z^j$ coincide on $ES(z^j)$, according to the  assumption for the mechanism $\mu$, we conclude  $\mathbf{d} (\mu_j(x),x_j)=\mathbf{d}(\mu_j(z^j),z^j_j)=0$. leading to $\mu_j(x) = x_j$, especially $\mu_{ja}(x) = x_{ja} = 0$. So,
     \begin{quote}
        for every  profile $x$ with $ES(x)=ES(z^j)=\{a\}$, where $x$ coincides with $z^j$ on the set $\{a\}$ we have $\mu_{ja}(x) = x_{ja} = 0$.  
     \end{quote}
       Let $x$ be a profile such that $ES(x)=\{a\}$, and for all agent $i\in N$, $x_{ia}=0$. For every agent $j\in N$, the profile $x$ coincides with $z^j$ on the set $\{a\}$, and thus we have:  for every $j\in N$, $\mu_{ja}(x) = x_{ja} = 0$ which implies $\sum_{i\in N}\mu_{ia}=0<1$. It contradicts with object feasibility of the mechanism $\mu$. 
 \end{proof}

\subsection{Proofs for Section~\ref{sec:charact}}\label{app:charact}

\begin{proof} Proof of Lemma~\ref{prop-CSP}:

      \noindent 
     Let $c$ and $c'$ be two profiles   such that
     \begin{itemize}
         \item[]$ED(c)=ED(c')=\{a\}$, 
         \item[] for an agent $j\in N$, for all $b\in ED(c)$, $c_{jb}=c'_{jb}$, and
         \item[]  for all agents $i\in N\setminus\{ j\}$, $c'_i=c_i$, i.e. $c'=(c'_j,c_{-j})$.
     \end{itemize}  
           Since $\mu$ is strategy proof, we have
      \begin{equation}\label{eq:CSP1}
        \mathbf{d}(c_j,\mu_j(c))\leq \mathbf{d}(c_j,\mu_j(c')).   
      \end{equation}
        
   Furthermore, due to (\ref{eq:CSP1}) and having that  for all objects  $b\in ED(c)$,  $c'_{jb}\leq c_{jb}$,  we can apply Definition~\ref{DEF:RM} of replacement monotonicity, and conclude as follows:
        \begin{equation}\label{eq:CSPQ1}
           for~ all ~i\in N\setminus\{j\}, ~\mathbf{d}(c_i,\mu_i(c))\geq \mathbf{d}(c_i,\mu_i(c')).   
        \end{equation}
              Similarly, because $\mu$ is strategy proof, we have 
      \begin{equation}\label{eq:CSP2}
        \mathbf{d}(c'_j,\mu_j(c'))\leq \mathbf{d}(c'_j,\mu_j(c)).  
      \end{equation}
        Since for all objects  $b\in ED(c')$,  we have $c_{jb}\leq c'_{jb}$, and due to (\ref{eq:CSP2}), we can apply the replacement monotonicity property, leading to the following conclusion:
    
    \begin{equation}\label{eq:CSPQ2}
         for ~all ~i\in N\setminus\{j\},~\mathbf{d}(c_i,\mu_i(c'))\geq \mathbf{d}(c_i,\mu_i(c)).
    \end{equation}
             Due to (\ref{eq:CSPQ1}) and (\ref{eq:CSPQ2}), we have:    
  \begin{equation}\label{eq:CSP3}
       for~ all~ i\in N\setminus\{j\},  ~\mathbf{d}(c_i,\mu_i(c'))=\mathbf{d}(c_i,\mu_i(c)). 
  \end{equation} 
    
\noindent By efficiency and using Lemma~\ref{lem:distance-sameside}, Equation~(\ref{distance-sameside}), we have for all $i\in N\setminus\{j\}$, $\mathbf{d}(c_i,\mu_i(c))=2\times \sum_{b\in ED(c)} (c_{ib}-\mu_{ib}(c))$ and $\mathbf{d}(c_i,\mu_i(c'))=2\times \sum_{b\in ED(c')}  (c'_{ib}-\mu_{ib}(c'))$. Since for all $i\neq j$, $c'_i=c_i$, and $ED(c)=ED(c')$,  using (\ref{eq:CSP3}), we derive 

\begin{equation}\label{eq:CSP4}
    for~ all ~i\in N\setminus\{j\}, ~ \sum_{b\in ED(c)}\mu_{ib}(c)=\sum_{b\in ED(c)}\mu_{ib}(c').
\end{equation}
By object feasibility and (\ref{eq:CSP4}),   we have 
for all $i\in N$, $\sum_{b\in ED(c)}\mu_{ib}(c)=\sum_{b\in ED(c)}\mu_{ib}(c')$.
Since $ED(c)=ED(c')=\{a\}$, we have   
for all $i\in N$, $\mu_{ia}(c)=\mu_{ia}(c')$. 
\end{proof}

   \begin{proof} Proof of Lemma~\ref{lem:Twoprofiles}. 
 
\noindent  Let $c$ and $c'$ be two profile with $ED(c)=ED(c')=\{a\}$, and   $c_a=c'_a$. 
We start from the profile $c^0=c$ and construct a sequence of profiles  \begin{center}
         $c=c^0,c^1,c^2,c^3,\ldots,c^m=c'$ 
     \end{center} such that for all $t<m$, $c^{t}$ and $c^{t+1}$ satisfies conditions of Lemma~\ref{prop-CSP} for some agent $j_t\in N$, and thus we have  for all $i\in N$, $\mu_{ia}(c^t)=\mu_{ia}(c^{t+1})$.

Visualize each  non-excess demand object (objects in the set $A\setminus\{a\}$) as a warehouse with a capacity 1. In this setup, we have a collection of warehouses: $A\setminus\{a\}=\{w_1,w_2,\ldots,w_{n-1}\}$. We imagine that each agent has occupied warehouses with their items, and each profile corresponds to an occupation of warehouses with the agent's items.

We start at the initial profile $c^0=c$, where for each $k\in \{1,2,\ldots,n-1\}$, 
each agent $i\in N$  has occupied a quantity $c_{iw_k}$ of the warehouse $w_k$.
 We denote the available capacity of each warehouse $w_k$, by $Q(w_k)$. We say a warehouse does not have available capacity whenever $Q(w_k)= 0$. Initially, for every $k\in \{1,2,\ldots,n-1\}$, $Q(w_k)=1-\sum_{i\in N}c_{iw_k}$. Note that since the only excess demand object of $c^0=c$ is object $a$, we have for all $k$, $Q(w_k)\geq 0$. The following algorithm explains   how we convert profile $c$ to $c'$ by asking agents to move their items between warehouses.  
 \begin{itemize}
     \item  Initially, for each $k\in \{1,2,\ldots,n-1\}$, each agent $i\in N$  has occupied a quantity $c_{iw_k}$ of the warehouse $w_k$.
     \item  At the end of the algorithm, for each $k\in \{1,2,\ldots,n-1\}$, each agent $i\in N$  has occupied a quantity $c'_{iw_k}$ of the warehouse $w_k$.
 \end{itemize}

\begin{itemize}
\item[0.] Set $t=1$ and $s=1$.

    \item[1.] For agent $t$, we  want to change $c_{tw_s}$  to $c'_{tw_s}$.
    \begin{itemize}
        \item[1.1.] If $c_{tw_s}>c'_{tw_s}$ then we ask agent $t$ to  move the excess, denoted as $\epsilon=(c_{tw_s}-c'_{tw_s})$, from the warehouse $w_s$ to warehouses  $w_{s+1},w_{s+2},\ldots,w_{n-1}$. To do this, we ask him first transfer an amount $v=\min(\epsilon,\sum_{k=s+1}^{n-1}Q(w_k))$ from $w_s$ to warehouses  $w_{s+1},w_{s+2},\ldots,w_{n-1}$. If $v<\epsilon$,   agents $t+1,t+2,\ldots, n$ are sequentially requested to move their items as much as possible from warehouses 
 $w_{s+1},w_{s+2},...,w_{n-1}$ to those warehouses $w_1,w_2,...,w_s$ that have available capacity, effectively freeing up space in warehouses $w_{s+1},w_{s+2},...,w_{n-1}$. Then agent $t$ proceeds to move as much as possible from  $w_s$ to warehouses  $w_{s+1},w_{s+2},...,w_{n-1}$. This process continues until agent $i$ has the quantity $c'_{tw_s}$ in the warehouse $w_s$.
 \item[1.2.] If $c_{tw_s}<c'_{tw_s}$ then we ask agent $t$ to make up for the shortfall, represented as $\epsilon=(c'_{tw_s}-c_{tw_s})$,  by moving items from warehouses $w_{s+1},w_{s+2},\ldots,w_{n-1}$  to warehouse $w_s$. He moves items from $w_{s+1},w_{s+2},\ldots,w_{n-1}$ to $w_s$ until one of the following occurs: either he has moved $\epsilon$ items, or, warehouses $w_{s+1},w_{s+2},\ldots,w_{n-1}$ are empty, or $w_s$ has no available capacity. In the first case, the goal is achieved.  The second case does not happen, 
 because $\sum_{k\geq s}c_{tw_k}=\sum_{k\geq s}c'_{tw_k}$, and since $c_{tw_s}<c'_{tw_s}$, $w_{s+1},w_{s+2},\ldots,w_{n-1}$ cannot be empty.
 If the third case occurs,   agents $t+1,t+2,\ldots, n$ are sequentially requested to move their items as much as possible from warehouses 
 $w_{s}$ to warehouses $w_{s+1},w_{s+2},...,w_{n-1}$, effectively freeing up space in warehouses $w_{s}$. Then agent $t$ proceeds to move his items as much as possible from  $w_{s+1},w_{s+2},...,w_{n-1}$  to the warehouse  $w_s$. This process continues until agent $i$ has the quantity $c'_{tw_s}$ in the warehouse $w_s$.

 \item[1.3.] If $s<n-1$, update $s$ to $s+1$, go to Step 1.1. 

 \end{itemize}

\item[2.] If $t<n$, update $t$ to $t+1$, go to Step 1.
 
    \end{itemize}
  For $s=n-1$, since agent $t$ has already changed $c_{tw_s}$ to $c'_{tw_s}$ for all $s<n-1$, according to agent feasibility, the amount of his items that finally ends up in warehouse $w_{n-1}$ is equal to $c'_{tw_{n-1}}$.

For $t=n$, since all agents $1, 2, \ldots, n-1$ have already changed for all $s$, $c_{tw_s}$ to $c'_{tw_s}$, when agent $n$ wants to move his items according to Step 1.1 and Step 1.2, he does not face a situation where the warehouses do not have available capacity.
    
During execution of the algorithm, each  $(c_{ia},(w_{ik})_{1\leq k\leq n-1})_{i\in N}$, which corresponds to
an occupation of warehouses with the agent’s items, forms a profile. 
    These profiles are updated during the execution of the algorithm, gradually converging to the final profile $c'$. Note that in the above algorithm, always one agent move his items between warehouses which are representing non-excess demand objects. So, in above algorithm, every time that an agent moves his items between warehouses conditions of Lemma~\ref{prop-CSP} are fulfilled. Therefore, thanks to Lemma~\ref{prop-CSP}, for all agents $i \in N$, the value of $\mu_{ia}$ remains unaltered for all  profiles $(c_{ia},(w_{ik})_{1\leq k\leq n-1})_{i\in N}$ that occur during the execution of the above algorithm. 
    Therefore, we have for all $i\in N$, $\mu_{ia}(c)=\mu_{ia}(c')$.
 \end{proof}

 \begin{remark}
     One may note that in the proofs of Lemma~\ref{prop-CSP} and Lemma~\ref{lem:Twoprofiles}, we did not make use of the assumption that $ED(c)$ has only one element. Therefore, if $\mu$ is a strategy proof, efficient, and replacement monotonic mechanism, then for any two profiles $c$ and $c'$ with $ED(c)=ED(c')$ and for all $a\in ED(c)$, $c_a=c'_a$, we have the following result: for all $i\in N$, 
     
$\sum_{a\in ED(c)}\mu_{ia}(c)=\sum_{a\in ED(c)}\mu_{ia}(c')$, and  
$\mathbf{d}(c_i,\mu_i(c))=\mathbf{d}(c_i,\mu_i(c'))$.

 \end{remark}

\begin{proof} Proof of Lemma~\ref{lemEDa}.

Let $\mu$ be a strategy proof, efficient, replacement monotonic and anonymous mechanism.
Let $c$ be a profile with $ED(c)=\{a\}$.
By utilizing Lemma~\ref{lem:Twoprofiles}, it becomes evident that for any other profile $c'$ with   $c'_a=c_a$ and $ED(c')=ED(c)={a}$, for every agent $i\in N$, $\mu_{ia}(c')=\mu_{ia}(c)$. So, for every agent $i$, his allocation for object $a$ is independent on the ideals for non-excess demand objects and only is dependent on $c_a$. This means that the function $\pi_a\circ\mu~(-)= (\mu_{ia}(-))_{i\in N}$ is a well-defined function of $c_a$, where  $\pi_a$ is a projection mapping on the allocation for object $a$.
 Since $\mu$ is same-sided, strategy proof   and anonymous, these properties extend to $\pi_a\circ\mu$.  According to the characterization theorem proved by Y. SPrumont, (1999) in page 511 of~\cite{SPRUMONT1991Econometrica}, $\pi_a \circ \mu$ is the uniform rule, and thus for all $i\in N$, $\mu_{ia}(c)=UR_i(c_a)$. Hence, by Lemma~\ref{lem:distance-sameside}, Equation~(\ref{distance-sameside}), for every profile $c$ with $ED(c)=\{a\}$, we have for all $i\in N$, $\mathbf{d}(c_i,\mu_i(c))=2\times (c_{ia}- UR_i(c_a))$.

On the other hand, because URC mechanisms are efficient, utilizing Lemma~\ref{lem:distance-sameside}, Equation~(\ref{distance-sameside}), and  definition of phase 1 of URC mechanisms,   for all $i\in N$,
$\mathbf{d}(c_i,URC_i(c))=2\times (c_{ia}- UR_i(c_a))$.  Therefore, $\mathbf{d}(c_i,\mu_i(c))=\mathbf{d}(c_i,URC_i(c))$,  and we are done. \end{proof}

 
\begin{proof} Proof of  Theorem~\ref{charac}.

Let $\mu$ be a strategy proof, efficient, replacement monotonic, non-bossy, in-between and anonymous mechanism. Let $c=(c_i)_{i\in N}$ be arbitrary profile. Utilizing Corollary~\ref{cor:between},   and Lemma~\ref{lemEDa}, we prove for all $b\in ED(c)$, for all $i\in N$, $\mu_{ib}(c)=UR_i(c_b)$ where $c_b=(c_{ib})_{i\in N}$. Thus for all $i\in N$,  $\mu_i(c)\equiv_{c_i} URC_i(c)$.

\noindent Let $a\in ED(c)$ be arbitrary.   we transform  the profile $c$ into another profile called $z^n$ featuring only a single excess demand object, $ED(z^n)=\{a\}$, and  for every $i\in N$, $\mu_{ia}(c)=\mu_{ia}(z^n)$ .

 For agent $1\in N$, we let
    \begin{center}
        $c'_1=(c_{1a},(\mu_{1b}(c))_{b\in ED(c)\setminus\{a\}},(c_{1b})_{b\in UN(c)},(l_{1b})_{b\in ES(c)}))$
    \end{center} where for each $b\in ES(c)$,   $c_{1b} \leq l_{1b}\leq \mu_{1b}(c)$. The lottery   $c'_1$ is between $c_1$ and $\mu_1(c)$.
    
    We replace $c_1$ with $c'_1$ in profile $c$ and construct the profile $z^1=(c'_1,c_2,c_{3},\ldots c_{n})$. 
By Proposition~\ref{prop:Between}, 
    
\begin{equation}\label{eq:mt1}
  for ~all ~b\in A~ \mu_{1b}(c)=\mu_{1b}(z^1).
\end{equation}
    Since $\mu$ is same-sided, we have
 \begin{equation}\label{eq:mtt}
     ED(z^1)\subseteq ED(c),~ ES(z^1)\subseteq ES(c),~
          a\in ED(z^1), ~z^1_a=c_a 
 \end{equation}
  
   By  non-bossiness (Section~\ref{nonbos}), and (\ref{eq:mt1})  we have 
 \begin{equation}\label{eq:mt2}
  for~ all~j\in N,~   for~ all~b\in ED(c),~\mu_{jb}(z^1)=\mu_{jb}(c).   
 \end{equation}
  
 We repeat the above process on the profile $z^1$ for agent $2\in N$, and construct a lottery $c'_2$ (similar to the way that we construct $c'_1$)
  \begin{center}
        $c'_2=(z^1_{2a},(\mu_{2b}(z^1))_{b\in ED(z^1)\setminus\{a\}},(z^1_{2b})_{b\in UN(z^1)},(l_{2b})_{b\in ES(z^1)}))$
    \end{center} where for each $b\in ES(z^1)$,   $c_{2b} \leq l_{2b}\leq \mu_{2b}(z^1)$. The lottery $c'_2$ is between $c_2$, and $\mu_2(z^1)$. Note that because of (\ref{eq:mt2}) and (\ref{eq:mtt}), we have 
 \begin{center}
        $c'_2=(c_{2a},(\mu_{2b}(c))_{b\in ED(z^1)\setminus\{a\}},(z^1_{2b})_{b\in UN(z^1)},(l_{2b})_{b\in ES(z^1)}))$.
    \end{center}
    
We replace $z^1_2$ with $c'_2$ in profile $z^1$ and construct the profile $z^2=(c'_1,c'_2,c_{3},\ldots c_{n})$.  since $\mu$ is same-sided $ED(z^2)\subseteq ED(z^1)$.   According to the definition of the lottery $c'_2$, and (\ref{eq:mt2}), we have
      \begin{center}
          for agents $i=1,2$, for all objects
    $b\in ED(z^2)\setminus\{a\}$,  
    $z^2_{ib}=\mu_{ib}(c)$.
      \end{center}
Also it is easy to check that $z^2_a=z^1_a=c_a$. 

Since $c'_2$ is between $c_2$, and $\mu_2(z^1)$, by Proposition~\ref{prop:Between}, we derive for all $b\in A$, $\mu_{2b}(z^1)=\mu_{2b}(z^2)$.
With a similar line of reasoning, and by leveraging  Corollary~\ref{cor:between},   we can demonstrate  
\begin{equation}\label{eq:mt4}
 for~ all~i\in N,~ for~ all~b\in ED(z^1),~\mu_{ib}(z^2)=\mu_{ib}(z^1).
 \end{equation}
   If we repeat the above process for all agents $k\in N=\{1,2,...,n\}$,
    we construct profiles
    \begin{center}
      $z^0=c, z^1,z^2,...,z^n$   
    \end{center}
       where for $k<n$:     $z^{k+1}_a=z^k_a=c_a$, $a\in  ED(z^{k+1})\subseteq ED(z^{k})$,  and
 
\begin{equation}\label{eq:mt10}
 for~ all~i\in N,~ for~ all~b\in ED(z^k),~\mu_{ib}(z^{k+1})=\mu_{ib}(z^{k}).
 \end{equation}
 By (\ref{eq:mt10}) and the definitions of profiles $z^1,z^2,...,z^n$, we have for all $k<n$,
\begin{equation}\label{eq:mt5}
for~ agents~ i=1,2,...,k+1,~ for~ all ~
   b\in ED(z^{k+1})\setminus\{a\},~    z^{k+1}_{ib}=\mu_{ib}(c).  
\end{equation}
          For the profile $z^n$ we prove  $ED(z^n)=\{a\}$. Suppose that $d\in ED(z^n)\setminus \{a\}$. By (\ref{eq:mt5}), we have for all $i\in N$, $z^n_{id}=\mu_{id}(c)$. Thus $\sum_{i\in N}z^n_{id}=\sum_{i\in N}\mu_{id}(c)=1$, hence $d\in UN(z^n)$, contradiction. Also, by (\ref{eq:mt10}), we have for every $i\in N$, $\mu_{ia}(c)=\mu_{ia}(z^n)$ .
    

Since  $ED(z^n)=\{a\}$,  by Lemma~\ref{lemEDa}, we have for every $i\in N$,
    $\mu_{ia}(z^n)=UR_i(z^n_a)$. By (\ref{eq:mt10}), for all $i\in N$,
    $\mu_{ia}(c)=\mu_{ia}(z^n)$, thus we conclude $\mu_{ia}(c)=UR_i(c_a)$ for all $i\in N$.

\noindent    As $a\in ED(c)$ is assumed arbitrary, the same   argument holds for all  objects in $ED(c)$ and thus
    \begin{center}
        for all $b\in ED(c)$, for all $i\in N$, $\mu_{ib}(c)=UR_i(c_b)$.
    \end{center} Since   $\mathbf{d}(c_i, \mu_i(c))=2\times\sum_{o\in ED(c)}(c_{io}-\mu_{io}(c))$, we have $\mu$ is welfare equivalent to URC mechanisms, that is for every profile $c$, for every agent $i\in N$, $\mathbf{d}(c_i,\mu_i(c))=\mathbf{d}(c_i,URC_i(c))$.
\end{proof}

 \subsection{Proofs for Section~\ref{sec:indaxiom}}\label{app:indaxion}

\begin{proof} Proof of Proposition~\ref{prop:PDSD}.

\noindent \textbf{Proof of Properties for SDC Mechanisms:} 

Suppose a profile $c=(c_i)_{i\in N}$ is given. Let $\alpha$ be a sequence of agetns and $\beta$ a sequence of objects. In phase 1 of SDC mechanisms, when $w_{\alpha(t)a}=\min(r_a,c_{\alpha(t)a})$, if $a\in ED(c)$, then the object is exhausted in phase 1, and thus nothing remains for phase 2, and for every $i\in N$, $SDC^{\alpha, \beta}_{\alpha(t)a}(c)= w_{\alpha(t)a}\leq c_{\alpha(t)a}$. If $a\in ES(c)\cup UN(c)$, then in phase 1, $w_{\alpha(t)a}=\min(r_a,c_{\alpha(t)a})=c_{\alpha(t)a}$, and thus $SDC^{\alpha, \beta}_{\alpha(t)a}(c)\geq c_{\alpha(t)a}$.
This is an evident that the outcome is same-sided and, therefore, efficient by Proposition~\ref{eff-same}.

The proof that SDC mechanisms are strategy-proof is straightforward. In phase 1, each agent, during their turn, tries to get as close as possible to their peaks.  Note that for all $a\in ED(c)$,  $SDC^{\alpha, \beta}(c)$ is determined in phase 1.
For $a\in ED(c)$, if $c_{\alpha(t)a}\leq r_a$, then $w_{\alpha(t)a}=c_{\alpha(t)a}$, and if $c_{\alpha(t)a}> r_a$ then $w_{\alpha(t)a}=r_a$. Suppose   agent $\alpha(t)$ misreport $c'_i$. 
Then it is easy to see that for all $a\in ED(c)$, $(c_{\alpha(t)a}-SDC^{\alpha, \beta}_{\alpha(t)a}(c))\leq (c_{\alpha(t)a}-SDC^{\alpha, \beta}_{\alpha(t)a}((c'_i, c_{-i}))$. Thus, by Lemma~\ref{lem:distance-sameside}, Equation~(\ref{distance-sameside}), misreporting does not result in a decrease in distance.

The SDC mechanisms are replacement monotonic. Let $c'_i$ be such that  
\begin{itemize}
    \item $ED(c)=ED((c'_i,c_{-i}))$, and
    \item for all $a\in ED(c)$, $c'_{ia}\leq c_{ia}$.
\end{itemize}
Let $s$ be the index such that $\alpha(s)=i$.  For all $t<s$,  for all $a\in A$, $SDC^{\alpha, \beta}_{\alpha(t)a}(c)=SDC^{\alpha, \beta}_{\alpha(t)a}((c'_i, c_{-i}))$. This is because the change made by agent $i$ from $c_i$ to $c'_i$ does not affect the agents who precede him in the sequence.
In the case where $t > s$, consider the remainder of object $a$ at agent $\alpha(t)$'s turn in phase 1, denoted as $r_a(t)$. Also
  let $r'_a(t)$ be the remainder of object $a$ at agent $\alpha(t)$'s turn in phase 1, when agent $\alpha(s)=i$ changed  from $c_i$ to $c'_i$. 
For all $a \in ED(c)$, since $c'_{ia} \leq c_{ia}$, we have $r_a(t) \leq r'_a(t)$. This enables agent $\alpha(t)$ to still have the opportunity to get as close to his peak as he did before changing from $c_i$ to $c'_i$. Consequently, for all $j\in N\setminus\{i\}$, $\mathbf{d}(c_j,\mu_j(c))\geq \mathbf{d}(c_j,\mu_j((c'_i,c_{-i})))$.

To prove non-bossiness, let's consider an agent $\alpha(t)$ who intends to behave in a bossy manner. This agent cannot affect those agents who precede him in the sequence $\alpha$. Additionally, if agent $\alpha(t)$ does not take a different amount of excess demand objects in his turn, then he does not affect the remainder of excess demand objects for agents who succeed him in the sequence $\alpha$. Consequently, all agents $\alpha(s)$, where $s>t$, receive the same amount for excess demand objects as they would have received without the interference of agent $\alpha(t)$. 

To prove in-betweenness, suppose for an arbitrary agent $i=\alpha(t)$, $c'_{\alpha(t)}$ is a lottery between $c_{\alpha(t)}$ and $SDC^{\alpha,\beta}_{\alpha(t)}(c)$.  Then as SDC mechanisms are same-sided, we have for all $a\in A$,
\begin{itemize}
    \item if $a\in UN(c)$, then $c'_{\alpha(t)a}=c_{\alpha(t)a}=SDC^{\alpha,\beta}_{\alpha(t)a}(c)$,
    \item if $a\in ED(c)$, then $SDC^{\alpha,\beta}_{\alpha(t)a}(c)\leq c'_{\alpha(t)a}\leq c_{\alpha(t)a}$, and
    \item if $a\in ES(c)$, then $SDC^{\alpha,\beta}_{\alpha(t)a}(c)\geq c'_{\alpha(t)a}\geq c_{\alpha(t)a}$.
\end{itemize}
   In phase 1 of SDC mechanisms, after changing from  $c_{\alpha(t)}$ to $c'_{\alpha(t)}$, agent $\alpha(t)$ takes

   \begin{center}
  $w'_{\alpha(t)a}=\min(r_a,c'_{\alpha(t)a})$.     
   \end{center}
If $a\in ED(c)\cup UN(c)$, then since $SDC^{\alpha,\beta}_{\alpha(t)a}(c)=\min(r_a,c_{\alpha(t)a})$, and  $SDC^{\alpha,\beta}_{\alpha(t)a}(c)\leq c'_{\alpha(t)a}\leq c_{\alpha(t)a}$, we have

\begin{equation*}\label{eqSDC1}
SDC^{\alpha,\beta}_{\alpha(t)a}((c'_{\alpha(t)},c_{{-\alpha(t)}}))=w'_{\alpha(t)a}=\min(r_a,c'_{\alpha(t)a})=\min(r_a,c_{\alpha(t)a})=SDC^{\alpha,\beta}_{\alpha(t)a}(c).    
\end{equation*}
Thus,
\begin{equation}\label{betSDC}
 for~ all~a\in ED(c)\cup UN(c),~ SDC^{\alpha,\beta}_{\alpha(t)a}((c'_{\alpha(t)},c_{{-\alpha(t)}}))=SDC^{\alpha,\beta}_{\alpha(t)a}(c).  
\end{equation}
By   non-bossiness, the allocation for excess demand objects does not change for all agents.

As the phase 2 of SDC mechanisms resembles the phase 2 of URC mechanisms, and also similar to URC mechanisms, the outcome of SDC mechanisms remains unchanged for excess demand and unanimous objects  (see \ref{betSDC}), therefore, the same proof  demonstrated for URC mechanisms, which establishes that for all $a\in ES(c)$, $
URC_{ia}((c'_i,c_{-i}))= URC_{ia}(c)$, (see \ref{eq:betq4erty}), works for proving ``for all $a\in ES(c)$, $SDC^{\alpha,\beta}_{\alpha(t)a}((c'_{\alpha(t)},c_{{-\alpha(t)}}))=SDC^{\alpha,\beta}_{\alpha(t)a}(c)$". Hence, for all $a\in A$, $
SDC_{ia}((c'_i,c_{-i}))= SDC_{ia}(c)$, and we are done.

\vspace{0.2cm}

\noindent The proof that SDC mechanisms are neither anonymous nor envy free is straightforward.

\noindent \textbf{Proof of Properties for PDC Mechanisms:}

It is easy to verify that the outcome of PDC mechanisms is same-sided and thus efficient by  Proposition~\ref{eff-same}. The proof of anonymity for PDC mechanisms is also straightforward. 

The proof of non-bossiness for PDC mechanisms is straightforward. Since excess demand objects are allocated proportionally, if an agent wants to not change his own allocation for excess demand objects, then he should not change his ideals for excess demand objects, and thus the allocations of excess demand objects for other agents remain unchanged.

  The PDC mechanisms are replacement monotonic. Let $c'_i$ be such that  \begin{itemize}
    \item $ED(c)=ED((c'_i,c_{-i}))$, and
    \item for all $a\in ED(c)$, $c'_{ia}\leq c_{ia}$.
\end{itemize}
 Since, for all $a\in ED(c)$, $c'_{ia}\leq c_{ia}$, and the allocation of excess demand objects is proportional, it follows that other agents  $j\in N\setminus\{i\}$, obtain an equal or larger proportion. Consequently, applying Lemma~\ref{lem:distance-sameside}, Equation~(\ref{distance-sameside}), we have  for all $j\in N\setminus\{i\}$, $\mathbf{d}(c_j,\mu_j(c))\geq \mathbf{d}(c_j,\mu_j((c'_i,c_{-i})))$.

\vspace{0.1cm}

PDC mechanisms satisfy in betweenness.   Let $\alpha$ be a sequence of agents and $\beta$ a sequence of objects, both arbitrary. Suppose that for some agent $i$,  a lottery $c'_i$ is between $c_i$ and $PDC^{\alpha,\beta}_i(c)$. As in phase 1 of PDC mechanisms, excess demand objects and unanimous objects are allocated proportionally, and for  every $a\in ED(c)\cup UN(c)$, $PDC^{\alpha,\beta}_{ia}(c)\leq c'_{ia}\leq c_{ia}$. Therefore, after accomplishing phase 1, we have
\begin{itemize}
    \item for all $a\in UN(c)$, $PDC^{\alpha,\beta}_{ia}((c'_i,c_{-i}))= PDC^{\alpha,\beta}_{ia}(c)$, 
    \item for all $a\in ED(c)$, $PDC^{\alpha,\beta}_{ia}((c'_i,c_{-i}))\leq PDC^{\alpha,\beta}_{ia}(c)$,  and 
    \item   since for all $a\in ES(c)$, $c_{ia}\leq c'_{ia}\leq \mu_{ia}(c)$, $w_{ia}=c'_{ia}\leq \mu_{ia}(c)$.
\end{itemize}
 Additionally, because for  every $a\in ED(c)$, $c'_{ia}\leq c_{ia}$ and $PDC^{\alpha,\beta}_{ia}(c)\leq c'_{ia}$, we have  for all other agents $j\in N\setminus \{i\}$, for  every $a\in ED(c)$,  $PDC^{\alpha,\beta}_{ja}((c'_i,c_{-i}))\geq PDC^{\alpha,\beta}_{ja}(c)$. 

 Let $v= (\sum_{a\in ED(c)}PDC_{ia}(c)-\sum_{a\in ED(c)}PDC_{ia}((c'_i,c_{-i})))$ be the amount that agent $i$ takes less from excess demand objects, when changing from $c_i$ to $c'_i$. Since PDC is proportional for excess demand objects and  for  every $a\in ED(c)$, $c'_{ia}\leq c_{ia}$ and $PDC^{\alpha,\beta}_{ia}(c)\leq c'_{ia}$, we have $v$ is also equal to the amount that other agents takes out more from excess demand objects,  when changing from $c_i$ to $c'_i$, that is, $v= \sum_{j\in N\setminus\{i\}}(\sum_{a\in ED(c)}PDC_{ja}((c'_i,c_{-i}))-\sum_{a\in ED(c)}PDC_{ja}(c))$.

We prove by contradiction that for every $a\in ES(c)$, $PDC_{ia}((c'_i,c_{-i}))\geq PDC_{ia}(c)$. Suppose not and there exists an object $b\in ES(c)$ such that $PDC_{ib}((c'_i,c_{-i}))< PDC_{ib}(c)$. 

Suppose that for the sequence of objects $\beta$, for some number $k$, $\beta(k)=b$. Since $b\in ES(c)$, already in phase 1, agent $i$ has taken $c'_{ib}\leq PDC_{ib}(c)$ amount of object $b$. In phase 2 of running $PDC$ for the profile $(c'_i,c_{-i})$, when it is the turn of agent $i$ to take some amount of object $b$, the inequality $PDC_{ib}((c'_i,c_{-i}))< PDC_{ib}(c)$ implies that agent $i$ could not take as much  of object $b$ as he would in profile $c$, so 
\begin{itemize}
    \item[case1:] We are either facing a shortage of object $b$, and the tank containing object $b$ has less liquid compared to the time of profile $c$. This scarcity results in agent $i$ being unable to take as much of object $b$ to reach $PDC_{ib}(c)$.
    \item[case2:] or the bucket of agent $i$ has less available capacity compared to the time of profile $c$, and it caused that agent $i$ cannot reach $PDC_{ib}(c)$.
\end{itemize}
As for the profile $(c'_i,c_{-i})$, all agents $j\in N\setminus\{i\}$ takes more from excess demand objects compared to the time of profile $c$, therefore, they have less free capacity for excess supply object $b$, and they take less from object $b$ compared to the time of profile $c$. So the case1 is impossible.

We argue that the case2 is also impossible. Suppose that the bucket of agent $i$ has less capacity compared to the time of profile $c$. 
For all objects $\beta(1),\beta(2),...\beta(k-1)$, agents in $N\setminus\{i\}$, can, at most, collectively take $v$ amount less than the objects $\beta(1),\beta(2),...\beta(k-1)$ compared to the time of profile $c$. So agent $i$ can take at most $v$ amount more than than the objects $\beta(1),\beta(2),...\beta(k-1)$ compared to the time of profile $c$, and since he has already takes $v$ amount less than excess demand objects, his bucket has enough capacity  to reach $PDC_{ib}(c)$ for object $b$.

Thus both case1 and case2 are impossible, contradiction. Therefore, for every $a\in ES(c)$, $PDC_{ia}((c'_i,c_{-i}))\geq PDC_{ia}(c)$. 
\vspace{0.2cm}
 
 The proof that PDC mechanisms is not strategy proof is straightforward. For envy freeness, consider $N=\{1,2,3,4\}$ and $A=\{a,b,d,e\}$. Let $c$ be preference profile where $ED(c)=\{a\}$, and $c_{1a}=c_{2a}=c_{3a}=1$, and $c_{4a}=1/3$. In phase 1 of PDC mechanisms, object $a$ is divided proportionally as follows:
$p_{1a}=p_{2a}=p_{3a}=0.3$ and $p_{4a}=0.1$. Since PDC mechanisms are same-sided, because of Lemma~\ref{distance-sameside}, we can compute the welfare based on excess demand objects.  Agent 4 envy agent 1, since $|c_{4a}-p_{1a}|<|c_{4a}-p_{4a}|$. 
\end{proof}
 
\end{document}